\documentclass[12pt]{article}
\usepackage{stmaryrd}
\usepackage{bbding}
\usepackage{mathrsfs}
\usepackage{amsfonts}
\usepackage{amssymb}
\usepackage{amsthm}
\usepackage{amsmath}
\usepackage{algorithm}
\usepackage[noend]{algpseudocode}

\usepackage{newtxtext}
\usepackage{newtxmath}
\usepackage{tabularx}
\usepackage{rotating}
\usepackage{adjustbox}
\usepackage{multirow}
\usepackage{booktabs}
\usepackage{makecell}
\usepackage{threeparttable}
\usepackage{lscape}

\usepackage{graphicx}
\usepackage{epsfig}
\usepackage{epstopdf}
\usepackage{subcaption}
\usepackage{caption}
\usepackage{color}
\usepackage{eurosym}
\usepackage{natbib}
\usepackage{hyperref}
\hypersetup{
    colorlinks=true,
    linkcolor=blue,
    urlcolor=blue,
    citecolor=blue,
}
\usepackage{geometry}
\usepackage{enumitem}
\usepackage{longtable}

\numberwithin{equation}{section}
\allowdisplaybreaks[4]

\newtheorem{lemma}{Lemma}[section]
\newtheorem{example}{Example}[section]
\newtheorem{definition}{Definition}
\newtheorem{assumption}{Assumption}[section]
\newtheorem{theorem}{Theorem}[section]
\newtheorem{remark}{Remark}[section]

\def\EE{{\mathbb{E}}}

\def\cov{{\mathbb{C}ov}}

\newcommand{\NN}{\mathbb{N}}

\def\##1\#{\begin{align}#1\end{align}}
\def\$#1\${\begin{align*}#1\end{align*}}

\begin{document}	
%\title{Estimating the Number of Factors via Adaptive Eigenvalue Thresholding with Missing Data}

\title{\Large Missingness-Adaptive Factor Identification in High-Dimensional Data} 
% Identifying the Latent Structure: Factor Discovery in High-Dimensional Data with Missing Observations}
%Missingness-Adaptive Thresholding Estimator for the\\ Number of Factors}%option 2

\author{Ping Zeng\thanks{School of Finance and Statistics, Hunan University, Changsha, China; email: \texttt{\color{blue} zengpingapple@gmail.com}}\and 
Yicheng Zeng\thanks{School of Science, Sun Yat-sen University, Shenzhen, China; email: \texttt{\color{blue} zengyicheng@mail.sysu.edu.cn}}\ \footnotemark[4]\and 
Lixing Zhu\thanks{Department of Statistics, Beijing Normal University at Zhuhai, Zhuhai, China; email: \texttt{\color{blue} lzhu@bnu.edu.cn}}\  
\thanks{Yicheng Zeng and Lixing Zhu are co-corresponding authors.}
}

\maketitle

\begin{abstract}

Determining the number of factors in high-dimensional factor models remains a fundamental challenge, particularly when data are incomplete. This paper introduces the concept of identifiable factors—those that can be reliably recovered despite missing observations—and proposes the Missingness-Adaptive Thresholding Estimator (MATE). To our knowledge, MATE is the first missingness-adaptive framework for factor number determination that accommodates both homogeneous and heterogeneous missingness without imposing restrictive assumptions on factor strength. Notably, it operates without data imputation, circumventing the computational burden associated with most existing approaches. We establish a rigorous theoretical foundation for MATE, proving its consistency under a range of structural conditions. Extensive simulations and real-world applications demonstrate that MATE consistently outperforms state-of-the-art methods, exhibiting superior robustness in settings with high missingness rates and weak factor signals.

\end{abstract}

{\small 
{\bf Keywords:} Adaptive thresholding; High-dimensional factor models; Missing data;     Random matrix theory; Rank estimation.
}

\newpage
\section{Introduction}\label{Introduction}

Over the past three decades, high-dimensional data—where the number of features $d$ is comparable to or exceeds the sample size $n$—have become ubiquitous across fields ranging from bioinformatics and image processing to finance and social networks. While high dimensionality offers rich information, it also introduces substantial statistical challenges, primarily due to noise accumulation that can degrade conventional estimators. Dimension reduction is therefore essential, with latent factor models serving as a cornerstone for tasks such as asset return analysis, macroeconomic forecasting, and policy evaluation \citep{stock1989new, hsiao2012panel}. A fundamental problem in factor modeling is determining the number of latent factors. Existing literature generally falls into two categories: information criteria-based methods \citep{bai2002determining, su2017time} and eigenvalue-ratio or hypothesis-testing approaches \citep{ahn2013eigenvalue, lam2012factor}. Most of these methods rely on a strong factor assumption, where “spiked” eigenvalues (driven by factors) are significantly larger than the noise floor. However, in many modern applications, factors may be “weak”, explaining only a small fraction of the total variance. In such cases, existing methods often fail to distinguish signal from noise, leading to systematic underestimation of the factor rank.

A third, more recent line of research leverages Random Matrix Theory (RMT) to address both strong and weak factor regimes by identifying specific eigenvalue thresholds \citep{onatski2010determining, zeng2022order, zeng2023order, fan2022estimating, ke2023estimation}. These methods typically assume the high-dimensional asymptotic regime where $d/n \to \gamma \in (0, \infty)$, a setting in which the sample covariance matrix may not consistently estimate its population counterpart \citep{johnstone2009consistency}. While RMT-based methods are robust to varying factor strengths, they almost universally assume complete data. In practice, missing data are pervasive and often unavoidable. In longitudinal surveys such as the China Family Panel Studies, missingness arises from participant attrition; in finance, it stems from mixed-frequency data or non-overlapping trading hours. In high dimensions, simple strategies like listwise deletion are catastrophic: for a $d \times n$ matrix with a mere $1\%$ missing rate, approximately $94\%$ of rows contain no missing values when $d=6$, yet the probability of a row being complete drops below $1\%$ as $d$ approaches $460$. Despite the ubiquity of this problem, methods for determining the factor rank under missingness remain underdeveloped. This paper bridges the gap between RMT-based factor estimation and incomplete data analysis.

Our contributions are fourfold. First, we introduce a conceptual framework by defining identifiable factors—latent components that can be theoretically recovered under specific missingness constraints. Second, on the methodological front, we propose the Missingness-Adaptive Thresholding Estimator (MATE), an imputation-free, eigenvalue-based procedure that automatically adapts to both homogeneous and heterogeneous missingness patterns and offers improved computational efficiency. Third, we establish theoretical rigor by proving the consistency of MATE in the $d/n \to \gamma$ regime, demonstrating its effectiveness across the full spectrum of factor strengths. Fourth, we provide empirical validation through extensive simulations and real-data applications, showing that MATE consistently outperforms state-of-the-art approaches—particularly in the challenging “double-threat” scenario characterized by high missing rates and weak factor signals.

\paragraph{Organization} The remainder of the paper is organized as follows. Section~\ref{model} introduces the model setting under the MCAR mechanism. Section~\ref{method} presents MATE and its theoretical properties. Section~\ref{Simulation} reports simulation studies, while Section~\ref{real_data} illustrates real-data applications. Section~\ref{Conclusion} concludes with discussions. Auxiliary lemmas and technical proofs are provided in the supplementary material.

\section{Model setting}\label{model}

\paragraph{Notations} 
For $n\in \NN^+$, let $[n]=\{1,\cdots,n\}$, so that $\{m+1,\cdots,n\}=[n]\backslash [m]$. Denote $a\wedge b = \min \{a,b\}$ and $a \vee b=\max\{a,b\}$. For sequences $\{a_n\}$ and $\{b_n\}$, write $a_n\sim b_n$ if $a_n/b_n\rightarrow 1$ in probability\footnote{Including the degenerate case of deterministic convergence.}. 
For any $n\times n$ matrix $A$, let $\lambda_i(A)$ denote its $i$-th largest eigenvalue and ${\rm Spec}(A)$ its spectrum, and define $\beta_k(A)=\frac{1}{n}{\rm tr}(A^k)$ for $k\in \mathbb{N}$. 
For any set $A$, let $\sharp A$ denote its cardinality. For any probability measure $\rho$, let ${\rm supp}(\rho)$ denote its support.  
In general, the superscript $^o$ denotes quantities under missingness; for example, $S^o_n$ is the sample covariance matrix with missing data. 

\subsection{High-dimensional factor model}
Consider a dataset $\{x_i\}_{1\le i\le n}\subset \mathbb{R}^d$ with $x_i$ generated from the factor model 
\begin{equation}\label{factor_model}
x_i = \Lambda f_i + \varepsilon_{i}\in\mathbb{R}^d,\ i\in [n],
\end{equation}
where  $x_i$ is a $d$-dimensional observed data point, $f_i$ is an $r$-dimensional latent common factor, $\Lambda\in \mathbb{R}^{d\times r}$ is the loading matrix, and $\varepsilon_{i}$ is a $d$-dimensional error independent of $\{f_i\}$. Let $F=(f_1,\cdots,f_n) \in \mathbb{R}^{r\times n}$ be the factor matrix, $X=(x_1,\cdots,x_n) \in \mathbb{R}^{d\times n}$ the data matrix and $\varepsilon=(\varepsilon_1,\cdots,\varepsilon_n) \in \mathbb{R}^{d\times n}$ the noise matrix. In matrix form, \eqref{factor_model} can be written as 
\begin{equation}\label{factor_model_matrix}
X = \Lambda F + \varepsilon.
\end{equation}
We first provide some regularity conditions to  specify the high-dimensional factor model.

\begin{assumption}\label{assumption_HD}
The ratio $\gamma_n:=d/n \to \gamma \in (0,\infty)$ as $n \to \infty $.
\end{assumption}

Hereafter, we use $\gamma$ to denote the aspect ratio and do not discuss the convergence rate of $\gamma_n$ to $\gamma$.

\begin{assumption}\label{assumption_factor}
The loading matrix $\Lambda$ has fixed rank $r$ with $r<d$. The entries $f_{ji}$'s of the factor matrix $F\in \mathbb{R}^{r\times n}$ are i.i.d. and $f_{ji}\sim \mathcal N (0,1)$ for $j\in [r]$ and $i\in [n]$.  
\end{assumption}

\begin{assumption}\label{assumption_varepsilon}
The noise vectors $\{\varepsilon_i\}_{1\le i\le n}$ are i.i.d. Write the covariance matrix $\Sigma_\varepsilon:=\cov(\varepsilon_i)\in \mathbb{R}^{d\times d}$. 
There exists a random matrix $e= (e_1,\cdots,e_n)=[e_{ji}]\in\mathbb{R}^{d\times n}$ such that $\varepsilon_i=\Sigma_\varepsilon^{1/2}e_i$ and $e_i\sim \mathcal N(0,I_d)$ for $i\in [n]$.  %\zcomment{do we really need Gaussian?}
\end{assumption}
    
With model \eqref{factor_model} and Assumptions~\ref{assumption_factor} and \ref{assumption_varepsilon}, the population covariance matrix $\Sigma\in\mathbb{R}^{d\times d}$ of  $x_i$ has the formula as 
\begin{equation}\label{data_covariance}
\Sigma=\Lambda\Lambda^\top+\Sigma_\varepsilon=U D U^\top, 
\end{equation}
where $U\in \mathbb{R}^{d\times d}$ is an orthogonal matrix ($U^\top U=I_d$), and $D={\rm diag}(\lambda_1,\cdots,\lambda_d)\in \mathbb{R}^{d\times d}$ with $\lambda_i$ being the $i$-th largest eigenvalue of $\Sigma$. 

\begin{definition}
For any {\it Hermitian} matrix $A\in \mathbb{R}^{d\times d}$, its \emph{empirical spectral distribution (ESD)} is $F_d^A(x)=d^{-1}\sum\nolimits_{i=1}^d\mathbf 1\{\lambda_i(A)\le x\}$ for all $x\in \mathbb{R}$, where $\lambda_i(A)$ denotes the $i$-th largest eigenvalue of $A$, and $\mathbf 1\{\cdot\}$ denotes the indicator function. 
\end{definition}
    
The following assumption on $\Sigma_\varepsilon$ is standard in the random matrix theory (RMT) literature.

\begin{assumption}\label{assumption_Sigma_x}
The ESD of $\Sigma_\varepsilon$ weakly converges to a nonrandom probability measure $\pi_\varepsilon(t)$ as $n \to \infty$, which is referred to as the limiting spectral distribution (LSD) of $\Sigma_\varepsilon$.
\end{assumption}

\subsection{Spiked structure}
Inspired by \cite{cai2020limiting} and with Assumption~\ref{assumption_varepsilon}, we can rewrite the factor model \eqref{factor_model_matrix} as 
\begin{equation}\label{x_decomposition}
X = \Lambda F +\Sigma_\varepsilon^{1/2} e
= \Gamma \widetilde Y 
\end{equation}
with $\Gamma := (\Lambda,\Sigma_\varepsilon^{1/2})\in\mathbb{R}^{d\times (r+d)}$ and $\widetilde Y := (F^\top, e^\top)^\top\in\mathbb{R}^{(r+d)\times n}$. 
We observe that $\Sigma=\Gamma \Gamma^\top$. Therefore, $\Gamma\in\mathbb{R}^{d\times (r+d)}$ admits the singular value decomposition  
\begin{equation}\label{gamma_decomposition}
\Gamma= U D^{1/2} V^\top,
\end{equation}
where $V\in \mathbb{R}^{(r+d)\times d}$ is an orthogonal matrix ($V^\top V=I_d$). 

Since columns $x_i$'s of $X$ have covariance matrix $\Sigma$, we can express $X$ with a separable structure as  
\begin{equation}\label{X_decomposition}
X=\Sigma^{1/2} (\Sigma^{-1/2}X)=: \Sigma^{1/2}  Y,
\end{equation}
where $Y =U V^\top \widetilde Y\in\mathbb{R}^{d\times n}$ by \eqref{x_decomposition} and \eqref{gamma_decomposition}. The matrix $Y$ has i.i.d. entries $y_{ij}{\sim }\mathcal N(0,1)$, which follows from the orthogonality of $UV^\top$ and the Gaussianity of $\widetilde Y$ assumed in Assumptions~\ref{assumption_factor} and \ref{assumption_varepsilon}. 

Consider the decomposition 
\$
\Sigma=\Lambda\Lambda^\top+\Sigma_\varepsilon=U_1 D_{\Lambda} U_1^\top + U_2 D_{\varepsilon} U_2^\top,
\$
where $D_{\Lambda}={\rm diag}(\rho_1,\cdots,\rho_r,0, \cdots,0)\in \mathbb{R}^{d\times d}$, $D_{\varepsilon}={\rm diag}(\sigma_1^2,\cdots,\sigma_d^2)\in\mathbb{R}^{d\times d}$, and $U_1\in \mathbb{R}^{d\times d}$ and $U_2\in \mathbb{R}^{d\times d}$ are orthogonal matrices. When $\Sigma_\varepsilon$ is isotropic, i.e., $\Sigma_{\varepsilon}=\sigma^2 I_d$ for some $\sigma^2>0$, by \eqref{data_covariance}, the spectrum of $\Sigma$ becomes  
\$
{\rm Spec}(\Sigma)=\{\lambda_1,\cdots,\lambda_{r},\sigma^2,\cdots,\sigma^2\},
\$ 
where $\lambda_i=\rho_i+\sigma^2$ for $1\le i\le r$. This coincides with the {\it spiked model} proposed by \cite{johnstone2001distribution}. When $\Sigma_\varepsilon$ is anisotropic, if $\Lambda \Lambda^\top$ and $\Sigma_\varepsilon$ are simultaneously diagonalizable in the sense that $U_1=U_2$, then $\lambda_i=\rho_i+\sigma_i^2$ for $1\le i\le r$ and $\lambda_i=\sigma_i^2$ for $r+1\le i \le d$. If the top eigenvalues $\{\lambda_i\}_{1\le i\le r}$ exceed some phase transition threshold determined by the LSD of $\Sigma_\varepsilon$, $\Sigma$ follows a {\it generalized spiked model} \citep{bai2012sample}.

\subsection{Identifiable factors}

Let $S_n \in \mathbb{R}^{d\times d}$ denote the sample covariance matrix for the data matrix $X$, defined by 
\$
S_n=\frac{1}{n}XX^{\top}=\frac{1}{n}\Sigma^{1/2}{Y}{Y}^{\top}\Sigma^{1/2},
\$
with eigenvalues $\hat{\lambda}_1 \ge \cdots \ge \hat{\lambda}_d$.  

Under the high-dimensional regime in Assumption \ref{assumption_HD}, \cite{marchenko1967distribution} established that the ESD of $S_n$ weakly converges to the {\it Marchenko-Pastur} (M-P) distribution $F_{\gamma,\sigma^2}$ when $\Sigma=\sigma^2 I_d$. This convergence remains valid when $\Sigma = \Lambda \Lambda^\top + \sigma^2 I_d$ with $\Lambda$ satisfying Assumption~\ref{assumption_factor}, since the finite-rank perturbation $\Lambda \Lambda^\top$ does not affect the LSD of $S_n$. Specifically, almost surely, $F_d^{S_n}(x) \to F_{\gamma,\sigma^2}(x),\ \forall x\in\mathbb{R}$. 
When $0< \gamma \le 1$, the limiting distribution $F_{\gamma,\sigma^2}$ has density 
\$
f_{\gamma,\sigma^2}(x)=\frac{1}{2\pi x \gamma \sigma^2} \sqrt{(\lambda_+ -x)(x-\lambda_-)}, \  \forall \lambda_-< x <  \lambda_+,
\$ 
where $\lambda_-=\sigma^2(1-\sqrt{\gamma})^2, \lambda_+=\sigma^2(1+\sqrt{\gamma})^2$. For $\gamma>1$, the distribution includes an additional point mass $1-\gamma^{-1}$ at zero, since the integral of $f_{\gamma,\sigma^2}$ over $(\lambda_-,\lambda_+)$ equals $\gamma^{-1}$ (e.g., see \cite{yao2015sample}).  

For the extreme eigenvalues of $S_n$, \cite{baik2006eigenvalues} show that, almost surely,  
    \begin{equation}\label{baik_2006}
    \hat \lambda_i 
    \rightarrow 
        \begin{cases}
        \sigma^2\psi(\lambda_i/\sigma^2), & i\in [r_0],\\
        \sigma^2(1+\sqrt{\gamma})^2, & i\in [C]\backslash [r_0],
        \end{cases}
    \end{equation}
for any fixed $r_0< C <d$, where $\psi(\alpha):=\alpha+\gamma \alpha(\alpha-1)^{-1}$ for $\alpha \neq 1$, and $r_0$ represents the number of {\bf identifiable spikes}, defined by
\begin{equation}\label{identifiable_spikes}
r_0 =\sharp\left\{i\in [d]: \lambda_i>\sigma^2(1+\sqrt{\gamma})\right\}.
\end{equation}
The {\bf identification threshold} $\sigma^2(1+\sqrt{\gamma})$ is also known as the {\it BBP phase transition threshold} \citep{baik2005phase}, which specifies the minimum signal strength required for a population spike to be detectable in the sample spectrum.  

\subsection{Missing data}
There are three standard types of missingness: \emph{missing completely at random} (MCAR), \emph{missing at random} (MAR), and \emph{missing not at random} (MNAR) \citep{little2019statistical,seaman2013meant}. 
Under MCAR, missingness is independent of the data values; under MAR, it depends only on observed components; under MNAR, it depends on unobserved components. 
MCAR is widely assumed for missingness in factor-model studies \citep{jin2021factor,stock2016dynamic,bai2015unbalanced}. We study this scenario throughout the paper. 

\subsubsection{Formulation of MCAR}

Let $\Omega = \{(i,j)\in[d] \times [n] : x_{ij} ~\text{is observed}\}$ denote the index set of observed entries of $X=[x_{ij}]\in \mathbb{R}^{d\times n}$.   
Define $B=[b_{ij}]\in \mathbb{R}^{d\times n}$ with $b_{ij}=\mathbf 1\{(i,j)\in \Omega\}$. 
The partially observed data matrix is 
\begin{equation}\label{function_X^o}
X^o:= X \circ B,
\end{equation}
where the notation `` $\circ$ " denotes the {\it Hadamard} product, so $x^o_{ij}=x_{ij}b_{ij}$ for $(i,j)\in[d] \times [n]$. Assume $b_{ij}\sim {\rm Bernoulli}(p_{ij})$, independently of $\{x_{ij}\}$, $\{f_{ik}\}$, $\{\lambda_{jk}\}$ and $\{\varepsilon_{ij}\}$. If $p_{ij}=p$ for some $p\in (0,1]$, the missingness is {\it homogeneous}; otherwise, {\it heterogeneous}.  

In datasets with heterogeneous missingness, two types are common: sample-specific and feature-specific. For example, in recommendation systems with consumers (samples) as columns and items (features) as rows, sample-specific missingness occurs when some consumers rate few or no items, while feature-specific missingness arises when certain items lack evaluations across multiple consumers. 

Heterogeneous missingness can be formalized by partitioning $B$ in two ways: 
\begin{align}\label{B_block}
B&=(B_{1 \cdot}^\top, \cdots, B_{K \cdot}^\top)^\top\     
{\rm or}\ 
B =( B_{\cdot 1}, \cdots, B_{\cdot L} ) ,
\end{align}
where $B_{k \cdot} \in \mathbb{R}^{d_k\times n}$ with $\sum_{k=1}^K d_k=d$ and $B_{\cdot \ell} \in \mathbb{R}^{d \times n_\ell}$ with $\sum_{\ell=1}^L n_\ell =n$. The blocks $\{B_{k \cdot}\}$ and $\{B_{\cdot \ell}\}$ correspond to feature-specific and sample-specific missingness, respectively. We assume that the blocks are independent, with   
\begin{equation}\label{b_distribution}
b_{ij}\stackrel{i.i.d.}{\sim} {\rm Bernoulli}(p_k),\ b_{ij}\in B_{k \cdot},\ k\in [K],\  {\rm or}\    
b_{ij}\stackrel{i.i.d.}{\sim} {\rm Bernoulli}(q_\ell),\ b_{ij}\in B_{\cdot \ell },\ \ell\in [L]. 
\end{equation}
A homogeneous setting corresponds to $K=1$ or $L=1$. 

\subsubsection{Existing methods for handling missing data}\label{missing_data_review}

A straightforward strategy for handling missing data is to delete rows or columns containing missing values, but this greatly reduces the effective sample size and harms factor estimation in high dimensions \citep{ludvigson2007empirical}. 
Imputation offers an alternative, yet sample mean or median filling often introduces bias. Recent factor-based imputation methods, such as TALL-WIDE \citep{bai2021matrix}, TALL-PROJECT \citep{cahan2023factor}, and weighted covariance imputation \citep{xiong2023large}, require a consistent estimator of the number of factors; otherwise, the factor model cannot be identified. 
Another line of work jointly estimates the factor model and imputes missing values, including {\it Kalman Filtering} \citep{giannone2008nowcasting,doz2011two} and EM-based approaches \citep{stock2002macroeconomic,banbura2014maximum,jin2021factor}. 
These two methods are computationally intensive and become infeasible in high dimensions, causing estimation accuracy reduction for weak factors.

\section{Estimating the number of factors}\label{method}

In this section, we consider multiple settings involving both the noise $\varepsilon$ and the missingness $B$. Under the factor model \eqref{X_decomposition} and the random missingness \eqref{function_X^o}, the observed data is  
\begin{equation}\label{X^o_decomposition}
X^o = X \circ B 
= (\Sigma^{1/2} Y)\circ B.
\end{equation}
Let $D_1=(\lambda_1, \cdots, \lambda_d)$. When $\Sigma_{\varepsilon}=\sigma^2 I_d$, \eqref{data_covariance} implies that 
\begin{equation}\label{Sigma_block}
\Sigma=U {\rm Diag}\left(D_1, \sigma^2 I_{d-r}\right)U^{\top}.
\end{equation}

\subsection{Isotropic noise}
We begin with the isotropic case $\Sigma_\varepsilon =\sigma^2 I_d$ and consider both homogeneous and heterogeneous missingness. In the homogeneous setting, $K=1$ and $L=1$ are equivalent, so we set $K=1$. For the heterogeneous setting, we consider both feature-specific and sample-specific missingness. 

\subsubsection{A warm-up case: homogeneous missingness}
Homogeneous missingness corresponds to $K=1$ in \eqref{B_block}. For a diagonal $\Sigma$, using \eqref{X^o_decomposition} and Lemma~\ref{lemma_inner_hadamard}, we write 
\begin{equation}\label{X^o_decomposition2}
X^o = \Sigma^{1/2} (Y\circ B)
= : \Sigma^{1/2}  Y^o. 
\end{equation}
When $K=1$, \eqref{b_distribution} gives $b_{ij}\stackrel{i.i.d.}{\sim} {\rm Bernoulli}(p)$ for some $p\in (0,1]$, so $\EE(b_{ij})=p$ and ${\rm var}(b_{ij})=p(1-p)$. Independence of $b_{ij}$ implies that the entries of $Y^o$ are i.i.d. with mean zero and variance $p$. The sample covariance matrix associated with $X^o$ is 
\begin{equation}\label{spectrum_equivalence}
S_n^o = \frac{1}{n}X^o X^{o\top}
= \frac{1}{n} \Sigma^{1/2} Y^o Y^{o\top} \Sigma^{1/2}.
\end{equation}

\begin{remark}
The spectra of sample covariance and inner product matrices under missing data have been studied in prior works. For example, \cite{jurczak2017spectral} study the eigenvalues of sample covariance matrices under missingness; \cite{nadakuditi2014optshrink} and \cite{dobriban2020optimal} treat missing data as a special case of the signal-plus-noise model; and \cite{yu2025theory} study a spectrum shift phenomenon under MCAR and propose an eigenvalue correction method for inner product matrices. 
\end{remark}

We define the number of identifiable spikes under missing data as $r_1$. From \eqref{X^o_decomposition2} and \eqref{spectrum_equivalence}, $p^{-1}S_n^o$ is the sample covariance of $p^{-1/2}\Sigma^{1/2}Y^o$ and thus has the same LSD and spike limits as $S_n$. 
In particular, homogeneous missingness does not alter the identifiability of spikes of $\Sigma$: 
\#\label{homogeneous_identifiable_spikes}
r_1=\sharp \{i\in [d]: p \lambda_i > p\sigma^2 (1+\sqrt \gamma)\}
\# 
Therefore, $r_1=r_0$, where $r_0$ is defined in \eqref{identifiable_spikes}, implying that $S_n^o$ and $S_n$ identify the same spikes. 

Using the scaling $p^{-1} S_n^o \mapsto S_n^o$ and results from \eqref{baik_2006} and \eqref{spectrum_equivalence}, the eigenvalues of $S_n^o$ satisfy  
\$
\hat \lambda_i^o \overset{\rm a.s.}{\longrightarrow} 
\begin{cases}
p\sigma^2\psi(\lambda_i/\sigma^2), & i\in [r_1],\\
p\sigma^2(1+\sqrt{\gamma})^2, & i\in [C]\backslash [r_1],
\end{cases}
\$ 
where $\hat \lambda_i^o$ is the $i$-th largest eigenvalue of $S_n^o $, $\psi(\cdot)$ is defined in \eqref{baik_2006}, and $C\ge r_1$ is fixed.  

We suggest a thresholding-based approach to determine the number of factors. Note that classical scree-plot methods \citep{cattell1966scree} and their high-dimensional counterparts \citep{fan2022estimating, ke2023estimation} identify the factor count by locating a ``gap" or significant drop in ordered empirical eigenvalues. These techniques are fundamentally designed for complete datasets. They typically fail to account for the structural distortions introduced by missing observations.  A standard alternative for handling incomplete blocks is to first reconstruct the data via imputation before applying traditional estimators. However, as detailed in Section~\ref{missing_data_review}, imputation often imposes a heavy computational burden and risks propagating reconstruction errors into the downstream eigenvalue analysis. 
To address these issues, we introduce a missingness-adaptive identification threshold. Unlike traditional fixed-cutoff methods, our approach explicitly incorporates the missingness structure into the threshold calculation.
Therefore, our method is imputation-free; it operates directly on the incomplete data to maintain both computational efficiency and estimation integrity. Motivated by these requirements, we propose the Missingness-Adaptive Thresholding Estimator (MATE) for estimating $r_1$:
\begin{equation}\label{MATE}
\hat{r}(v,\epsilon_n):=\sharp \left\{i\in [d]: \hat{\lambda}_i^o > v +\epsilon_n\right\},
\end{equation}
where $v$ is referred to as the {\bf missingness-adaptive identification threshold}, and $\epsilon_n=o(1)$ provides robustness against estimation error. 

The following theorem specifies how to choose $(v,\epsilon_n)$ and establishes the consistency of $\hat{r}(v,\epsilon_n)$. 

\begin{theorem}\label{consistency}
Consider the factor model in \eqref{factor_model_matrix}, \eqref{X^o_decomposition2} and Assumptions \ref{assumption_HD}-\ref{assumption_Sigma_x}. Consider $K=1$ and $r_1$ in \eqref{homogeneous_identifiable_spikes}. Then, for any $\epsilon_n=o(1)$ satisfying $\epsilon_n n^{2/3}\to \infty$, $\lim\limits_{n\to \infty}\mathbb{P} \{\hat{r}(p\sigma^2(1+\sqrt \gamma)^2,\epsilon_n)=r_1\} = 1$.  
\end{theorem}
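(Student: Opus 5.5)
The proof reduces, via the rescaling $p^{-1}S_n^o$ already noted after \eqref{spectrum_equivalence}, to the complete-data spiked model. Write $T_n := p^{-1}S_n^o = \frac1n \Sigma^{1/2}\widetilde Y\widetilde Y^\top \Sigma^{1/2}$ with $\widetilde Y := p^{-1/2}Y^o$. By the homogeneous-missingness discussion, $\widetilde Y$ has i.i.d. entries with mean zero, unit variance and all moments finite, since they are bounded Bernoulli multiples of Gaussians. So $T_n$ is a sample covariance matrix with population covariance $\Sigma = U\,{\rm Diag}(D_1,\sigma^2 I_{d-r})U^\top$, which is a finite-rank spiked model. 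The event $\{\hat r(p\sigma^2(1+\sqrt\gamma)^2,\epsilon_n)=r_1\}$ equals the event
\[
\lambda_{r_1}(T_n) > \sigma^2(1+\sqrt\gamma)^2 + \epsilon_n/p \quad\text{and}\quad \lambda_{r_1+1}(T_n)\le \sigma^2(1+\sqrt\gamma)^2+\epsilon_n/p .
\]
Write $\lambda_+ := \sigma^2(1+\sqrt\gamma)^2$ and $\epsilon_n' := \epsilon_n/p$. Since $p$ is fixed, $\epsilon_n'$ still satisfies $\epsilon_n'=o(1)$ and $\epsilon_n' n^{2/3}\to\infty$. I treat the two inequalities separately.

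\emph{Lower part.} For $i\le r_1$ we have $\lambda_i>\sigma^2(1+\sqrt\gamma)$. The spike limit (\eqref{baik_2006}, extended to non-Gaussian i.i.d. entries with finite fourth moment as in \cite{bai2012sample}) gives $\lambda_i(T_n)\to\sigma^2\psi(\lambda_i/\sigma^2)$ almost surely. Because $\psi(\alpha)>(1+\sqrt\gamma)^2$ strictly for $\alpha>1+\sqrt\gamma$, this limit exceeds $\lambda_+$ by a fixed positive margin $\delta$. As $\epsilon_n'\to0$, eventually $\lambda_{r_1}(T_n)>\lambda_++\epsilon_n'$ with probability tending to one. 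The strict inequality in \eqref{homogeneous_identifiable_spikes} is essential here. A spike sitting exactly at the BBP threshold would give no margin, and the definition excludes it.

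\emph{Upper part (main obstacle).} Almost-sure convergence $\lambda_{r_1+1}(T_n)\to\lambda_+$ is not enough, because $\lambda_{r_1+1}(T_n)$ might approach $\lambda_+$ from above more slowly than $\epsilon_n'$. I need the fluctuation rate
\[
\lambda_{r_1+1}(T_n)-\lambda_+ = O_P(n^{-2/3}),
\]
after which $\mathbb P\{\lambda_{r_1+1}(T_n)>\lambda_++\epsilon_n'\}\to0$ follows directly from $\epsilon_n' n^{2/3}\to\infty$. To obtain it, I would invoke rigidity and edge universality for spiked sample covariance matrices with subcritical and supercritical spikes (Bloemendal, Knowles, Yau and Yin; Ding and Yang). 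These results give the Tracy--Widom scale $n^{-2/3}$ for the first non-outlier eigenvalue, provided no spike lies in a shrinking neighbourhood of the critical value $\sigma^2(1+\sqrt\gamma)$. Since the $\lambda_i$ are fixed and none equals the critical value, this non-degeneracy holds.

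The required moment condition on $\widetilde Y$ holds: all moments are finite, with sub-Gaussian tails, since $|y_{ij}b_{ij}|\le|y_{ij}|$. If one prefers to avoid general universality results, an alternative is Weyl interlacing. The non-outlier eigenvalue $\lambda_{r_1+1}(T_n)$ is at most $\lambda_1$ of the sample covariance of the data projected onto the orthogonal complement of the top $r_1$ population directions. That matrix is a spiked model with all spikes subcritical, and its largest eigenvalue is $\lambda_+ + O_P(n^{-2/3})$ by the same edge results. Either route gives the upper part, and combining the two parts proves the theorem.
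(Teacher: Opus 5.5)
Your proposal is correct and follows the same two-part argument as the paper: the outliers $i\le r_1$ converge to $p\sigma^2\psi(\lambda_i/\sigma^2)$, which lies strictly above the threshold, while the first non-outlier satisfies $\hat\lambda^o_{r_1+1}-p\sigma^2(1+\sqrt{\gamma})^2=O_p(n^{-2/3})$; the paper simply asserts this rate, and you justify it by rescaling $p^{-1}S_n^o$ to a sample covariance with unit-variance i.i.d.\ entries and invoking edge rigidity/eigenvalue-sticking results. One small caveat, which the paper's proof also leaves unaddressed: the hypotheses do not exclude a non-identifiable spike with $\lambda_i=\sigma^2(1+\sqrt{\gamma})$ exactly, so your claim that the non-degeneracy condition holds automatically should instead be stated as a mild extra assumption or replaced by critical-regime $n^{-2/3}$ fluctuation results.
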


\begin{remark}\label{optimal_threshold}
$v=p \sigma^2(1+\sqrt \gamma)^2$ is the optimal threshold in the following sense: for any $\delta>0$ and $\epsilon_n=o(1)$, there exists some covariance matrix $\Sigma$ such that $\lim_{n\rightarrow \infty}\mathbb{P} \{\hat r(p\sigma^2(1+\sqrt \gamma)^2-\delta,\epsilon_n)=r_1\}<1$ and $\lim_{n\rightarrow \infty}\mathbb{P} \{\hat r(p\sigma^2(1+\sqrt \gamma)^2+\delta,\epsilon_n)=r_1\}<1$.  
\end{remark}

\subsubsection{Feature-specific heterogeneous missingness}\label{subsec_feature-specific_hete}

Consider feature-specific heterogeneous missingness:
\begin{equation*}
\mathbb{P} (b_{ij})=p_k,\ \forall i\in \mathcal I_k,\ k\in [K], 
\end{equation*}
where the index sets $\{\mathcal I_k\}_{1\le k\le K}$ form a partition of $\mathcal I =[d]$ according to the row blocks $\{B_{k \cdot} \}_{1\le k\le K}$, that is, $\mathcal I = \mathcal I_1\cup \cdots \cup \mathcal I_K$. The cardinality of each block is denoted by $d_k=|\mathcal I_k|$. Without loss of generality, we take $\mathcal I_k = \{ d_1+\cdots+d_{k-1}+1,\cdots,d_1+\cdots+d_{k}\}$, $1\le k\le K$. 
Define the block-diagonal matrix 
$
P_d = {\rm Diag}(p_1 I_{d_1},\cdots, p_K  I_{d_K})\in \mathbb{R}^{d\times d}. 
$ 
We impose the following assumption on $P_d$.
\begin{assumption}\label{assumption_P_d}
The ESD of $\Sigma P_d$ weakly converges to a nonrandom probability measure $\pi_{\Sigma,P}$ as $d \to \infty$.  
\end{assumption}

Given these notations and formula \eqref{X^o_decomposition2}, the observed data matrix $X^o$ can be expressed as 
\begin{equation}\label{feature_specific_data}
X^o=\Sigma^{1/2} Y^o = \Sigma^{1/2} P_d^{1/2}P_d^{-1/2}  Y^o,
\end{equation} 
and the associated sample covariance matrix is 
\$
S^o_n = \frac{1}{n} \Sigma^{1/2} P_d^{1/2}(P_d^{-1/2}  Y^o)( P_d^{-1/2} Y^o)^{\top}P_d^{1/2}\Sigma^{1/2}.  
\$ 
It can be verified that the entries of $P_d^{-1/2}  Y^o$ are i.i.d. with zero mean and unit variance. 

Under Assumptions \ref{assumption_HD} and \ref{assumption_P_d}, Theorem~2.14 of \cite{yao2015sample} implies that the ESD of ${S_n^o}$ weakly converges to a nonrandom probability measure $\rho_{\gamma}$, and its {\it Stieltjes transform} $s_\gamma(z)=\int(x-z)^{-1}\rho_\gamma({\rm d}x)$, $z\in \mathbb C^+$, solves the equation 
\$
s_\gamma = - \left(z- \gamma \int \frac{t}{1+t s} {\rm d} \pi_{\Sigma,P} (t) \right)^{-1}.
\$
Furthermore, \cite{bai2012sample} show that ${\rm supp}(\rho_\gamma)=\{\psi(\alpha):\alpha\neq 0, \alpha\notin {\rm supp}(\pi_{\Sigma,P}), \psi'(\alpha)>0\}$, where $\psi'(\alpha)$ denotes the derivative of 
$
\psi (\alpha) = \alpha + \gamma \int \alpha t(\alpha -t)^{-1} {\rm d}\pi_{\Sigma,P}(t).
$ 

In the following, we take $(K,\sigma^2)=(2,1)$ to illustrate the rightmost edge $\lambda_+$ of ${\rm supp}(\rho_\gamma)$. 

\begin{example}\label{example1}
Take $K=2$, $d_1=d_2=d/2$, $\gamma=\gamma_1=0.5$ and $\Sigma_\varepsilon=I_d$. In this setting, the LSD of $\Sigma P_d$ is  
$
\pi_{\Sigma,P}=1/2\left(\delta_{\{p_1\}}+\delta_{\{p_2\}}\right),
$ 
and we have $\psi(\alpha) =\alpha+ \alpha/4[p_1(\alpha-p_1)^{-1}+p_2(\alpha-p_2)^{-1}]$, $
\psi^{'}(\alpha) =1-1/4[ p_1^2(\alpha-p_1)^{-2} + p_2^2(\alpha-p_2)^{-2}]$. Solve $\psi^{'}(\alpha)=0$ and then take the maximum value of $\psi (\alpha)$ among the solutions, which gives the rightmost edge $\lambda_+^{(1)}$ of ${\rm supp}(\rho_{\gamma_1})$. In the left panel of Figure~\ref{example1_picture}, $\lambda_+^{(1)}$ is plotted as a function of $(p_1, p_2)\in [0.1,1]^2$, showing a clearly nonplanar surface. The purple contour lines represent the level sets of $\lambda_+^{(1)}$, and the orange line corresponds to the homogeneous case $p_1=p_2$, along which $\lambda_+^{(1)}$ increases linearly with $p_1$ and $p_2$. Notably, when $(p_1,p_2)=(0.8,0.4)$, the overall nonmissing rate\footnote{non-missing rate refers to the rate of observed values.} is $p=(p_1+p_2)/2=0.6$, yet $\lambda_+^{(1)}$ differs from its value at $(p_1,p_2)=(0.6,0.6)$. The right panel presents the same relationship in two dimensions, with white contour lines marking the level sets. These results illustrate how heterogeneous missingness alters $\lambda_+^{(1)}$, distinguishing it from the homogeneous case.
\vspace{-0.5em}   
\begin{figure}[h]
\centering
\caption{The left panel shows the rightmost edge $\lambda_+^{(1)}$ as a function of $(p_1,p_2)$ in $\mathbb{R}^3$, whereas the right panel depicts the same relationship between $\lambda_+^{(1)}$ and $(p_1, p_2)$ in 2D.}
\label{example1_picture}
\begin{minipage}{0.5\textwidth}
\centering
\includegraphics[width=1.15\textwidth]{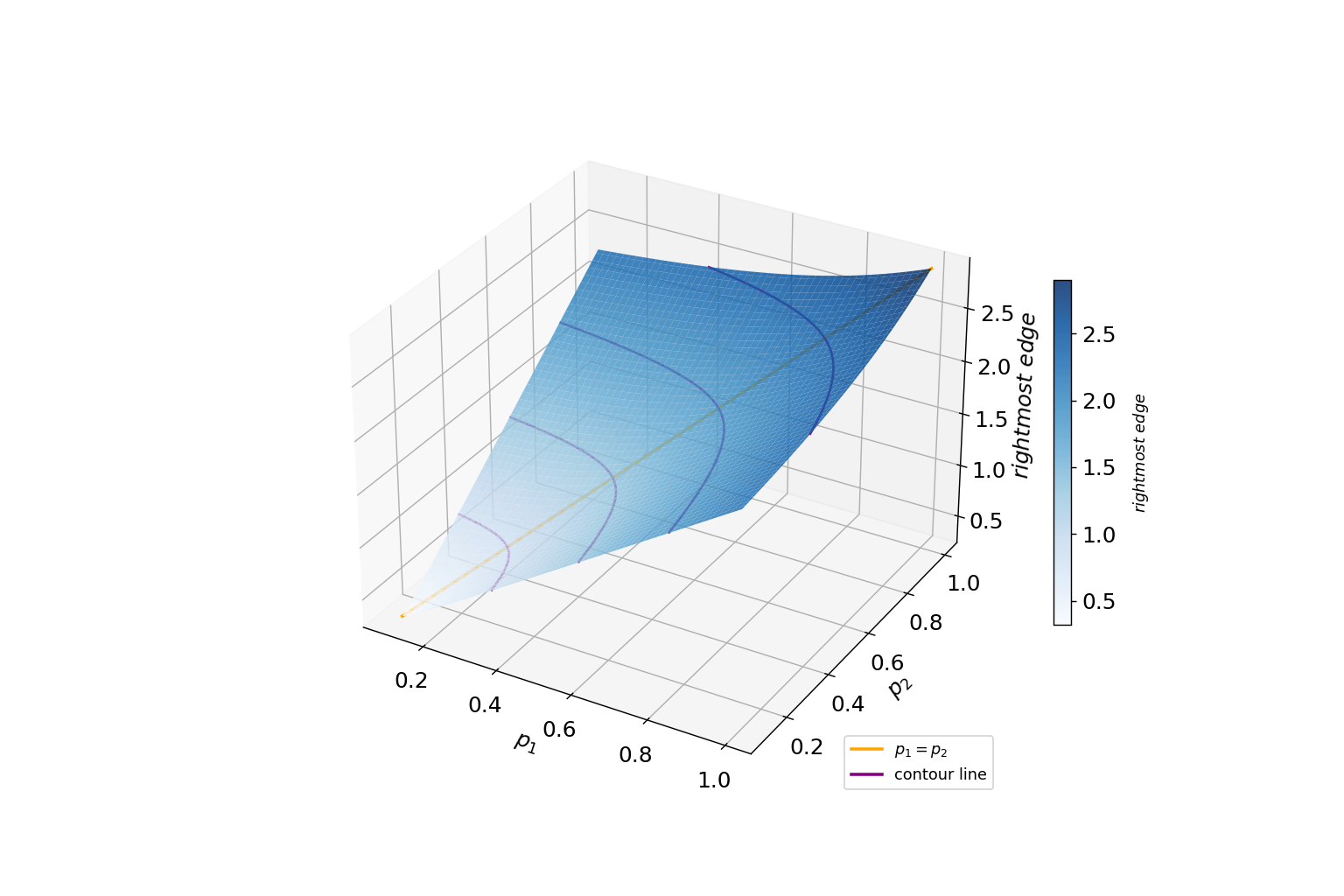}
\end{minipage}
\begin{minipage}{0.48\textwidth}
\centering
\includegraphics[width=\textwidth]{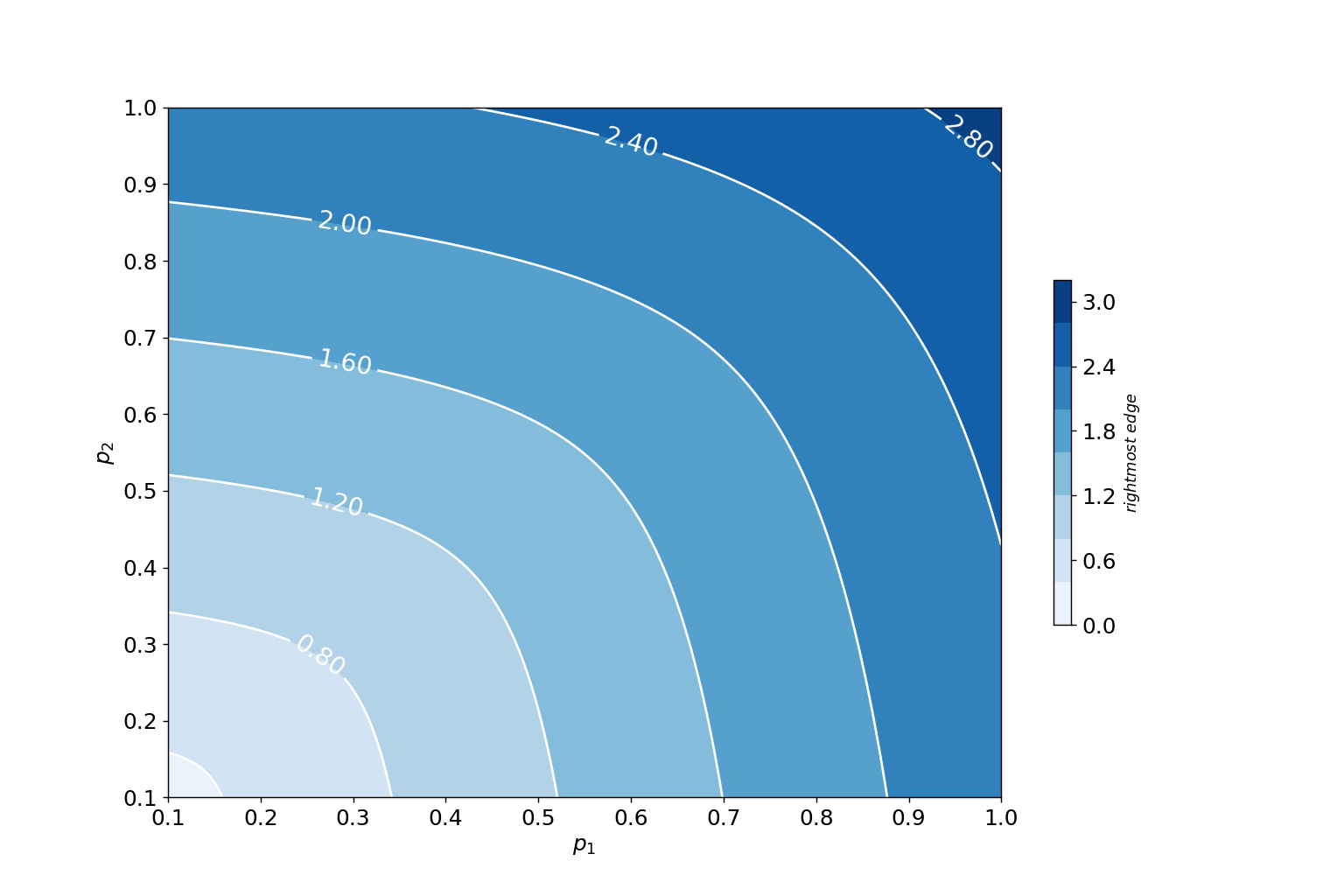}
\end{minipage}
\vspace{-1em}
\end{figure}
\end{example}

We establish the identification condition at the population level:
\#\label{heterogeneous_identifiable_spikes1}
r_1 =\sharp\left\{i\in [d]: \tilde \lambda_i > \alpha_+ \right\}.
\#
where $\tilde \lambda_i$ is the $i$-th largest eigenvalue of $\Sigma P_d$, and $\alpha_+$ is the maximal solution to $\psi'(\alpha)=0$. In general, it is difficult to explicitly express $\tilde \lambda_i$ in terms of $\lambda_i$. At the sample level, we have
\#\label{rightmost_edge_K>1}
\hat \lambda_i^o\overset{P}{\rightarrow}
\begin{cases}
\psi(\tilde \lambda_i),\quad & i\in [r_1],\\
\psi(\alpha_+), \quad & [C]\backslash [r_1],
\end{cases}
\#
We then consider the MATE in \eqref{MATE}. The following theorem specifies the choice of $(v,\epsilon_n)$ under heterogeneous missingness and establishes the consistency of $\hat{r}(v,\epsilon_n)$. 

\begin{theorem}\label{consistency2}
Consider the factor model in \eqref{factor_model_matrix}, \eqref{feature_specific_data} and  Assumptions \ref{assumption_HD}-\ref{assumption_Sigma_x}. Consider $K\ge 2$ and $r_1$ in \eqref{heterogeneous_identifiable_spikes1}. For any $\epsilon_n=o(1)$ satisfying $\epsilon_n n^{2/3}\to \infty$, $\lim\limits_{n\to \infty}\mathbb{P} \{\hat{r}(\psi(\alpha_+),\epsilon_n)=r_1\} = 1$. 
\end{theorem}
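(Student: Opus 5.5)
The plan is to split the event $\{\hat r(\psi(\alpha_+),\epsilon_n)=r_1\}$ into two parts: the top $r_1$ eigenvalues lie above the threshold, and the $(r_1+1)$-th eigenvalue lies below it. Write $v=\psi(\alpha_+)$. The argument uses the representation \eqref{feature_specific_data}, namely $S_n^o=n^{-1}T^{1/2}ZZ^\top T^{1/2}$ with $T=P_d^{1/2}\Sigma P_d^{1/2}$ and $Z=P_d^{-1/2}Y^o$. Here $T$ has the same spectrum as $\Sigma P_d$, and $Z$ has independent standardized entries. Since each $y_{ij}b_{ij}$ is Gaussian times Bernoulli, the entries of $Z$ have all moments finite, and in fact are sub-Gaussian uniformly in $n$. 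This places $S_n^o$ in the generalized spiked sample covariance framework of \cite{bai2012sample}, with population spectral distribution $\pi_{\Sigma,P}$. This is what justifies \eqref{rightmost_edge_K>1}, which I take as given.

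\emph{Step 1 (spikes).} For $i\in[r_1]$, we have $\tilde\lambda_i>\alpha_+$, so $\psi'(\tilde\lambda_i)>0$. Because $\psi$ is strictly increasing on $(\alpha_+,\infty)$, it follows that $\psi(\tilde\lambda_i)>\psi(\alpha_+)=v$. By \eqref{rightmost_edge_K>1}, $\hat\lambda^o_i\to\psi(\tilde\lambda_i)$ in probability. Since $\epsilon_n\to0$, the gap $\psi(\tilde\lambda_{r_1})-v$ is a fixed positive constant, so $\mathbb P(\hat\lambda^o_{r_1}>v+\epsilon_n)\to1$. One caveat: I assume $\tilde\lambda_{r_1}$ stays bounded away from $\alpha_+$, i.e. the spikes are not critical. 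If a spike converges to $\alpha_+$ at a slow rate, this step needs the fluctuation scale of near-critical outliers, and I would state a separation assumption instead.

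\emph{Step 2 (bulk edge).} We need $\mathbb P(\hat\lambda^o_{r_1+1}\le v+\epsilon_n)\to1$. Convergence in probability alone only gives $\hat\lambda^o_{r_1+1}\le v+o_P(1)$. A rate is required, because $\epsilon_n$ may be as small as anything with $\epsilon_n n^{2/3}\to\infty$. The condition on $\epsilon_n$ points to the Tracy--Widom scale: I aim to show $\hat\lambda^o_{r_1+1}-\lambda_{+,n}=O_P(n^{-2/3})$, where $\lambda_{+,n}$ is the finite-$n$ rightmost edge computed from the ESD of $\Sigma P_d$ and $\gamma_n$.

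To obtain this, first remove the $r_1$ supercritical spikes using eigenvalue interlacing. Since $T$ is a rank-$r$ perturbation of $P_d^{1/2}\Sigma_\varepsilon P_d^{1/2}=\sigma^2P_d$, which has finitely many distinct atoms $\sigma^2 p_k$, the Weyl/Cauchy interlacing for finite-rank perturbations bounds $\hat\lambda^o_{r_1+1}$ by the first eigenvalue of a model whose population covariance has only subcritical spikes. Here I take the subcritical spikes to be strictly subcritical. For that model, invoke the edge universality and rigidity results for separable sample covariance matrices with general population (El Karoui; Bao--Pan--Zhou; Knowles--Yin's anisotropic local law with outlier/sticking results of Bloemendal et al.). These give $\lambda_1=\lambda_{+,n}+O_P(n^{-2/3})$ under the regularity condition that $\alpha_+$ is a regular edge, $1+\gamma\int t\,s\,(1+ts)^{-1}\ldots\neq$ degenerate. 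This condition holds automatically here, because the support of $\pi_{\Sigma,P}$ consists of finitely many atoms, so the edge is a square-root edge.

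\emph{Main obstacle.} Step 2 is the main obstacle, and it has two parts: the non-Gaussian entries of $Z$ (Bernoulli-Gaussian mixtures, which is why universality rather than exact Gaussian theory is needed) and the discrepancy $\lambda_{+,n}-v$. The second is handled as follows. We have $\lambda_{+,n}\to v$, since the finite-$n$ ESD converges to $\pi_{\Sigma,P}$. Following the paper's convention of ignoring the convergence rate of $\gamma_n$ and of the ESD, I absorb $\lambda_{+,n}-v=o(\epsilon_n)$ into the assumption. Then $\hat\lambda^o_{r_1+1}\le v+O_P(n^{-2/3})+o(\epsilon_n)<v+\epsilon_n$ with probability tending to one, because $\epsilon_n n^{2/3}\to\infty$.

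Combining Steps 1 and 2 gives $\hat r(\psi(\alpha_+),\epsilon_n)=r_1$ with probability tending to $1$.
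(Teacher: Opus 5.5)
Your proposal is correct and follows the same route as the paper, whose proof of this theorem defers to that of Theorem~\ref{consistency}. That proof uses the same two parts: the identifiable spikes converge to limits $\psi(\tilde\lambda_i)>\psi(\alpha_+)$, and the $(r_1+1)$-th eigenvalue lies within $O_p(n^{-2/3})$ of the edge, so $\epsilon_n n^{2/3}\to\infty$ separates the two. Your interlacing-plus-edge-universality justification of the $n^{-2/3}$ rate, and your caveats (spikes bounded away from $\alpha_+$, and $\lambda_{+,n}-\psi(\alpha_+)=o(\epsilon_n)$, which the paper covers by ignoring the rate of $\gamma_n\to\gamma$), make explicit what the paper asserts without proof.
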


$\psi(\alpha_+)$ is the optimal threshold, as characterized in Remark~\ref{optimal_threshold}. The following example shows that heterogeneous missingness may lead to $r_1 < r_0$.

\begin{example}\label{example3}
Take $\gamma=0.25$, and $\Sigma={\rm diag}(5,4,2.4,1,\cdots, 1)$ with $r=3$. Let $K=2$ and $P_d={\rm Diag}(p_1 I_{{d}/{2}},p_2 I_{{d}/{2}})$. When $(p_1,p_2)=(1,1)$, we have $r_0=3$ with identification threshold $1+\sqrt{\gamma}=1.50$. When $(p_1,p_2)=(0.4,0.9)$, the adjusted top eigenvalues are $(\tilde \lambda_1,\tilde \lambda_2,\tilde \lambda_3)=(2,1.6,0.96)$, yielding $r_1=2$ with threshold $\alpha_+=1.223$. 
We then use MATE in \eqref{MATE} to estimate the number of factors in simulations. The results, summarized in Table \ref{example3_figure}, confirm that heterogeneous missingness can reduce the number of identifiable factors. 
\begin{table}[h]
\setlength\tabcolsep{3pt} 
\centering
\footnotesize
\caption{Means (with MSEs) of MATE $\hat r(v,\epsilon_n)$ over $500$ simulations, using $(v,\epsilon_n)=( (1+\sqrt \gamma)^2,0.01)$ for $(p_1,p_2)=(1,1)$ and $(v,\epsilon_n)=(\psi(\alpha_+),0.01)$ for $(p_1,p_2)=(0.4,0.9)$.}\label{example3_figure}
\begin{tabular}{cccc}
\hline
\makebox[0.1\textwidth][c]{$(d,n,p_1,p_2)$}   & \makebox[0.1\textwidth][c]{Mean} & \makebox[0.25\textwidth][c]{Underestimation rate ($\hat r<r$)} & \makebox[0.25\textwidth][c]{Overestimation rate  ($\hat r>r$)} \\
\hline  
$(250,1000,1,1)$     & $2.97_{(0.03)}$   & $3\%$ & $0\%$    \\
\hline 
$(250,1000,0.4,0.9)$ & $2.01_{(0.99)} $  & $99\%$  & $0\% $  \\
\hline
\end{tabular}
\end{table}
\end{example}

\vspace{-2em}

\subsubsection{Sample-specific heterogeneous missingness}\label{subsec_Sample-specific_hete}

Similar to feature-specific heterogeneous missingness in Section \ref{subsec_feature-specific_hete}, we consider a sample-specific heterogeneous missingness mechanism:
    \begin{equation*}
    \mathbb{P} (b_{ij})=q_{\ell},\ \forall j\in \mathcal J_\ell, \ \ell\in [L],
    \end{equation*}
where $\{\mathcal J_\ell \}_{1\le \ell\le L}$ partition $\mathcal J=[n]$ according to $\{ B_{\cdot \ell}\}_{1 \le \ell \le L}$, with $\mathcal J_\ell = \{ n_1+\cdots+n_{\ell-1}+1,\cdots,n_1+\cdots+n_{\ell}\}$. 
Define 
$ 
Q_n = {\rm Diag}(q_1 I_{n_1},\cdots, q_L  I_{n_L})\in \mathbb{R}^{n\times n}.
$  
We impose the following assumption on $Q_n$. 

\begin{assumption}\label{assumption_Q_n}
The ESD of $ Q_n$ weakly converges to a nonrandom probability measure $\pi_Q$ as $n \to \infty$, and the largest eigenvalue of $Q_n$ converges to the rightmost edge of ${\rm supp}\{\pi_Q\}$. 
\end{assumption}

With these notations, the observed data matrix can be written as
\#\label{sample_specific_data}
X^o=\Sigma^{1/2} Y^o = \Sigma^{1/2} Y^o Q_n^{-1/2}Q_n^{1/2}, 
\#
where the entries of $Y^o Q_n^{-1/2}$ are i.i.d. with zero mean and unit variance. 
Then we rewrite $S_n^o $ as
\$
S_n^o = \Sigma^{1/2}(Y^o Q_n^{-1/2}) Q_n (Y^o Q_n^{-1/2})^\top \Sigma^{1/2}.
\$

Let $\pi_\Sigma$ denote the LSD of $\Sigma$. 
In the isotropic noise case, $\pi_\Sigma = \pi_\varepsilon=\delta_{ \{\sigma^2\}}$. Define $(s_1,s_2)$ as the unique solution to the self-consistent equations: 
\#\label{Stieltjes_transforms}
s_{1}(z)& = \gamma \int \frac{x}{-z\{ 1+x s_{2}(z)\} }\pi_\Sigma ({\rm d}x),\ 
s_{2}(z) =\int \frac{x}{-z\{1+x s_{1} (z)\} } \pi_Q ({\rm d}x).
\#
According to \cite{ding2021separable}, 
\$
s_\gamma (z)=\int \frac{x}{-z\{ 1+x s_{2}(z)\} }\pi_\Sigma ({\rm d}x), 
\$
and $s_\gamma(z)$ determines $\rho_\gamma$ via the inverse formula: 
$
\rho_\gamma (E) =\lim_{\eta\rightarrow0+ }\pi^{-1} {\rm Im}\ s_\gamma (E+i \eta),\ \forall z=E+i\eta \in \mathbb C^+.
$

Define 
\#\label{heter_f}
f(z,s)=-s +\int \frac{x}{-z+x \gamma \int \frac{t}{1+t s}\pi_\Sigma({\rm d}t) }\pi_Q ({\rm d}x).
\#
Then $s_2(z)$ is the unique solution to the equation $f(z,s)=0$ satisfying ${\rm Im}\ s>0$ for $z\in \mathbb C^+$. 

The following result characterizes the support of $\rho_\gamma$. 

\begin{lemma}[Lemma~2.5 in \cite{ding2021separable}]
The support of $\rho_\gamma$ is a union of intervals: 
$
{\rm supp}\ \rho_\gamma \cap (0,\infty) = \bigcup_{i =1}^N [a_{2 i},a_{2 i -1}]\cap (0,\infty),
$ 
where $N\in \NN $ depends on $\pi_\Sigma$ and $\pi_Q$. Moreover, $(x,s)=(a_i,s_2(a_i))$, $1\le i\le 2N$, are real solutions to the equations: $
f(x,s)=0$ and $\frac{\partial f}{\partial s} (x,s)=0$. 
\end{lemma}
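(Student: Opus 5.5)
This lemma is quoted from \cite{ding2021separable}, so the plan is to reconstruct its argument in the present notation rather than prove anything new.

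\textbf{Step 1: reduce to a real fixed-point equation.} Eliminate $s_1$ from \eqref{Stieltjes_transforms}, which leaves the single equation $f(z,s_2(z))=0$ with $f$ as in \eqref{heter_f}. The first structural fact is that $\rho_\gamma$ has a continuous density on $(0,\infty)$, given by $\pi^{-1}\,{\rm Im}\, s_\gamma(E+i0^+)$. To see this, I would show that $s_2(z)$ extends continuously from $\mathbb C^+$ to the real axis away from $0$. This uses boundedness of $s_2$ near $E>0$, obtained from the self-consistent equations together with compact support of $\pi_\Sigma$ and $\pi_Q$, and uniqueness of the solution with ${\rm Im}\, s>0$.

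\textbf{Step 2: locate the edges.} Outside the support, $s_2(E)$ is real and satisfies $f(E,s_2(E))=0$. By the implicit function theorem, $s_2$ is real-analytic in $E$ wherever $\partial_s f\neq 0$.

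Conversely, suppose $\partial_s f(E,s)\neq 0$ at a real solution with $E$ in the interior of the support. The implicit function theorem would then give a real solution branch that continues analytically into $\mathbb C^+$. By uniqueness, $s_2$ would coincide with that branch and so be real on an interval, forcing the density to vanish there, a contradiction.

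It follows that the density is positive exactly where the real-$s$ solution ceases to exist. Transitions between the real and complex regimes therefore happen only at points where two real roots of $f(E,\cdot)=0$ merge. These are exactly the solutions of $f=0$ and $\partial_s f=0$, which is the characterization of the $a_i$.

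\textbf{Step 3: finiteness of the components.} To show $N<\infty$, write $f(E,s)=0$ as $E=\Phi(s)$ for real $s$, using the inverse-function form analogous to $\psi$ in \cite{bai2012sample}. The support complement then corresponds to the intervals where $\Phi$ is increasing, and the edges are the critical points of $\Phi$.

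Since $\pi_\Sigma=\delta_{\sigma^2}$ and $\pi_Q$ is a finite mixture of point masses in our setting, $\Phi$ is a rational function. Its critical points are roots of a polynomial, so there are finitely many, which gives $N<\infty$ depending only on $\pi_\Sigma$ and $\pi_Q$. For general compactly supported measures one would instead rely on the argument in \cite{ding2021separable}.

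\textbf{Main obstacle.} The hard step is Step 2, specifically the regularity of $s_2$ up to the real axis and the exclusion of degenerate behavior, such as tangential roots without an edge. I would handle this with a stability argument: show that $|\partial_s f|$ is bounded below whenever ${\rm Im}\, s_2$ is bounded below, mirroring the standard Silverstein–Choi analysis adapted to the two-sided (separable) setting.
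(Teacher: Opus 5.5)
\textbf{Verdict.} The paper gives no proof here; the lemma is imported from \cite{ding2021separable}, where the two measures are finite-$n$ ESDs with finitely many atoms. Your reconstruction therefore has to stand on its own, and two of its steps do not, although both can be repaired.

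\textbf{Step 2: the ``conversely'' argument misuses uniqueness.} Let $s^\ast$ be a simple real root of $f(E,\cdot)$. The IFT branch satisfies $\tilde s(E+i\eta)=s^\ast+i\eta\,\tilde s'(E)+O(\eta^2)$ with $\tilde s'(E)=-\partial_z f/\partial_s f\in\mathbb R$. Uniqueness among solutions with ${\rm Im}\,s>0$ identifies $\tilde s$ with $s_2$ only when $\tilde s'(E)>0$; a branch with $\tilde s'(E)<0$ is simply another root.

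Such roots do occur inside the bulk. Take $\pi_\Sigma=\delta_{\sigma^2}$ and $\pi_Q$ with two distinct atoms $q_1\neq q_2$. Clearing the denominators $-E(1+\sigma^2 s)+q_\ell\gamma\sigma^2$ turns $f(E,s)=0$ into a real cubic in $s$, with leading coefficient $-E^2\sigma^4$ and no spurious roots. At every bulk point it therefore has the conjugate pair $s_2(E+i0),\ \overline{s_2(E+i0)}$ plus one simple real root. So your claim that ``the density is positive exactly where the real-$s$ solution ceases to exist'' is false. A correct description of the support needs a monotonicity selection: outside the support, $s_2$ is real and increasing in $E$, because it is a Nevanlinna function.

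\textbf{Repair of Step 2.} The lemma's second assertion does not need that dichotomy; it follows from local rather than global uniqueness. At an edge $a$, $s_2(a)$ is real by your Step 1 continuity. Suppose $\partial_s f(a,s_2(a))\neq0$. Then the holomorphic IFT makes $\{f=0\}$ near $(a,s_2(a))$ the graph of a single $\tilde s$, which is real on $\mathbb R$. Continuity of $s_2$ at $a$ forces $s_2=\tilde s$ near $a$ in $\overline{\mathbb C^+}$. Hence the density vanishes on both sides of $a$, contradicting $a\in{\rm supp}\,\rho_\gamma$.

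\textbf{Step 3: there is no single-valued $\Phi$ in the variable $s=s_2$.} The $E$-dependence of $f$ sits in $L$ denominators, so for fixed real $s$ the equation $f(E,s)=0$ has degree $L$ in $E$. For $L=2$ and real $s\notin\{0,-\sigma^{-2}\}$ it has two real roots $E$. So $E=\Phi(s)$ fails precisely in the heterogeneous case $L\ge 2$ that the lemma serves.

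\textbf{Repair of Step 3.} The Silverstein--Choi inverse lives in the other variable. With $\pi_\Sigma=\delta_{\sigma^2}$, the first equation of \eqref{Stieltjes_transforms} gives $s_2=-\gamma/(zs_1)-\sigma^{-2}$, and direct substitution yields
\[
f\Bigl(E,\,-\frac{\gamma}{Es_1}-\frac{1}{\sigma^{2}}\Bigr)=\frac{E-\Phi(s_1)}{E\sigma^{2}},\qquad
\Phi(s_1):=\sigma^{2}\Bigl\{-\frac{\gamma}{s_1}+\int\frac{x}{1+xs_1}\,\pi_Q({\rm d}x)\Bigr\}.
\]
This is the Silverstein--Choi equation for a sample covariance matrix with $d$ and $n$ swapped and population $Q_n$. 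It gives:
\begin{itemize}
\item ${\rm supp}(\rho_\gamma)^c\setminus\{0\}=\{\Phi(s_1):s_1\in\mathbb R\setminus\{0\},\ -s_1^{-1}\notin{\rm supp}\,\pi_Q,\ \Phi'(s_1)>0\}$;
\item the edges are critical values of the rational function $\Phi$, so $N<\infty$;
\item for fixed $E$, $s_1\mapsto s_2$ is a real M\"obius map with nonzero derivative, so $\Phi'(s_1)=0$ if and only if $\partial_s f(E,s_2)=0$.
\end{itemize}
This route also bypasses your Step 1--2 regularity analysis, since Silverstein--Choi already supply continuity of the density and the support characterization.

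\textbf{Fallback for general measures.} Your appeal to \cite{ding2021separable} for general compactly supported measures is not available. Their lemma is stated for finitely atomic measures, and for a general compactly supported $\pi_Q$ the number of support components need not be finite.
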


The rightmost edge $\lambda_+$ of the support of $\rho_\gamma$ is $a_1$, and the phase transition condition for the emergence of outliers among the top eigenvalues $\{\lambda_i\}_{1\le i\le r}$ of $\Sigma$ is $\lambda_i > s_{2}^{-1}(a_1)$. That is,
\#\label{heterogeneous_identifiable_spikes2}
r_1=\sharp \{i\in [d]:\lambda_i>s_{2}^{-1}(a_1)\},
\#
where $s_2$ is the Stieltjes transform obtained from the system \eqref{Stieltjes_transforms} and extended to $\mathbb R\backslash {\rm supp}(\rho_\gamma)$. More precisely, \citep[Theorem~3.6]{ding2021separable} shows that 
\$
\hat \lambda_i^o\overset{P}{\rightarrow}
\begin{cases}
s_2^{-1}(-\lambda_i^{-1}),\quad & i\in [r_1],\\
a_1, \quad & [C]\backslash [r_1],
\end{cases}
\$
where $s^{-1}_2$ is the inverse function of $s_2$, and $r_1<C<d$ is any fixed integer. 

In the following, we take $(L,\sigma^2)=(2,1)$ to illustrate the rightmost edge of ${\rm supp}(\rho_\gamma)$. 

\begin{example}\label{example2}
Take $L=2$, $n_1=n_2=n/2$, $\gamma=\gamma_2=2$, and $\Sigma_\varepsilon=I_d$. By \eqref{heter_f}, we have 
\$
f(x, s)&=-s+\frac{1}{2}\frac{q_1(1+s)}{-x(1+s)+2 q_1}+\frac{1}{2}\frac{q_2(1+s)}{-x(1+s)+2 q_2}, \\
\frac{\partial f}{\partial s} (x,s)& =-1+\frac{q_1^2}{[-x(1+s)+2 q_1]^2}+\frac{q_2^2}{[-x(1+s)+2 q_2]^2}.
\$ 
Take the maximum $x$ satisfying $f(x, s)=0$ and $\frac{\partial f}{\partial s} (x,s)=0$, which gives the rightmost edge $\lambda_+^{(2)}$ of ${\rm supp}(\rho_{\gamma_2})$. We display $\lambda_+^{(2)}$ as a function of $(q_1,q_2)\in [0.1,1]^2$ in Figure~\ref{example2_picture}. In the left panel, the purple contour lines represent level sets of $\lambda_+^{(2)}$, while the orange line corresponds to the homogeneous case $q_1=q_2$. In the right panel, the white contour lines represent the level sets of $\lambda_+^{(2)}$. 
These plots illustrate how heterogeneous missingness affects $\lambda_+^{(2)}$, distinguishing it from the homogeneous case. Moreover, the right panels of Figures~\ref{example1_picture} and \ref{example2_picture} show that $\lambda_+^{(1)}=\gamma_2^{-1}\lambda_+^{(2)}$, with $\lambda_+^{(1)}$ under feature-specific missingness proportional to $\lambda_+^{(2)}$ under sample-specific missingness. This relationship holds in the isotropic noise case by swapping the roles of dimension $d$ and sample size $n$ of $X^o$, since $\gamma_2=\gamma_1^{-1}$. 
\end{example}

\vspace{-1em}
\begin{figure}[h]
\centering
\caption{The left panel shows the rightmost edge $\lambda_+^{(2)}$ as a function of $(q_1,q_2)$ in $\mathbb{R}^3$, whereas the right panel depicts the same relationship between $\lambda_+^{(2)}$ and $(q_1,q_2)$ in 2D.}
\label{example2_picture}
\begin{minipage}{0.48\textwidth}
\centering
\includegraphics[width=\textwidth]{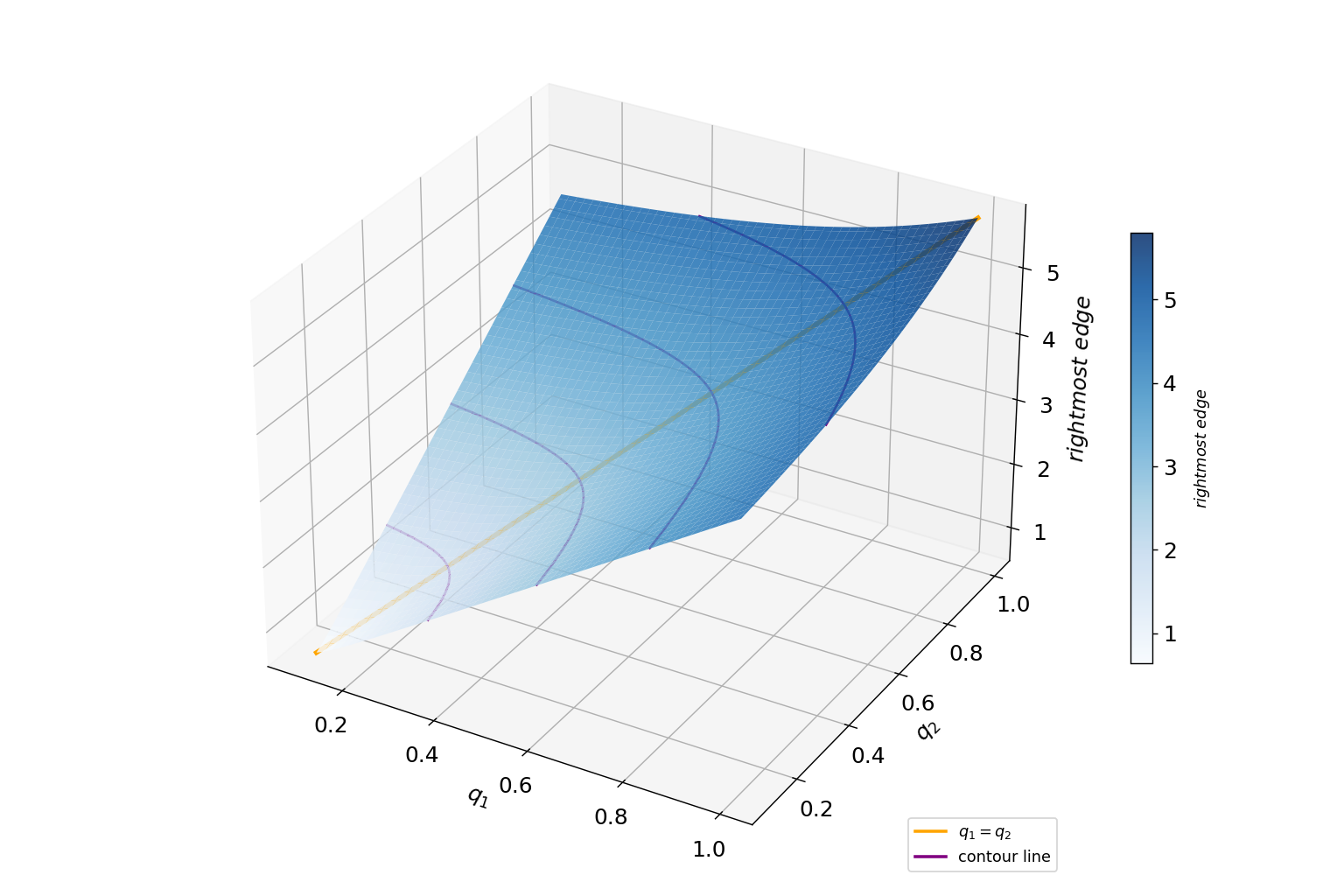}
\end{minipage}
\begin{minipage}{0.48\textwidth}
\centering
\includegraphics[width=\textwidth]{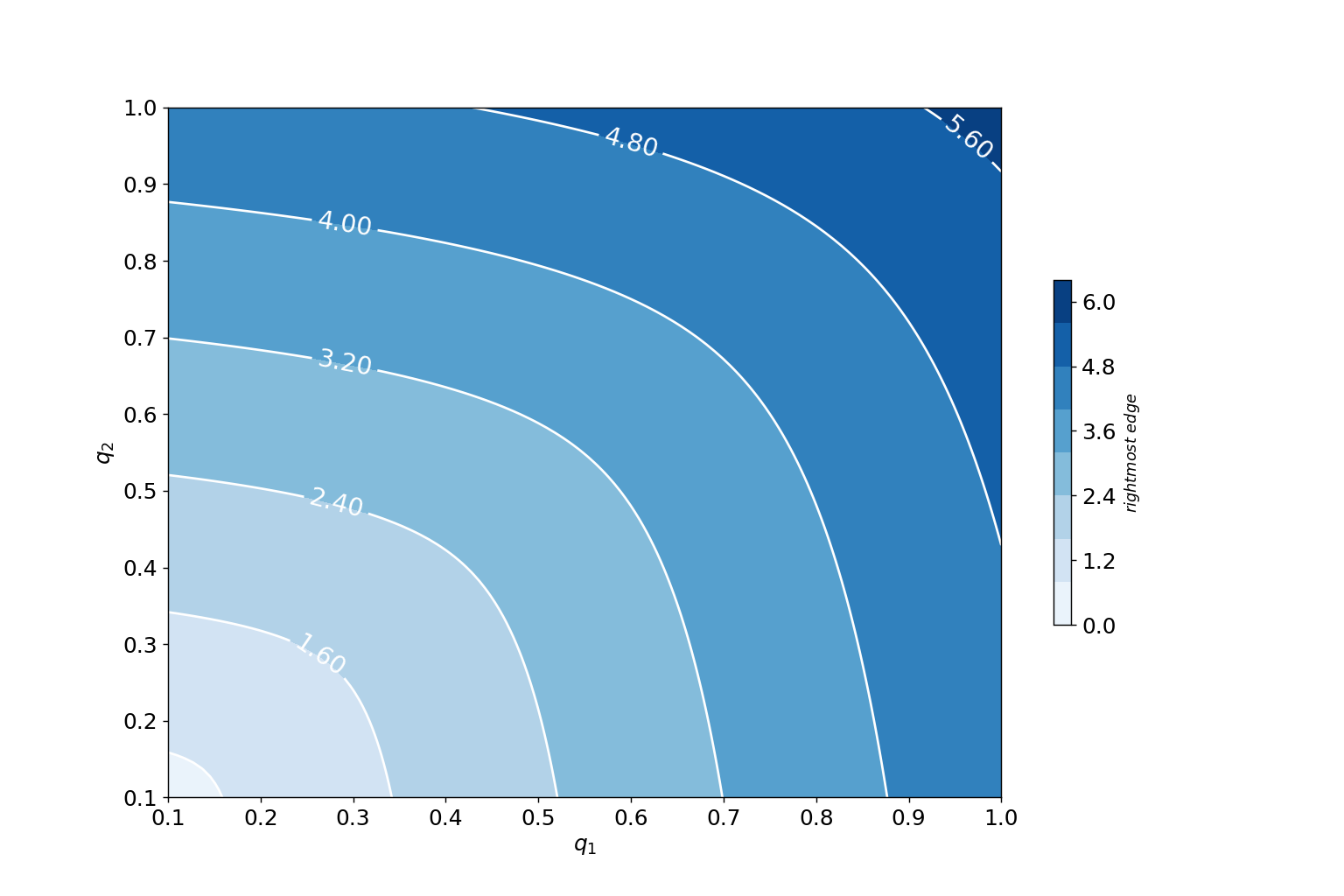}
\end{minipage}
\end{figure}
\vspace{-1em}    
The next theorem specifies the choice of $(v,\epsilon_n)$ and establishes the consistency of $\hat{r}(v,\epsilon_n)$. 

\begin{theorem}\label{consistency3}
Consider the factor model in \eqref{factor_model_matrix}, \eqref{sample_specific_data} and  Assumptions \ref{assumption_HD}-\ref{assumption_Sigma_x}. Consider $L\ge 2$ and $r_1$ in \eqref{heterogeneous_identifiable_spikes2}. Then, for any $\epsilon_n=o(1)$ satisfying $\epsilon_n n^{2/3}\to \infty$, $\lim\nolimits_{n\to \infty}\mathbb{P} \{\hat{r}(a_1,\epsilon_n)=r_1\} = 1$. 
\end{theorem}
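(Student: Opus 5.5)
We prove Theorem~\ref{consistency3} by splitting the event $\{\hat r(a_1,\epsilon_n)\neq r_1\}$ into underestimation and overestimation, and controlling each with the outlier and edge behaviour of the separable sample covariance $S_n^o=\Sigma^{1/2}(Y^oQ_n^{-1/2})Q_n(Y^oQ_n^{-1/2})^\top\Sigma^{1/2}$. The key observation is that $S_n^o$ belongs to the separable-covariance class of \cite{ding2021separable}. Indeed, $\Sigma$ is a finite-rank spiked perturbation of $\sigma^2 I_d$, with LSD $\pi_\Sigma=\delta_{\{\sigma^2\}}$. Assumption~\ref{assumption_Q_n} gives the ESD convergence of $Q_n$ and the edge condition on $\lambda_1(Q_n)$. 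Finally, $Z:=Y^oQ_n^{-1/2}$ has independent, centred, unit-variance entries that are bounded by $q_{\min}^{-1/2}\times$Gaussian, so all moments are finite. Before invoking \cite{ding2021separable}, we check its regularity conditions: the right edge $a_1$ must be regular (square-root behaviour), which we deduce from $f$ in \eqref{heter_f} having a simple critical point at $(a_1,s_2(a_1))$, as in the preceding Lemma.

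\textbf{Underestimation.} For $i\in[r_1]$, Theorem~3.6 of \cite{ding2021separable} gives $\hat\lambda_i^o\to s_2^{-1}(-\lambda_i^{-1})$ in probability. By the definition of $r_1$ in \eqref{heterogeneous_identifiable_spikes2}, $\lambda_{r_1}$ lies strictly above the phase transition threshold. Since $s_2$ is strictly monotone on $(a_1,\infty)$, this yields $s_2^{-1}(-\lambda_{r_1}^{-1})>a_1+\delta$ for some fixed $\delta>0$. Because $\epsilon_n=o(1)$, we get $\mathbb P(\hat\lambda_{r_1}^o>a_1+\epsilon_n)\to1$, hence $\mathbb P(\hat r\ge r_1)\to1$.

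\textbf{Overestimation.} Here convergence in probability of $\hat\lambda^o_{r_1+1}$ to $a_1$ is not enough; we need a rate, namely $\hat\lambda^o_{r_1+1}\le a_1+O_P(n^{-2/3+\tau})$ for arbitrarily small $\tau>0$. We get this from the eigenvalue rigidity/sticking result for separable covariance matrices with spikes: non-outlier eigenvalues stick to the eigenvalues of the unperturbed model $\sigma^2 Z Q_n Z^\top$ up to $n^{-1+\tau}$ errors, combined with the local law and edge rigidity of \cite{ding2021separable}, which place $\lambda_1(\sigma^2 ZQ_nZ^\top)$ within $n^{-2/3+\tau}$ of $a_1$. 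Two issues need care.
\begin{itemize}[leftmargin=*]
\item \emph{Near-critical spikes.} If some $\lambda_i$ with $i>r_1$ sits exactly at or just below the threshold, the corresponding sample eigenvalue still lies within $O_P(n^{-2/3+\tau})$ of $a_1$, by the near-critical bound in the same theorem.
\item \emph{Edge fluctuation scale.} Tracy--Widom-scale fluctuations are $O_P(n^{-2/3})$, so $\epsilon_n n^{2/3}\to\infty$ gives $\mathbb P(\hat\lambda^o_{r_1+1}>a_1+\epsilon_n)\to0$. If only an $n^{-2/3+\tau}$ rigidity bound is available, we instead cite the edge universality (Tracy--Widom limit) of the largest non-outlier eigenvalue to obtain the sharp $O_P(n^{-2/3})$ bound.
\end{itemize}

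The main obstacle is this second step. Two things must be verified there. First, the random diagonal-type weighting is genuinely covered by the separable model: the Bernoulli mask makes $Z$ non-Gaussian but independent, and $Q_n$ is deterministic. Second, the edge regularity and the $n^{-2/3}$-scale control of $\lambda_{r_1+1}(S_n^o)$ must hold uniformly, given that $\lambda_1(Q_n)$ converges to the edge of ${\rm supp}(\pi_Q)$, which is exactly why Assumption~\ref{assumption_Q_n} includes that condition. Combining the two steps gives $\mathbb P\{\hat r(a_1,\epsilon_n)=r_1\}\to1$.
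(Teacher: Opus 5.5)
Your proposal is correct and follows the paper's route. The paper writes out the argument only for Theorem~\ref{consistency} and declares Theorem~\ref{consistency3} analogous; it uses the same two events you consider, with the outliers $\hat\lambda^o_i$, $i\le r_1$, converging to limits strictly above the threshold and $\hat\lambda^o_{r_1+1}$ equal to the threshold up to $O_p(n^{-2/3})$, so that $\epsilon_n=o(1)$ and $\epsilon_n n^{2/3}\to\infty$ give consistency. You additionally make explicit the separable-covariance inputs from \cite{ding2021separable} (outlier locations, eigenvalue sticking, edge rigidity and universality) that the paper leaves implicit; the one point neither argument fully covers is a spike lying exactly at the critical value, where the edge law is a deformed rather than Tracy--Widom distribution, so the $O_P(n^{-2/3})$ bound there needs a separate tightness result.
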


Notably, $a_1$ is the optimal threshold, as characterized in Remark~\ref{optimal_threshold}.

\subsection{Anisotropic noise}

For general covariance $\Sigma_\varepsilon$ under both feature-specific and sample-specific heterogeneous missingness, the previous characterizations of the number $r_1$ of identifiable spikes (see \eqref{heterogeneous_identifiable_spikes1} and \eqref{heterogeneous_identifiable_spikes2}), together with the validity and consistency of our proposed MATE (see Theorems~\ref{consistency2} and \ref{consistency3}) in Sections~\ref{subsec_feature-specific_hete} and \ref{subsec_Sample-specific_hete}, remain valid. However, these MATEs depend on $\Sigma_\varepsilon$, which is generally not estimable in the high-dimensional regime except under sparsity assumptions (see e.g., \cite{bickel2008covariance}). 
This motivates us to focus on diagonal $\Sigma_\varepsilon$ as a tractable yet informative setting for studying the performance of MATE under anisotropic noise. 

\subsubsection{Diagonal noise covariance and parameter estimation}

We assume $\Sigma_{\varepsilon}={\rm diag}(\sigma^2_1,\cdots,\sigma^2_d)$. To model the high-dimensional noise variances $\{\sigma_i^2\}_{i=1}^d$, we impose a prior distribution $F_\Theta$ indexed by a parameter $\Theta\in \mathbb R^m$, a common practice in Bayesian statistics that provides flexibility and regularization in high-dimensional settings (see, e.g., \citealp{richardson1997bayesian,guan2011bayesian}). Specifically, let $\sigma_i^2\overset{i.i.d.}{\sim} F_\Theta$, where $F_\Theta$ has bounded support. We employ tools from {\it free probability} to construct a moment estimator for $\Theta$, under which the first $m$ moments $\beta_k(S_n^o)$, $1\le i\le m$, admit explicit limits under Assumptions~\ref{assumption_HD}-\ref{assumption_varepsilon}.

\subsubsection{A special case: Gamma-distributed noise variance}

We assume that $\Sigma_\varepsilon$ is diagonal with entries drawn from a Gamma distribution: 
    \begin{equation} \label{Gamma_distribution}
    \Sigma_\varepsilon = {\rm diag}(\sigma_1^2,\cdots,\sigma_d^2),\quad 
    \sigma_i^2\overset{i.i.d.}{\sim} {\rm Gamma}(\theta,\theta/\sigma^2),
    \end{equation}
where $\theta$ is the shape parameter and $\theta/\sigma^2$ is the rate parameter. The covariance matrix $\Sigma$ can then be expressed as 
    \begin{equation}\label{Gamma_Sigma_spiked_model}
    \Sigma = \Lambda \Lambda^\top + {\rm diag}(\sigma_1^2,\cdots,\sigma_d^2).
    \end{equation}
Moreover, when $r=0$, \eqref{Gamma_Sigma_spiked_model} reduces to the null model: 
    \begin{equation}\label{Gamma_Sigma_null_model}
    \Sigma={\rm diag}(\sigma_1^2,\cdots,\sigma_d^2). 
    \end{equation}
\eqref{Gamma_distribution} and \eqref{Gamma_Sigma_spiked_model} together yield a flexible covariance model that encompasses the standard spiked covariance model of \cite{johnstone2001distribution} as a special case. Indeed, as $\theta\rightarrow \infty$, ${\rm Gamma}(\theta, \theta/\sigma^2)$ converges to a point mass at $\sigma^2$, yielding $\Sigma_\varepsilon=\sigma^2 I_d$. Moreover, assuming Gamma-distributed diagonal noise variances is both realistic and practically motivated, allowing \eqref{Gamma_Sigma_spiked_model} to accommodate model misspecification and heterogeneous noise commonly encountered in applications such as gene microarray and GWAS studies \citep{ke2023estimation}.

Under this model, the eigenvalues of $\Sigma_\varepsilon$ are random with unbounded support. Consequently, the LSD of $\Sigma$ is also unbounded, and existing RMT results do not directly apply to model \eqref{Gamma_Sigma_spiked_model}. To address this issue, we approximate model \eqref{Gamma_Sigma_spiked_model} using a truncated Gamma-based spiked covariance model, following the approach of Section~4.2 in \cite{ke2023estimation}: 
\begin{equation}\label{TruncGamma_Sigma_spiked_model}
\Sigma =\Lambda\Lambda^{\top}+{\rm diag}(\sigma_1^2,\cdots,\sigma_d^2),\quad
\sigma_i^2\overset{i.i.d.}{\sim} {\rm TruncGamma}(\theta,\theta/\sigma^2,\sigma^2 T_1,\sigma^2 T_2).
\end{equation}
Here, ${\rm TruncGamma}(\theta,\theta/\sigma^2,\sigma^2T_1,\sigma^2T_2)$ denotes the Gamma distribution ${\rm Gamma}(\theta,\theta/\sigma^2)$ truncated to the interval $[\sigma^2T_1,\sigma^2 T_2]$, where $0< T_1 < T_2 < \infty$. 
Its density function is 
\begin{equation*}
f_{\rm TG}(x; \theta,\sigma^2, \sigma^2 T_1, \sigma^2 T_2)=\mathbf{1}_{[\sigma^2 T_1,\sigma^2 T_2]}(x) \frac{f_{\rm G}(x)}{F_{\rm G}(T_2)-F_{\rm G}(T_1)},\ x\in \mathbb{R}, 
\end{equation*}
where $f_{\rm G}(x)$ and $F_{\rm G}(x)$ denote the density and cumulative distribution function of ${\rm Gamma}(\theta,\theta/\sigma^2)$, respectively. 
Under model \eqref{TruncGamma_Sigma_spiked_model}, the LSD of $\Sigma$ has bounded support, which enables the application of existing results from RMT. Moreover, as shown in Section~4.2 of \cite{ke2023estimation}, the Gamma-based spiked model \eqref{Gamma_Sigma_spiked_model} can be well approximated by its truncated counterpart \eqref{TruncGamma_Sigma_spiked_model}. 

Under the two heterogeneous missingness cases, we establish the following convergence result. 

\begin{lemma}\label{lemma_heterogeneous}
Consider $\Sigma$ generated from \eqref{TruncGamma_Sigma_spiked_model}. Suppose that Assumptions~\ref{assumption_HD}-\ref{assumption_varepsilon} hold. 
(a) If the ESD of $P_d$ converges, then the ESD of $S_n^o$ converges weakly in the feature-specific heterogeneous missingness setting. 
(b) Under Assumption~\ref{assumption_Q_n}, the ESD of $S_n^o$ converges weakly in the sample-specific heterogeneous missingness setting. 
\end{lemma}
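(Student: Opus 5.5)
The plan is to reduce both parts to known limit theorems for sample covariance matrices with a separable structure. This needs three ingredients: a finite-rank reduction that removes $\Lambda\Lambda^\top$, a strong law of large numbers for the ESD of the truncated-Gamma diagonal, and independence of the random $\Sigma_\varepsilon$ from $Y$ and $B$, so that we can condition on it.

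\textbf{Step 1 (removing the spike).} Since $\Sigma=\Lambda\Lambda^\top+\Sigma_\varepsilon$ with ${\rm rank}(\Lambda)=r$ fixed, write
\[
X^o=(\Lambda F)\circ B+(\Sigma_\varepsilon^{1/2}e)\circ B .
\]
The first term has rank at most $r$ after the Hadamard product with $B$? This is false in general, so I will not use it. Instead I use the representation $X^o=\Sigma^{1/2}Y^o$ from \eqref{X^o_decomposition2}, and handle $\Sigma$ itself. The matrices $\Sigma^{1/2}$ and $\Sigma_\varepsilon^{1/2}$ differ by a perturbation whose effect on $S_n^o$ has rank at most $2r$. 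Concretely, $\Sigma-\Sigma_\varepsilon$ has rank $r$, and by the rank inequality for ESDs (Theorem A.43 of Bai--Silverstein),
\[
\|F^{S_n^o}-F^{\tilde S_n^o}\|_\infty\le \frac{2r}{d}\to 0,
\]
where $\tilde S_n^o$ is built from a matrix whose column space differs from that of $S_n^o$'s by finite rank. To construct $\tilde S_n^o$, I apply the same reasoning to $X^o=\Lambda F\circ B+\varepsilon\circ B$ in a rank-$r$-per-block form: each row of $(\Lambda F)\circ B$ lies in the row space of $F$ masked by $B$. If that route is problematic, I instead argue directly with the separable form $\Sigma^{1/2}Y^o$. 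In that form the LSD depends on $\Sigma$ only through the LSD of $\Sigma P_d$ (respectively $\Sigma$), and a rank-$r$ change of $\Sigma$ does not change that LSD. So it suffices to show that the ESD of $\Sigma_\varepsilon P_d$, or of $\Sigma_\varepsilon$, converges.

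\textbf{Step 2 (ESD of the noise covariance).} Since the $\sigma_i^2$ are i.i.d.\ TruncGamma, the Glivenko--Cantelli theorem gives almost surely
\[
F^{\Sigma_\varepsilon}\Rightarrow {\rm TruncGamma}(\theta,\theta/\sigma^2,\sigma^2T_1,\sigma^2T_2),
\]
which has compact support in $[\sigma^2T_1,\sigma^2T_2]$. For part (a), $\Sigma_\varepsilon P_d$ is diagonal with entries $\sigma_i^2p_k$ for $i\in\mathcal I_k$. Provided $d_k/d$ converges, which is implicit in the convergence of the ESD of $P_d$, the ESD of $\Sigma_\varepsilon P_d$ converges almost surely to the mixture $\sum_k c_k\,\mathcal L(p_k\sigma^2)$, where $\sigma^2$ is TruncGamma distributed. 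Glivenko--Cantelli is applied within each block, and the independence of the $\sigma_i^2$ from the partition makes this legitimate. This verifies Assumption~\ref{assumption_P_d} and Assumption~\ref{assumption_Sigma_x} almost surely, with bounded support.

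\textbf{Step 3 (invoking the limit theorems).} I condition on $\Sigma_\varepsilon$, which is independent of $(F,e,B)$.

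For (a), $P_d^{-1/2}Y^o$ has i.i.d.\ standardized entries, and $S_n^o=n^{-1}T Z Z^\top T^\top$ with $T=\Sigma^{1/2}P_d^{1/2}$. I apply Theorem~2.14 of \cite{yao2015sample}, exactly as done before Example~\ref{example1}, to obtain convergence to $\rho_\gamma$ with the stated Stieltjes equation. Since this holds for almost every realization of $\Sigma_\varepsilon$, it holds unconditionally.

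For (b), $S_n^o=n^{-1}\Sigma^{1/2}ZQ_nZ^\top\Sigma^{1/2}$ is a separable sample covariance matrix. Both $\pi_\Sigma$ and $\pi_Q$ exist and have bounded supports, using Assumption~\ref{assumption_Q_n} and Step 2. I apply the separable-covariance LSD theorem (Zhang; Paul--Silverstein), as used in \cite{ding2021separable}, to obtain the limit characterized by \eqref{Stieltjes_transforms}.

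\textbf{Main obstacle.} The delicate point is part (a) with a non-diagonal $\Sigma$. Writing $X^o=\Sigma^{1/2}(Y\circ B)$ relies on Lemma~\ref{lemma_inner_hadamard}, which is stated only for diagonal $\Sigma$, so it is not available when $\Lambda\ne0$. My fix is the rank argument. Set
\[
X^o=\Sigma_\varepsilon^{1/2}(e\circ B)+(\Lambda F)\circ B .
\]
The second term has rows $\lambda_j^\top F\circ b_j$, and its Gram contribution is not low rank. So I would instead bound its effect through the L\'evy distance, using Bai--Silverstein's Frobenius-norm inequality $L^4\le \frac{2}{d^2}{\rm tr}(AA^\top+\cdots)\,{\rm tr}((A-B)(A-B)^\top)$. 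This requires $\|\Lambda\|_F^2=o(d)$, i.e.\ not-too-strong loadings, and I expect that to be the real sticking point. If it fails, I would assume $\Lambda$ is row-aligned with the block structure so that Lemma~\ref{lemma_inner_hadamard} applies.
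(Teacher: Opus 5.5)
Your Steps~2--3 follow the paper's proof. The paper asserts that the ESD of $\Sigma P_d$ converges for the truncated-Gamma model and applies Theorem~2.14 of \cite{yao2015sample} to \eqref{feature_specific_data}; for (b) it applies a separable LSD theorem to \eqref{sample_specific_data}. Your Glivenko--Cantelli plus rank-interlacing check of the first claim is more explicit than the paper's. Your Zhang/Paul--Silverstein citation for (b) also fits $\gamma\in(0,\infty)$ better than the paper's \cite{wang2014limiting}, which treats $d/n\to 0$.

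The obstacle you raise at the end is genuine, and the paper passes over it. Both representations rest on $X^o=\Sigma^{1/2}(Y\circ B)$, i.e.\ on Lemma~\ref{lemma_inner_hadamard}. That lemma needs $\Sigma^{1/2}$ diagonal, which fails once $\Lambda\Lambda^\top$ is not diagonal. So your Step~3, with $T=\Sigma^{1/2}P_d^{1/2}$, inherits the gap, and Step~1 does not repair it:
\begin{itemize}
\item The matrix $\Sigma^{1/2}-\Sigma_\varepsilon^{1/2}$ is in general not of low rank. Already for $d=2$, $r=1$, $\Sigma_\varepsilon={\rm diag}(1,4)$, $\Lambda=(1,1)^\top$ it has rank $2$.
\item The claim that ``the LSD depends only on $\Sigma P_d$'' presupposes the invalid separable form.
\item The ``row-aligned $\Lambda$'' fallback does not help, since it does not make $\Sigma^{1/2}$ diagonal.
\end{itemize}

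Your Frobenius/L\'evy alternative is the right repair, and it is not a sticking point in this setting. Split $X^o=\Sigma_\varepsilon^{1/2}(e\circ B)+(\Lambda F)\circ B$; this is legitimate because $\Sigma_\varepsilon$ is diagonal. Since $b_{ij}\in\{0,1\}$,
\[
\tfrac1n\|(\Lambda F)\circ B\|_F^2\le \tfrac1n\|\Lambda F\|_F^2={\rm tr}\bigl(\Lambda^\top\Lambda\,FF^\top/n\bigr)\to \textstyle\sum_{i\le r}\rho_i \quad\text{a.s.},
\]
which is $O(1)$ for fixed $r$ and bounded spikes. That is the regime the paper works in; its proof of Lemma~\ref{LSD_moments} invokes ``boundedness of the spikes''. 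Both Gram matrices have trace $O(d)$ a.s., so Corollary~A.42 of Bai--Silverstein gives $L^4=O(d^{-1})$. The whole term $(\Lambda F)\circ B$ can therefore be discarded without any rank argument; in fact $\|\Lambda\|_F^2=o(d)$ suffices.

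Then run Step~3 on $n^{-1}\Sigma_\varepsilon^{1/2}(e\circ B)(e\circ B)^\top\Sigma_\varepsilon^{1/2}$. This genuinely factors as $(\Sigma_\varepsilon P_d)^{1/2}\,P_d^{-1/2}(e\circ B)$ in (a) and as $\Sigma_\varepsilon^{1/2}\,\{(e\circ B)Q_n^{-1/2}\}\,Q_n^{1/2}$ in (b). The limit is governed by the LSD of $\Sigma_\varepsilon P_d$ (resp.\ $\Sigma_\varepsilon$), which equals that of $\Sigma P_d$ (resp.\ $\Sigma$), so you recover the paper's $\rho_\gamma$.

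One remaining detail, shared with the paper: the standardized entries of $P_d^{-1/2}(e\circ B)$ and $(e\circ B)Q_n^{-1/2}$ are independent with unit variance but not identically distributed across blocks. You therefore need a version of the LSD theorems for independent, non-identically distributed entries. A Lindeberg condition holds here, since the fourth moments are uniformly bounded when the $p_k$, $q_\ell$ stay away from $0$.
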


We further establish the convergence of the first two moments $\beta_k(S_n^o)$, $i=1,2$. 

\begin{lemma}\label{LSD_moments}
Consider $\Sigma$ generated from \eqref{TruncGamma_Sigma_spiked_model}. Suppose that Assumptions~\ref{assumption_HD}-\ref{assumption_varepsilon} hold. Define $\tilde \sigma^2=\EE (\sigma_i^2)$ and $\tilde \theta = \tilde \sigma^4/{\rm Var} (\sigma_i^2)$. 
\begin{itemize}
\item[(a)]For complete data $X^o=X$, we have $\beta_1(S_n) \sim \tilde \sigma^2$ and $\beta_2 (S_n) \sim \left(1+\gamma+\tilde\theta^{-1}\right)\tilde\sigma^4$.
\item[(b)] For incomplete data $X^o$ with feature-specific heterogeneous MCAR, we have 
\$
\beta_1(S_n^o) \sim \tilde\sigma^2 \left( \frac{1}{d}\sum_{k} p_k d_k  \right),\ \beta_2(S_n^o) \sim \tilde\sigma^4 (1+\tilde\theta^{-1}) \left( \frac{1}{d}\sum_{k}p_k^2 d_k\right) + \gamma \tilde \sigma^4 \left(\frac{1}{d}\sum_{k}p_k d_k \right)^2.
\$
\item[(c)] For incomplete data $X^o$ with sample-specific heterogeneous MCAR, we have 
\$
\beta_1(S_n^o) \sim \tilde\sigma^2\left( \frac{1}{n}\sum_{\ell}q_\ell n_\ell\right),\ \beta_2(S_n^o) \sim \tilde\sigma^4(1+\tilde\theta^{-1})\left(\frac{1}{n}\sum_{\ell}q_\ell n_\ell\right)^2 + \gamma\tilde\sigma^4\left(\frac{1}{n}\sum_{\ell}q_\ell^2 n_\ell\right). 
\$
\end{itemize}
\end{lemma}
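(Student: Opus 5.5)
The plan is a direct moment computation. Write $\beta_k(S_n^o)=d^{-1}{\rm tr}((S_n^o)^k)$ and expand in the entries of $X^o=(\Sigma^{1/2}Y)\circ B$. Then take expectations over $(Y,B)$ conditional on $\{\sigma_i^2\}$ and $\Lambda$, and show that the conditional variance vanishes. Since the entries are Gaussian times independent Bernoulli variables with bounded $p_{ij}$, and $\Sigma$ has bounded noise spectrum plus a fixed-rank perturbation, the variance of $\beta_k$ is $O(n^{-2})$ by standard trace-moment bounds. Combined with Chebyshev, this gives $\beta_k(S_n^o)-\EE[\beta_k(S_n^o)\mid\Sigma]\to0$ in probability. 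The finite-rank term $\Lambda\Lambda^\top$ changes ${\rm tr}(S_n^o)$ and ${\rm tr}((S_n^o)^2)$ by $O(1)\cdot\|\Lambda\|^2$ terms divided by $d$, so it is negligible, provided $\|\Lambda\|$ is $o(d)$. I would state this as the implicit regime of bounded or slowly growing spikes. Hence it suffices to treat $\Sigma=\Sigma_\varepsilon$ diagonal, where $x^o_{ij}=\sigma_i e_{ij}b_{ij}$ with all entries independent.

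For the first moment, $\beta_1(S_n^o)=(nd)^{-1}\sum_{i,j}\sigma_i^2e_{ij}^2b_{ij}$, which has conditional mean $d^{-1}\sum_i\sigma_i^2\bar p_i$. Here $\bar p_i=p_k$ for $i\in\mathcal I_k$ in case (b), and $\bar p_i=n^{-1}\sum_\ell q_\ell n_\ell$ in case (c). Because the $\sigma_i^2$ are i.i.d. with mean $\tilde\sigma^2$ and independent of the block labels, the law of large numbers applied within each block gives $\tilde\sigma^2 d^{-1}\sum_k p_kd_k$. Case (a) is $p\equiv1$.

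For the second moment, write ${\rm tr}((S_n^o)^2)=n^{-2}\sum_{i,i'}\bigl(\sum_j x^o_{ij}x^o_{i'j}\bigr)^2$ and split it into diagonal pairs $i=i'$ and off-diagonal pairs $i\neq i'$.

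\emph{Diagonal pairs.} These give $n^{-2}\sum_i\sigma_i^4\bigl[\sum_j e_{ij}^4b_{ij}+\sum_{j\ne j'}e_{ij}^2e_{ij'}^2b_{ij}b_{ij'}\bigr]$. The leading part is $\sigma_i^4(\sum_j p_{ij})^2/n^2$. After dividing by $d$, this yields $\EE\sigma_i^4$ times either $d^{-1}\sum_kp_k^2d_k$ in case (b), or $(n^{-1}\sum_\ell q_\ell n_\ell)^2$ in case (c). The fourth-moment part is $O(1/n)$ and drops out.

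\emph{Off-diagonal pairs.} For $i\ne i'$, the conditional mean is $n^{-2}\sigma_i^2\sigma_{i'}^2\sum_j p_{ij}p_{i'j}$. Summing over $i,i'$ and dividing by $d$ gives a factor $d\cdot n^{-1}\to\gamma$ times averages. In case (b), $\sum_j p_{ij}p_{i'j}=np_{k}p_{k'}$, which gives $\gamma\tilde\sigma^4(d^{-1}\sum_kp_kd_k)^2$. In case (c), $\sum_jq_{\ell(j)}^2=\sum_\ell q_\ell^2n_\ell$, which gives $\gamma\tilde\sigma^4 n^{-1}\sum_\ell q_\ell^2n_\ell$.

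Finally, $\EE\sigma_i^4=\tilde\sigma^4+{\rm Var}(\sigma_i^2)=\tilde\sigma^4(1+\tilde\theta^{-1})$ turns the diagonal contribution into the stated form. In case (a) this reduces to $(1+\gamma+\tilde\theta^{-1})\tilde\sigma^4$.

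The main obstacle I expect is the concentration step. $\beta_2$ is a degree-four polynomial in the $e_{ij}$ with random Bernoulli weights, and its variance must be shown to vanish uniformly. I would first try a direct variance computation by counting index coincidences, the standard moment method, conditionally on $B$ and $\{\sigma_i^2\}$. Boundedness of the truncated Gamma support and $p_{ij}\le1$ make every moment uniformly bounded. Separately, I would control the fluctuation of $d^{-1}\sum_i\sigma_i^4 1\{i\in\mathcal I_k\}$ through the i.i.d. structure of the $\sigma_i^2$.
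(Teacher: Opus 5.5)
Your proposal is correct, and your bookkeeping reproduces all three formulas. The diagonal pairs give $\frac1d\sum_i\sigma_i^4\bar p_i^2$, and the off-diagonal pairs give the $\gamma$-terms.

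It is, however, a genuinely different route from the paper's, which never expands traces entrywise.
\begin{itemize}
\item For (a) and (b), the paper rescales to $X^o=\Sigma_f^{1/2}Z_f$, with $\Sigma_f=\Sigma P_d$ and unit-variance $Z_f=P_d^{-1/2}Y^o$. It then feeds $\beta_1(\Sigma_f)$ and $\beta_2(\Sigma_f)$ into the black-box formulas $\beta_1(\hat\Sigma)\sim\beta_1(\Sigma)$ and $\beta_2(\hat\Sigma)\sim\beta_2(\Sigma)+\gamma\beta_1^2(\Sigma)$ from \cite{yao2015sample}.
\item For (c), it writes $S_n^o=\frac1n\Sigma^{1/2}Z_sQ_nZ_s^\top\Sigma^{1/2}$. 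It then uses asymptotic freeness \cite{hiai2006semicircle} together with free mixed-moment identities such as $\beta_1(BCBC)\sim\beta_1^2(B)\beta_2(C)+\beta_2(B)\beta_1^2(C)-\beta_1^2(B)\beta_1^2(C)$.
\end{itemize}

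The paper's route is short and modular. It shows structurally how missingness enters: through the effective population matrix $\Sigma P_d$ in the feature-specific case, and as a free factor $Q_n$ in the sample-specific case. It would also extend to higher $\beta_k$ via free cumulants. But it relies on hypotheses it does not verify:
\begin{itemize}
\item The identity $X^o=\Sigma^{1/2}(Y\circ B)$ from Lemma~\ref{lemma_inner_hadamard} is exact only for diagonal $\Sigma$.
\item The cited freeness theorem is stated for bi-unitarily invariant ensembles, which the Bernoulli-masked $Z_s$ is not.
\end{itemize}
Your computation needs only conditional independence and bounded moments, and it handles all three cases with a single calculation. In effect it checks the freeness-based formulas at the two-moment level. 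Its cost is explicit variance counting and poor scalability to $k\ge 3$.

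\textbf{One repair to your reduction step.} Under missingness the spike part $M=(\Lambda F)\circ B$ is generally full rank, so you should argue by norms rather than by rank. Let $N=(\Sigma_\varepsilon^{1/2}e)\circ B$. Then $\|M\|_F^2=O_p(n\|\Lambda\|_F^2)$ and $\|N\|_{\rm op}=O_p(\sqrt n)$. The bound $\|X^oX^{o\top}-NN^\top\|_F\le 2\|N\|_{\rm op}\|M\|_F+\|M\|_F^2$ then shows that the perturbation of $\beta_2$ vanishes for bounded spikes.

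\textbf{Your growth condition is too generous for the second moment.} A spike of size $\rho=\|\Lambda\Lambda^\top\|$ contributes about $\rho/d$ to $\beta_1$ but about $\rho^2/d$ to $\beta_2$. So $\beta_2$ requires $\rho=o(\sqrt d)$. Bounded spikes, as the paper assumes, are fine.

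\textbf{Be consistent about conditioning.} Compute expectations conditional on $\Sigma$ only, so that the entries $\sigma_ie_{ij}b_{ij}$ are independent with variance $\sigma_i^2p_{ij}$. If you also condition on $B$, then $p_{ij}$ is replaced by $b_{ij}$, and you need an extra law of large numbers over $B$.
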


Lemma~\ref{LSD_moments} directly yields the following moment estimators for $(\tilde\theta,\tilde\sigma^2)$ in two cases. 
\begin{itemize}
    \item[(1)] Feature-specific missing: 
$\hat\sigma^2=\frac{\beta_1(S_n^o)}{\frac{1}{d}\sum_k p_k d_k}$, 
$\hat\theta=\frac{\frac{1}{d}\sum_k p_k^2 d_k }{\left( \beta_1^{-2}(S_n^o) \beta_2(S_n^o) -\gamma \right) \left(\frac{1}{d}\sum_k p_k d_k \right)^2 -\frac{1}{d}\sum_k p_k^2 d_k }$; 
\item[(2)] Sample-specific missing: 
$\hat\sigma^2=\frac{\beta_1(S_n^o)}{\frac{1}{n}\sum_\ell q_\ell n_\ell}$, 
$\hat\theta=\frac{\left(\frac{1}{n}\sum_\ell q_\ell n_\ell\right)^2}{\left( \beta_1^{-2}(S_n^o) \beta_2(S_n^o) -1\right) \left(\frac{1}{n}\sum_\ell q_\ell n_\ell \right)^2 -\frac{\gamma}{n}\sum_\ell q_\ell^2 n_\ell}$. 
\end{itemize}

In practice, $\beta_1(S_n^o)$ and $\beta_2(S_n^o)$ can be replaced by 
$ 
\tilde \beta_1(S_n^o)= (d-\hat r)^{-1}\sum_{\hat r+1\le i\le d}\hat \lambda_i^o,\ 
\tilde \beta_2(S_n^o) = (d-\hat r)^{-1}\sum_{\hat r+1\le i\le d} (\hat \lambda_i^o)^2
$ 
to mitigate the effect of spiked eigenvalues, where $\hat r$ denotes a rough estimator of $r$. The consistency of $(\hat \theta, \hat\sigma^2)$ follows directly from Lemma~\ref{LSD_moments}.

\begin{remark}
The estimator $(\hat\theta,\hat\sigma^2)$ is a one-shot estimator of $(\tilde \theta, \tilde \sigma^2)$, with convergence rate $O_p(n^{-1})$, which is typical for linear spectral statistics \citep{bai2008clt}. In practice, we use it as an estimator of $(\sigma^2,\theta)$ by neglecting the truncation error in \eqref{TruncGamma_Sigma_spiked_model}.
\end{remark}

Lastly, we propose an algorithm to estimate the number of factors under anisotropic noise. From a practical viewpoint, the incomplete matrix $X^o$ can be recognized as either the row-wise or column-wise missing case. So we only present the algorithm for feature-specific missingness, as the other is completely parallel. 
Let $\{\hat \lambda_{1(m)}^*\}_{1\le m\le M}$ denote the largest eigenvalues of sample covariance matrices computed from $M$ simulated datasets, each of size $n$, generated under the null model \eqref{Gamma_Sigma_null_model} with the same missingness pattern as the observed data $X^o$, i.e., $p^*_i=\hat p_i$ for $1\le i\le d$, where $p^*_i$ and $\hat p_i$ are the non-missing rates of the $i$-th feature in the null model and $X^o$, respectively. 
Let $\hat T_\beta$ be the $(1-\beta)$-quantile of $\{\hat{\lambda}_{1(m)}^*\}_{1 \le m \le M}$, where $\beta=o(1)$. We then set $v=\hat T_\beta$ and apply MATE: 
\$
\hat{r}(\hat T_\beta,\epsilon_n)
=\sharp\{i\in [d]:\hat{\lambda}_i^o>\hat{T}_\beta+\epsilon_n\}.
\$
By \cite{bloemendal2016principal} and Section~4.3 of \cite{ke2023estimation}, the distribution of $\hat{\lambda}^o_{r_1+1}$ is close to that of $\hat{\lambda}_1^*$. 
Consequently, the number of identifiable factors can be consistently estimated by $\hat{r}(\hat T_\beta,\epsilon_n)$. The details are shown in Algorithm~\ref{Algorithm_MATE}. 
    \begin{algorithm}
        \caption{MATE for $r$ in factor models with anisotropic noise.}
        \label{Algorithm_MATE} 
        \algrenewcommand\alglinenumber[1]{\textbf{Step~#1}}
        \begin{algorithmic}[1]
        
        \renewcommand{\algorithmicrequire}{\textbf{Input:}}
        \renewcommand{\algorithmicensure}{\textbf{Output:}}
        
        \Require Observed incomplete data $X^o=(x_{ij}^o)$, $\beta\in (0,1)$, $\epsilon_n$ and $M \in \NN$.    
        \State Estimate $p_i$ as $\hat{p}_i= n^{-1}\sharp\{j: x_{ij}^o\neq 0\}$ for each $i\in [d]$. Compute eigenvalues $\{\hat{\lambda}_i^o\}_{1 \le i \le d}$ of  $X^oX^{o\top}/n$. Set $r_{\max}=10$ as a prefixed upper bound. Initialize $t=0$ and $\hat r^{(0)}=r_{\max}+1$.
        \State  Compute  $\tilde \beta_1(S_n^o)$ and $\tilde \beta_2(S_n^o)$ according to $\{\hat\lambda_i^o\}_{i\ge\hat r^{(t)}+1}$ to get $\hat{\theta}_{(t)}$ and  $\hat{\sigma}_{(t)}^2$. 
        \State Generate $M$ independent samples $\{X^*_m\}_{1\le m\le M}$ from the null model of \eqref{TruncGamma_Sigma_spiked_model} with $(\hat \theta_{(t)},\hat \sigma^2_{(t)})$ and the same missingness pattern as $X^o=(x_{ij}^o)$, and compute $\hat{T}_\beta^{(t)}$. 
        \State Update $\hat{r}^{(t+1)}=\sharp\{1\le i \le r_{\max}: \hat{\lambda}^o_i>\hat{T}_\beta^{(t)}+\epsilon_n\}$.
        
        \State Repeat Steps 2-4 until convergence. 
        %Final estimates $\hat{r}$, $\hat{\theta}$ and $\hat{\sigma}^2$ are obtained. 
        \Ensure  Estimators $\hat{r}$, $\hat{\theta}$ and $\hat{\sigma}^2$. 
        \end{algorithmic}
    \end{algorithm}

\section{Simulation}\label{Simulation}
In this section, we evaluate MATE in experiments under isotropic and anisotropic models with homogeneous and heterogeneous missingness (see Sections \ref{Homoscedasticity_case} and \ref{Heteroscedasticity_case}). Additionally, we estimate the number of factors from the imputed data using MATE and competing methods (see Section \ref{imputation}). The competing estimators are summarized in Section \ref{comparison}. From a data-analytic perspective, a given data matrix $X^o$ determines the blocks $\{B_{k \cdot}\}$ (feature-specific) and $\{B_{\cdot \ell}\}$ (sample-specific), either of which can characterize the missingness. Because of this duality, our simulations focus solely on feature-specific missingness. 

Observed data are generated from the model specified in \eqref{x_decomposition}, \eqref{gamma_decomposition} and \eqref{function_X^o}, that is, 
    \begin{equation}\label{simulation_X^o}
    X^o= \left(U 
    \begin{pmatrix}
        D_1 & \\
        & \sigma^2 I_{d-r}
    \end{pmatrix}^{1/2}V^{\top}\begin{bmatrix}F \\ e \end{bmatrix}\right)\circ B,
    \end{equation}
where the entries of $F\in \mathbb{R}^{r\times n}$ and $e\in \mathbb{R}^{d\times n}$ are independently drawn from $\mathcal{N}(0,1)$. Following \eqref{B_block} and \eqref{b_distribution}, we let $b_{ij}\stackrel{i.i.d.}{\sim}{\rm Bernoulli}(p_k)$ with $p_k\in(0,1]$ and $k\in[K]$. We consider two cases: $K=1$ and $K>1$. We set $r_{\max}=8$ for isotropic models, and $r_{\max}=10$ for anisotropic models. Each scenario is replicated 500 times to compute the \emph{mean squared error} (MSE).

\subsection{Parameter estimation for MATE}

In isotropic models, we estimate the number of factors using MATE $\hat{r}(v,\epsilon_n)$. For $K=1$, the threshold is ${v}=\hat{p} \hat{\sigma}^2 (1+\sqrt{\gamma_n})^2$, where $\hat{p}$ denotes the estimated non-missing rate, $\hat{\sigma}^2$ is an estimate of $\sigma^2$, and $\gamma_n=d/n$. For $K>1$, $v=\psi(\alpha_+)$ in \eqref{rightmost_edge_K>1}, where $\hat p_i$ is used as an estimate of $p_i$. 

Let $X^*$ be generated from the null model $\Sigma=\sigma^2 I_d$ with $p_i^* = \hat{p}_i$, where $p^*_i$ is the non-missing rate of the $i$-th feature in $X^*$, and let $\{X^*_m\}_{1\le m\le M}$ be $M$ independent copies. Define $\epsilon_n$ as the $(1-\beta)$-quantile of $\{\hat{\lambda}_{1(m)}^* - \frac{1}{M}\sum_{1\le m\le M}\hat{\lambda}_{1(m)}^* \}_{1 \le m \le M}$, where $\hat{\lambda}_{1(m)}^*$ is the largest eigenvalue of $X^*_mX^{*\top}_m/n$. See Algorithm \ref{Algorithm_epsilon_n} for details. In all simulations, we set $(\beta,M)=(0.1,500)$. 

    \begin{algorithm}
        \caption{Selection of the tuning parameter $\epsilon_n$.}
        \label{Algorithm_epsilon_n} 
        \algrenewcommand\alglinenumber[1]{\textbf{Step~#1}}
        \begin{algorithmic}[1]
        \renewcommand{\algorithmicrequire}{\textbf{Input:}}
        \renewcommand{\algorithmicensure}{\textbf{Output:}}
            \Require Observed incomplete data $X^o=(x_{ij}^o)$, $\beta\in (0,1)$ and $M \in \NN^+$.   
            \State Estimate $p_i$ as $\hat{p}_i= n^{-1}\sharp\{j: x_{ij}^o\neq 0\}$ for each $i\in [d]$.
            \State Generate $M$ copies of $X^*$, denoted by $\{X^*_m\}_{1\le m\le M}$.
            \State  Compute the largest eigenvalue $\hat{\lambda}_{1(m)}^*$ of $X^*_mX^{*\top}_m/n$, $1\le m\le M$. 
            \State  Take $\epsilon_n$ as the $(1-\beta)$-quantile of the sequence $\{\hat{\lambda}_{1(m)}^*- \frac{1}{M}\sum_{1\le m\le M}\hat{\lambda}_{1(m)}^*\}_{1 \le m \le M}$. 
            \Ensure   Parameter $\epsilon_n$. 
        \end{algorithmic}
    \end{algorithm}

In anisotropic models, we examine two Gamma-distributed cases and apply Algorithm \ref{Algorithm_MATE}, under both homogeneous and heterogeneous missingness. 

\subsection{Competing estimators}\label{comparison}
We compare MATE with seven competitors, including $\widetilde{{\rm CV}}$ \citep{jin2021factor}, M-ED, M-GR, M-ER, M-PC, M-IC, BEMA. These methods are described below. 
\begin{enumerate}
\item $\widetilde{\textbf{{CV}}}$ of \cite{jin2021factor}: we use $\widetilde{{\rm CV}}$ in Algorithm 1 of \cite{jin2021factor} with the number of draws $J=10$, subsets $K=8$, and leave-out probability $q=0.9$. To reduce computational cost, we set iteration number $\ell^*=\min\{\ln(0.002)/\ln(1-pq), 8\}$\footnote{Simulations show that when $p\ge 0.1$, the maximum number of iterations $\ell_{\max}\le 8$.}.
    
\item \textbf{PC} and \textbf{IC} of \cite{bai2002determining}:
\#
{\rm PC}(k)=V(k,\hat{F}^k)+k\hat{\sigma}^2 (d+n)/(dn) {\rm ln}[dn/(d+n)], \label{PC}\\
{\rm IC}(k)={\rm ln}(V(k,\hat{F}^k))+k(d+n)/(dn){\rm ln}[dn/(d+n)], \label{IC}
\#
where $V(k,\hat{F}^k)=d^{-1}\sum_{i=1}^d n^{-1}\hat{\varepsilon}_i\hat{\varepsilon}_i^{\top}=d^{-1}\sum_{r=k+1}^n \lambda_r (X^{\top}X/n)$, which is not directly computable due to missing entries in $X$.
    
\item \textbf{ED} of \cite{onatski2010determining}, \textbf{ER} and \textbf{GR} of \cite{ahn2013eigenvalue}:
\$
\hat{r}_{\rm ED}=\max\{i\le r_{\rm max}: \hat{\lambda}_i-\hat{\lambda}_{i+1} \ge \delta_0\},~
\hat{r}_{\rm ER}={\rm argmax}_{a \le i \le r_{\rm max}} \hat{\lambda}_i / \hat{\lambda}_{i+1},\\
\hat{r}_{\rm GR}={\rm argmax}_{a \le i \le r_{\rm max}} \ln [\textstyle\sum\nolimits_{j=i}^d\hat{\lambda}_j/\textstyle\sum\nolimits_{j=i+1}^d\hat{\lambda}_j]/ \ln [\textstyle\sum\nolimits_{j=i+1}^d\hat{\lambda}_j/\textstyle\sum\nolimits_{j=i+2}^d\hat{\lambda}_j],
\$
where $\delta_0$ is a fixed threshold, $r_{\rm max}$ is a prefixed upper bound, and $\{\hat{\lambda}_i\}$ are eigenvalues of $XX^{\top}/n$. Due to missing entries in $X$, $XX^{\top}/n$ cannot be computed directly, and thus these estimators are not applicable. 
    
\item \textbf{BEMA} of \cite{ke2023estimation}: Let $\Sigma={\rm diag}(\sigma_1^2,\cdots,\sigma_d^2)$, where $\sigma_i^2\stackrel{i.i.d}{\sim}{\rm Gamma}(\hat{\theta},\hat{\theta}/\hat{\sigma}^2)$, and $\{{\hat{\lambda}_i}\}$ denote eigenvalues of $XX^{\top}/n$. Set $(\alpha,\beta,M)=(0.2,0.1,500)$. \cite{ke2023estimation} estimate $(\hat{\sigma}^2, \hat{\theta})$ by $(\hat{\sigma}^2,\hat{\theta})=\arg \min\nolimits_{\theta}H(\theta)$, and estimate $r$ by $\hat{r}=\sharp\{i:\hat{\lambda}_i>\hat{T}\}$, where $H(\theta)=\min\nolimits_{\sigma^2}\left\{\textstyle \sum\nolimits_{\alpha (d\wedge n) \le k \le (1-\alpha) (d \wedge n)}[\hat{\lambda}_k-\sigma^2\bar{F}_{\gamma_n}^{-1}(k/(d \wedge n);1,\theta)]^2 \right\}$, $\bar{F}_{\gamma_n}^{-1}(k/(d \wedge n);1,\theta)$ is the $(k/(d \wedge n))$-upper-quantile of ${F}_{\gamma_n}(k/(d \wedge n);1,\theta)$, and where $\hat T$ is the $(1-\beta)$-quantile of $\{\hat \lambda_{1 (m)}\}$  under the null model. 
Since missingness alters the LSD of $XX^{\top}/n$, $(\hat{\sigma}^2, \hat{\theta})$ becomes biased, which in turn may induce bias in $\hat{r}$.
\end{enumerate}

Therefore, methods other than $\widetilde{\textbf{CV}}$ become infeasible for data with missingness. Following \cite{jin2021factor}, we modify \textbf{PC}, \textbf{IC}, \textbf{ED}, \textbf{ER}, and \textbf{GR}, and adapt \textbf{BEMA} so that they can be applied to incomplete  data. The specific modifications are as follows:
\begin{enumerate}
\item \textbf{M-ED}, \textbf{M-ER}, and \textbf{M-GR}: We compute ED, ER  and GR using eigenvalues of $X^oX^{o\top}/(\hat{p}^2n)$, where $\hat{p}$ is the estimated overall non-missing rate. 

\item \textbf{M-PC}, \textbf{M-IC}: We replace $V(k,\hat{F}^k)$ in (\ref{PC}) and (\ref{IC}) by $\widetilde{V}(k)=(\hat{p}dn)^{-1}\sum\nolimits_{(i,j)\in \Omega}(X^o-\hat{C}_{k,ij}^{(l^*)})$\footnote{We avoid replacing $XX^{\top}/n$ with $X^oX^{o\top}/(\hat{p}^2n)$ as this is inefficient and performs poorly under high missing rate.}, where $\hat{C}_{k,ij}^{(l^*)}$ is the iterated estimator proposed in Section 2 of \cite{jin2021factor}. 
        
\item \textbf{BEMA}: When $K=1$, let $\{\hat {\lambda}^o_i\}$ denote eigenvalues of $X^oX^{o\top}/n$, $\hat \sigma_o^2$ be its variance, and $\hat T^o$ correspond to $\hat{T}$. As shown in Section \ref{method}, $X^oX^{o\top}/n \overset{d}{=}pXX^{\top}/n$ and $\hat\lambda^o_i \overset{d}{=}p\hat{\lambda}_i$. Consequently, $\hat\sigma_o^2=p^2\hat{\sigma}^2$ and $\hat T^o=p\hat{T}$, and hence $(\hat{\sigma}^2/\hat{p}^2,\hat{\theta})=\arg \min\nolimits_{\theta}H(\theta)$ and $\hat{r}=\sharp\{i:\hat{\lambda}^o_i/\hat{p}>\hat{T}^o/\hat{p}\}$. Clearly, this updated $\hat{r}$ coincides with the original one. 
When $K>1$, the LSD of $X^oX^{o\top}/n$ is no longer explicitly related to $F_{\gamma_n}(x; \sigma^2, \theta)$, and a suitable adjustment is not available. We therefore retain the original BEMA. For simplicity of notation, both versions are referred to as BEMA.
\end{enumerate}

\subsection{Isotropic case}\label{Homoscedasticity_case}
Consider three models based on (\ref{simulation_X^o}) with $r=5$ and $e_i\stackrel{i.i.d.}{\sim}\mathcal{N}(0,I_d)$, as follows:
\begin{itemize}
\item[] \textbf{Model \ 1.} $\gamma=0.5$, $D_1={\rm diag}(3, 2.5, 2, 1.5, 1.1)$;
\item[] \textbf{Model \ 2.} $\gamma=1$, $D_1={\rm diag}(3.5, 3, 2.5, 2, 1.3)$;
\item[] \textbf{Model \ 3.} $\gamma=2$, $D_1={\rm diag}(3.5, 3, 2.5, 2, 1.6)$.
\end{itemize} 
For each model, we consider two settings: homogeneous missingness ($K=1$) and heterogeneous missingness ($K=3$). For $K=1$, we take $p\in \{1, 0.9, 0.7\}$, where $p=1$ means complete data (see Table~\ref{homoscedastic models with K=1}). For $K=3$, we set $(p_1,p_2,p_3)=(0.9,0.8,0.8),(0.8,0.7,0.6),(0.6,0.5,0.4)$ (see Table~\ref{homoscedastic models with K=3}). Table~\ref{homoscedastic models with K=1 and different p} further examines the sensitivity of  MATE and $\widetilde{{\rm CV}}$ to different missing rates.

Tables \ref{homoscedastic models with K=1}, \ref{homoscedastic models with K=1 and different p} and \ref{homoscedastic models with K=3} consistently show the superiority of MATE over the competitors. In general, MATE accurately estimates $r$ in both homogeneous and heterogeneous missingness regimes, even under high missing rates. In Table \ref{homoscedastic models with K=1}, MATE provides more accurate estimates, which  converge to the true values as $n$ increases, and remains stable across different missing rates, whereas the competing methods underestimate the factor number. Among these competitors, $\widetilde{\rm CV}$ performs best, with estimates converging as $n$ increases only in Model 2, while others do not exhibit clear convergence behavior. In Table \ref{homoscedastic models with K=1 and different p}, MATE remains stable in Models 1 and 2 and tends to underestimate $r$ as the non-missing rate decreases in Model 3, while $\widetilde{\rm CV}$ deteriorates as the missing rate increases across all models. Moreover, $\widetilde{\rm CV}$ is more computationally expensive than MATE since it requires iterations, and higher missing rates further increase the number of required iterations. 
In Table \ref{homoscedastic models with K=3}, MATE remains effective across three heterogeneous missingness settings, whereas the competitors still underestimate $r$. 

{\footnotesize
\setlength\tabcolsep{3pt} 
\begin{longtable}{ccccccccc}
\caption{Means (with MSEs in brackets) of the estimated number of common factors for isotropic models with homogeneous missing data ($K=1$) based on 500 simulations.}
\label{homoscedastic models with K=1}\\
\hline
\makebox[0.05\textwidth][c]{Model} &\makebox[0.08\textwidth][c]{$(d,n,p)$}  & \makebox[0.08\textwidth][c]{MATE}& \makebox[0.08\textwidth][c]{$\widetilde{\rm CV}$} & \makebox[0.08\textwidth][c]{M-ED} & \makebox[0.08\textwidth][c]{M-GR} & \makebox[0.08\textwidth][c]{M-ER} & \makebox[0.08\textwidth][c]{M-PC} & \makebox[0.08\textwidth][c]{M-IC}\\
\hline  
\endfirsthead

\multicolumn{9}{c}
{{\tablename\ \thetable{} -- Continued from previous page}} \\
\hline
 \makebox[0.05\textwidth][c]{Model} &\makebox[0.08\textwidth][c]{$(d,n,p)$}  & \makebox[0.08\textwidth][c]{MATE}& \makebox[0.08\textwidth][c]{$\widetilde{\rm CV}$} & \makebox[0.08\textwidth][c]{M-ED} & \makebox[0.08\textwidth][c]{M-GR} & \makebox[0.08\textwidth][c]{M-ER} & \makebox[0.08\textwidth][c]{M-PC} & \makebox[0.08\textwidth][c]{M-IC}\\
\hline  
\endhead

\hline \multicolumn{9}{r}{\textit{Continued on next page}} \\
\endfoot

\hline
\endlastfoot

\multirow{6}{*}{1} &(250,500,1) & $4.93_{(0.10)}$ &$ 3.10_{(3.69)} $&$ 1.18_{(16.04)} $&$ 2.30_{(7.96)}$ & $2.40_{(7.42)}$ & $1.93_{(9.50)}$ & $1.39_{(13.28)}$ \\
& (500, 1000, 1) &$ 5.02_{(0.03)}$ &$ 3.07_{(3.78)}$ & $0.61_{(20.30)}$ & $2.30_{(7.84)}$ & $2.39_{(7.37)}$& $1.04_{(15.75)}$ & $1.00_{(15.99)} $\\ 
&(250, 500, 0.9) & $4.90_{(0.15)} $& $3.00_{(3.99)}$ & $0.59_{(20.42)}$ &$ 2.25_{(8.16)}$ & $2.31_{(7.86)}$ &$ 1.99_{(9.10)}$ & $1.59_{(11.88)}$ \\
& (500, 1000, 0.9)&$ 4.99_{(0.07)}$ &$ 3.00_{(4.00)}$ & $0.22_{(23.30)}$ & $2.17_{(8.58)}$ & $2.22_{(8.29)}$ &$ 1.14_{(15.05)}$ & $1.00_{(15.97)}$\\
&(250, 500, 0.7) & $4.84_{(0.33)}$ & $2.81_{(4.97)}$ &$ 0.07_{(24.41)}$ & $2.00_{(9.59)}$ & $2.04_{(9.33)}$ & $2.04_{(8.78)}$ & $1.93_{(9.49)}$ \\
& (500, 1000, 0.7) & $4.85_{(0.28)}$ &$ 2.95_{(4.25)}$ & $0.01_{(24.95)}$ &$ 1.92_{(10.03)} $& $1.96_{(9.83)}$ & $1.76_{(10.68)}$ & $1.20_{(14.63)}$\\
\hline 
\multirow{6}{*}{2}&(250, 250, 1) & $4.85_{(0.16)}$& $3.80_{(1.60)}$ &$ 1.85_{(12.42)}$ & $2.92_{(5.46)}$ & $3.13_{(4.49)}$ & $2.22_{(7.91)}$& $1.88_{(9.87)}$ \\
& (500, 500, 1) & $4.98_{(0.03)}$ & $3.95_{(1.15)}$ &$ 1.10_{(17.56)}$ & $3.27_{(3.91)} $& $3.37_{(3.47)} $& $1.82_{(10.26)}$ & $1.36_{(13.47)}$ \\
&(250, 250, 0.9) & $4.74_{(0.29)}$ & $3.47_{(2.58)}$ &$ 1.07_{(17.38)}$ & $2.72_{(6.46)}$ & $2.90_{(5.54)}$ &$ 2.38_{(7.08)}$ &$ 1.95_{(9.40)}$ \\ 
&(500, 500, 0.9) &$ 4.94_{(0.11)}$& $3.64_{(2.07)}$ & $0.33_{(22.60)}$ & $3.12_{(4.54)}$ & $3.20_{(4.16)}$ & $1.94_{(9.45)}$ & $1.53_{(12.28)}$ \\
&(250, 250, 0.7) & $4.78_{(0.38)}$ & $2.93_{(4.39)} $& $0.11_{(24.14)} $& $2.25_{(8.49)}$ & $2.39_{(7.82)}$ &$ 2.92_{(4.40)}$ & $2.27_{(7.67)}$ \\
&(500, 500, 0.7)& $4.81_{(0.32)}$ & $3.00_{(4.00)}$ & $0.02_{(24.84)}$ & $2.36_{(7.95)} $&$ 2.42_{(7.62)}$ & $2.02_{(8.89)}$ & $1.92_{(9.57)} $\\
\hline
\multirow{6}{*}{3}& (500, 250, 1)& $ 4.93_{(0.08)}$ & $2.98_{(4.15)} $& $0.07_{(24.40)}$ & $2.03_{(9.61)} $& $2.09_{(9.23)}$ & $1.00_{(15.99)}$ & $1.00_{(16.00)}$ \\
&(1000, 500, 1) &$ 5.01_{(0.01)} $& $3.00_{(4.00)}$ &$ 0.00_{(25.00)}$ & $2.02_{(9.61)}$& $2.06_{(9.36)}$ & $1.00_{(16.00)}$ & $1.00_{(16.00)}$ \\
&(500, 250, 0.9) &$ 4.90_{(0.14)}$& $2.79_{(5.05)}$ & $0.03_{(24.77)}$ & $2.02_{(9.56)}$ & $2.07_{(9.30)}$ & $1.05_{(15.66)} $& $1.00_{(16.00)}$\\
&(1000, 500, 0.9)& $4.98_{(0.05)}$ &$ 2.92_{(4.41)}$ & $0.00_{(25.00)}$ & $1.91_{(10.23)}$ & $1.93_{(10.07)}$ & $1.00_{(16.00)}$ & $1.00_{(16.00)}$\\
&(500, 250, 0.7) & $4.60_{(0.46)}$ & $2.02_{(9.01)}$ & $0.00_{(24.97)}$ & $1.77_{(10.97)} $&$ 1.80_{(10.82)}$ & $1.47_{(12.74)}$ & $1.01_{(15.92)} $\\
&(1000, 500, 0.7)&$ 4.74_{(0.29)}$ & $2.05_{(8.75)}$ & $0.00_{(25.00)}$ & $1.78_{(10.92)}$ &$ 1.79_{(10.84)}$ & $1.00_{(16.00)}$ & $1.00_{(16.00)}$\\ 
\end{longtable} 
}

{\footnotesize
\setlength\tabcolsep{3pt} 
\begin{longtable}{c|ccccc|ccccc}
\caption{Means (with MSEs in brackets) of the estimated number of common factors for isotropic models with five different homogeneous missing data ($K=1$) based on 500 simulations.}
\label{homoscedastic models with K=1 and different p}\\
\hline
\multirow{2}{*}{\makebox[0.13\textwidth][c]{$(d,n, {\rm Model})$}} &\multicolumn{5}{c}{MATE}  &\multicolumn{5}{|c}{$\widetilde{\rm CV}$} \\ 
\cline{2-11}
& \makebox[0.05\textwidth][c]{$p$=$1$} & \makebox[0.05\textwidth][c]{$p$=$0.9$}& \makebox[0.05\textwidth][c]{$p$=$0.7$} & \makebox[0.05\textwidth][c]{$ p$=$0.5$}&\makebox[0.05\textwidth][c]{$p$=$0.3$} & \makebox[0.05\textwidth][c]{$p$=$1$} & \makebox[0.05\textwidth][c]{$p$=$0.9$}& \makebox[0.05\textwidth][c]{$p$=$0.7$} & \makebox[0.05\textwidth][c]{$p$=$0.5$}&\makebox[0.05\textwidth][c]{$p$=$0.3$} \\ 
\hline  
\endfirsthead

\multicolumn{9}{c}%
{{\tablename\ \thetable{} -- Continued from previous page}} \\
\hline
\multirow{2}{*}{\makebox[0.13\textwidth][c]{$(d,n, {\rm Model})$}} &\multicolumn{5}{c}{MATE}  &\multicolumn{5}{|c}{$\widetilde{\rm CV}$} \\ 
\cline{2-11}
& \makebox[0.05\textwidth][c]{$p$=$1$} & \makebox[0.05\textwidth][c]{$p$=$0.9$}& \makebox[0.05\textwidth][c]{$p$=$0.7$} & \makebox[0.05\textwidth][c]{$ p$=$0.5$}&\makebox[0.05\textwidth][c]{$p$=$0.3$} & \makebox[0.05\textwidth][c]{$p$=$1$} & \makebox[0.05\textwidth][c]{$p$=$0.9$}& \makebox[0.05\textwidth][c]{$p$=$0.7$} & \makebox[0.05\textwidth][c]{$p$=$0.5$}&\makebox[0.05\textwidth][c]{$p$=$0.3$} \\ 
\hline  
\endhead

\hline \multicolumn{9}{r}{\textit{Continued on next page}} \\
\endfoot

\hline
\endlastfoot

\multirow{2}{*}{(250, 500, 1)} & 4.92 & 4.89 & 4.73 & 4.74 & 4.84 & 3.08 & 3.00 & 2.77 & 1.99 & 1.00\\
& (0.08) & (0.16) & (0.37) & (0.45) & (0.58 )& (3.75) & (3.99) & (5.16) &(9.13) &(15.99)\\
\hline
\multirow{2}{*}{(500, 1000, 1)}& 5.02& 4.98 &4.79 &4.70 &4.40 &3.07 &3.00 &2.94 &2.00 &1.00 \\
& (0.02) &(0.05) &(0.31) &(0.45) & (0.74) &(3.79) & (3.99) &(4.31) &(9.00) &(15.99)\\
\hline
\multirow{2}{*}{(250, 250, 2)} & 4.85 &4.81 &4.81 &5.03& 5.13 &3.82 &3.49 &2.94 &2.03 &1.03 \\
&(0.16) &(0.24) &(0.34) &(0.43) &(0.64) &(1.55) &(2.52)& (4.35) &(8.99) & (15.78) \\
\hline
\multirow{2}{*}{(500, 500, 2)}& 4.97 &4.90 & 4.78 &4.78 &4.81 &3.95 &3.65 &3.00 &2.09 &1.02\\
&(0.05) &(0.14) &(0.32) & (0.42) &(0.51) &(1.15) &(2.04) &(4.02) &(8.55) &(15.87) \\
\hline
\multirow{2}{*}{(500, 250, 3)} &4.88 &4.86 &4.63 &4.30 &3.69 &2.96 &2.80 &2.04 &1.13 &1.00\\
& (0.12)& (0.16) & (0.41) &(0.83)	&(2.17) &(4.24) &(5.00) &(8.88) &(15.09) &(16.00)\\
\hline
\multirow{2}{*}{(1000, 500, 3)}&5.00 &5.01 &4.75 &4.37 &3.56 &3.00 &2.93 &2.03 &1.07 &1.00\\
&(0.01) &(0.04) &(0.30) &(0.73) &(2.43) &(3.99) &(4.34) &(8.87) &(15.48) &(16.00)\\
\end{longtable}
}

{\footnotesize
\setlength\tabcolsep{3pt} 
\begin{longtable}{ccccccccc}
\caption{Means (with MSEs in brackets) of the estimated number of common factors for isotropic models with heterogeneous missing data ($K=3$) based on 500 simulations.}
\label{homoscedastic models with K=3} \\
\hline
\makebox[0.05\textwidth][c]{Model} &\makebox[0.1\textwidth][c]{$(d,n,p_1,p_2,p_3)$}  & \makebox[0.08\textwidth][c]{MATE}& \makebox[0.08\textwidth][c]{$\widetilde{\rm CV}$} & \makebox[0.08\textwidth][c]{M-ED} & \makebox[0.08\textwidth][c]{M-GR} & \makebox[0.08\textwidth][c]{M-ER} & \makebox[0.08\textwidth][c]{M-PC} & \makebox[0.08\textwidth][c]{M-IC}\\
\hline  
\endfirsthead

\multicolumn{9}{c}%
{{\tablename\ \thetable{} -- Continued from previous page}} \\
\hline
 \makebox[0.05\textwidth][c]{Model} &\makebox[0.1\textwidth][c]{$(d,n,p_1,p_2,p_3)$}  & \makebox[0.08\textwidth][c]{MATE}& \makebox[0.08\textwidth][c]{$\widetilde{\rm CV}$} & \makebox[0.08\textwidth][c]{M-ED} & \makebox[0.08\textwidth][c]{M-GR} & \makebox[0.08\textwidth][c]{M-ER} & \makebox[0.08\textwidth][c]{M-PC} & \makebox[0.08\textwidth][c]{M-IC}\\
\hline  
\endhead

\hline \multicolumn{9}{r}{\textit{Continued on next page}} \\
\endfoot

\hline
\endlastfoot

\multirow{6}{*}{1} &(250, 500, 0.9, 0.8, 0.7)  & $4.71_{(0.33)}$ & $2.99_{(4.06)}$ & $0.27_{(24.82)}$ & $2.04_{ (9.39)}$ & $2.13_{(8.85)}$ & $2.00_{(9.03)}$ & $1.77_{(10.58)}$\\
&(500, 1000, 0.9, 0.8, 0.7) &$4.87_{(0.18)}$ & $3.00_{(4.00)}$ & $0.04_{(24.64)} $& $2.14_{(8.71)} $& $2.19_{(8.43)}$ & $1.38_{(13.37)}$& $1.03_{(15.78)}$ \\  
&(250, 500, 0.8, 0.7, 0.6) &$ 4.75_{ (0.33)}$& $2.75_{(5.23)} $& $0.12_{(24.06)}$ & $1.96_{(9.78)} $& $2.02_{(9.45)}$& $2.03_{(8.84)}$ & $1.94_{(9.39)}$  \\
& (500, 1000, 0.8, 0.7, 0.6)&$ 4.79_{(0.32)}$ & $2.95_{(4.26)}$ & $0.01_{(24.93)}$ & $1.93_{(10.00)} $& $1.96_{(9.77)}$ &$ 1.78_{ (10.53) }$ & $1.18_{(14.77)}$ \\
&(250, 500, 0.6, 0.5, 0.4) & $4.62_{(0.51)} $& $1.99_{(9.11)} $& $0.00_{(25.00)} $& $1.72_{ (11.20)}$ &$ 1.75_{(11.05)}$ & $2.99_{ (4.03) }$ & $2.33_{(7.36)}$\\
&(500, 1000, 0.6, 0.5, 0.4)&$4.66_{(0.46) }$ & $2.00_{(9.01)}$ & $0.00_{(25.00)}$ &$ 1.64_{(11.73) }$ & $1.65_{(11.64) }$ & $2.01_{(8.96)}$ & $1.99_{(9.04)}$\\
\hline  
\multirow{6}{*}{2}&(250, 250, 0.9, 0.8, 0.7) & $4.68_{(0.37)}$ & $3.15_{ (3.55)}$& $0.38_{(22.14)}$ & $2.50_{(7.46)}$ & $2.64_{(6.78)}$ & $2.68_{(5.62)}$ &$ 2.04_{(8.82)}$  \\
&(500, 500, 0.9, 0.8, 0.7) & $4.78_{(0.26)} $& $3.20_{(3.41)}$ & $0.07_{(24.48)}$ &$2.68_{(6.64)}$ & $2.79_{(6.08)} $& $1.99_{(9.08)}$ & $1.73_{(10.88)}$\\
&(250, 250, 0.8, 0.7, 0.6)  & $4.65_{(0.40)}$ & $2.93_{(4.41)}$ & $0.10_{(24.19)}$ & $2.23_{(8.64)}$ & $2.34_{(8.08)}$ &$2.91_{(4.44)}$ & $2.25_{(7.75)}$ \\
&(500, 500, 0.8, 0.7, 0.6) & $4.77_{(0.33)} $& $3.00_{(4.00)}$ & $0.03_{(24.76)}$ & $2.33_{(8.04)} $& $2.39_{(7.75)} $& $2.02_{(8.93)} $& $1.91_{(9.64) }$\\
&(250, 250, 0.6, 0.5, 0.4) & $4.78_{(0.45)}$ & $2.06_{(8.78)}$& $0.00_{(25.00)}$ & $1.85_{(10.58)}$ & $1.88_{(10.38)}$ & $4.06_{(0.99)}$ & $3.14_{(3.62)}$ \\
&(500, 500, 0.6, 0.5, 0.4)&$ 4.63_{(0.46)} $&$ 2.12_{(8.41)} $& $0.00_{(25.00)}$ &$ 1.75_{(11.12)}$ & $1.79_{(10.91)} $& $2.96_{(4.21)}$ & $2.34_{(7.32)}$ \\
\hline
\multirow{6}{*}{3}&(500, 250, 0.9, 0.8, 0.7)&  $4.65_{(0.35)}$ & $2.37_{(7.13)}$ &$ 0.00_{(24.96)} $& $1.87_{(10.42)}$ & $1.93_{(10.05)} $& $1.14_{(15.02)}$ & $1.00_{(15.99)}$ \\
&(1000, 500, 0.9, 0.8, 0.7) &$ 4.91_{(0.15)}$ & $2.50_{(6.51)}$ & $0.00_{(25.00)} $& $1.91_{(10.15) }$ & $1.92_{ (10.07)}$ &$ 1.00_{(16.00)} $& $1.00_{(16.00)}$ \\
&(500, 250, 0.8, 0.7, 0.6) &$ 4.55_{(0.48)}$ &$ 2.05_{(8.84)}$ &$ 0.00_{(24.98)}$ & $1.80_{(10.80)}$ &$ 1.83_{(10.61)}$ & $1.44_{(12.92)}$ & $1.02_{(15.89)}$  \\
&(1000, 500, 0.8, 0.7, 0.6)& $4.72_{(0.34)} $& $2.04_{(8.81)}$ & $0.00_{(25.00)} $& $1.81_{(10.75)} $& $1.82_{ (10.67)} $&$ 1.00_{(16.00)}$ & $1.00_{(16.00)}$\\ 
&(500, 250, 0.6, 0.5, 0.4) & $4.14_{(1.05)}$ & $1.13_{(15.12)} $& $0.00_{(25.00)}$& $1.60_{(12.01)}$ & $1.60_{(11.99)}$ &$ 2.59_{(6.06)}$ & $1.66_{(11.37)}$ \\
&(1000, 500, 0.6, 0.5, 0.4)& $4.26_{(0.84)}$ & $1.10_{(15.31)}$& $0.00_{(25.00)}$ & $1.55_{(12.32)}$ & $1.56_{(12.28)}$& $1.32_{(13.76)}$ & $1.01_{ (15.92)}$ \\
\end{longtable}
}

\subsection{Anisotropic case}\label{Heteroscedasticity_case}
Consider two models based on (\ref{simulation_X^o}) with $\varepsilon_i=\Sigma_{\varepsilon}^{1/2}e_i\stackrel{i.i.d.}{\sim}\mathcal{N}(0,\Sigma_{\varepsilon})$ as follows:
\begin{itemize}
\item[]\textbf{Model \ 4.} $r=0$, $\Sigma_{\varepsilon}={\rm diag}(\sigma_1^2,\cdots,\sigma_d^2)$ with $\sigma_i^2\overset{i.i.d.}{\sim} {\rm Gamma} (\theta,\theta/\sigma^2)$, where $\theta=3$, $\sigma^2=1$. 
\item[]\textbf{Model \ 5.}  $r=3$, $D_1=(3.5,3,2.4)$, $\Sigma_{\varepsilon}={\rm diag}(\sigma_1^2,\cdots,\sigma_d^2)$ with $\sigma_i^2\overset{i.i.d.}{\sim} {\rm Gamma} (\theta,\theta/\sigma^2)$, where $\theta=3$, $\sigma^2=1$. 
\end{itemize}

We first compare MATE with BEMA by estimating $\theta$ and $\sigma^2$ in Model 4 for $p\in \{1,0.9,0.7\}$ (see Table \ref{heteroscedastic Model 4 with K=1}). We then examine ($\hat{r}, \hat{\theta}, \hat{\sigma}^2$) for both methods in Model 5 under homogeneous ($K=1$, $p\in \{1,0.9,0.7\}$) and heterogeneous ($K=3$; $n_1=n_2=n_3=n/3$, $(p_1,p_2,p_3)=(0.9,0.8,0.7), (0.8,0.6,0.4)$) missingness (see Table \ref{heteroscedastic Model 5 with K=1} and \ref{heteroscedastic Model 5 with K=3}).

In Table \ref{heteroscedastic Model 4 with K=1}, both MATE and BEMA estimate $(\theta, \sigma^2)$ effectively in the pure noise setting, regardless of the missing rate. Table \ref{heteroscedastic Model 5 with K=1} shows that, when $K=1$, ($\hat{r}, \hat{\theta}, \hat{\sigma}^2$) produced by both methods remain accurate and efficient across different values of $d/n$ and $p$. In addition, the performance of both methods deteriorates as $p$ decreases. In Table \ref{heteroscedastic Model 5 with K=3}, MATE continues to perform well across all missingness levels. For BEMA, when $(p_1,p_2,p_3)=(0.9,0.8,0.7)$, $\hat{r}$ remains accurate while $(\hat{\theta},\hat{\sigma}^2)$ are slightly underestimated; when $(p_1,p_2,p_3)=(0.8,0.6,0.4)$, all three estimates exhibit underestimation.  

{\footnotesize
\setlength\tabcolsep{3pt} 
\begin{longtable}{c|cc|cc|cc|cc|cc|cc}
\caption{Means (with MSEs in brackets) of the estimated number of common factors for anisotropic Model 4 with homogeneous missing data ($K=1$) based on 500 simulations.}
\label{heteroscedastic Model 4 with K=1} \\
\hline
\multirow{3}{*}{\makebox[0.1\textwidth][c]{$(d,n)$}} &\multicolumn{4}{c}{$p$=$1$}  &\multicolumn{4}{|c}{$p$=$0.9$} &\multicolumn{4}{|c}{$p$=$0.7$}\\ 
\cline{2-13}
& \multicolumn{2}{c}{$\hat{\theta}$}& \multicolumn{2}{|c}{$\hat{\sigma}^2$} & \multicolumn{2}{|c}{$\hat{\theta}$}& \multicolumn{2}{|c}{$\hat{\sigma}^2$}& \multicolumn{2}{|c}{$\hat{\theta}$}& \multicolumn{2}{|c}{$\hat{\sigma}^2$}\\
\cline{2-13}
&{\scriptsize MATE} & {\scriptsize BEMA}&{\scriptsize MATE} & {\scriptsize BEMA} &{\scriptsize MATE} & {\scriptsize BEMA} &{\scriptsize MATE} & {\scriptsize BEMA} &{\scriptsize MATE} & {\scriptsize BEMA} &{\scriptsize MATE} & {\scriptsize BEMA} \\
\hline  
\endfirsthead

\multicolumn{13}{c}%
{{\tablename\ \thetable{} -- Continued from previous page}} \\
\hline
 \multirow{3}{*}{\makebox[0.1\textwidth][c]{$(d,n)$}} &\multicolumn{4}{c}{$p$=$1$}  &\multicolumn{4}{|c}{$p$=$0.9$} &\multicolumn{4}{|c}{$p$=$0.7$}\\ 
\cline{2-13}
& \multicolumn{2}{c}{$\hat{\theta}$}& \multicolumn{2}{|c}{$\hat{\sigma}^2$} & \multicolumn{2}{|c}{$\hat{\theta}$}& \multicolumn{2}{|c}{$\hat{\sigma}^2$}& \multicolumn{2}{|c}{$\hat{\theta}$}& \multicolumn{2}{|c}{$\hat{\sigma}^2$}\\
\cline{2-13}
&{\scriptsize MATE} & {\scriptsize BEMA}&{\scriptsize MATE} & {\scriptsize BEMA} &{\scriptsize MATE} & {\scriptsize BEMA} &{\scriptsize MATE} & {\scriptsize BEMA} &{\scriptsize MATE} & {\scriptsize BEMA} &{\scriptsize MATE} & {\scriptsize BEMA} \\
\hline  
\endhead

\hline \multicolumn{13}{r}{\textit{Continued on next page}} \\
\endfoot

\hline
\endlastfoot

\multirow{2}{*}{(250, 500)}&3.03	&3.06	&1.00	&1.00	&3.04	&3.07	&1.00	&1.00	&3.01	&3.08	&1.00	&1.00\\
&(0.09)	&(0.11)	&(0.00)	&(0.00)	&(0.09)	&(0.12)	&(0.00)	&(0.00)	&(0.11)	&(0.14)	&(0.00)	&(0.00)\\
\hline
\multirow{2}{*}{(500, 1000)}&3.01	&3.02	&1.00	&1.00	&3.00	&3.02	&1.00	&1.00	&3.00	&3.01	&1.00	&1.00\\
&(0.05)	&(0.06)	&(0.00)	&(0.00)	&(0.05)	&(0.05)	&(0.00)	&(0.00)	&(0.05)	&(0.05)	&(0.00)	&(0.00)\\
\hline
\multirow{2}{*}{(250, 250)} &2.97	&3.05	&1.00	&1.00	&3.01	&3.11	&1.00	&1.00	&2.95	&3.04	&1.00	&1.00\\
&(0.12)	&(0.15)	&(0.00)	&(0.00)	&(0.11)	&(1.03)	&(0.00)	&(0.00)	&(0.11)	&(0.17)	&(0.00)	&(0.00)\\
\hline
\multirow{2}{*}{(500, 500)}&3.02	&3.04	&1.00	&1.00	&3.00	&3.03	&1.00	&1.00	&2.95	&2.98	&1.00	&1.00\\
&(0.05)	&(0.06)	&(0.00)	&(0.00)	&(0.06)	&(0.07)	&(0.00)	&(0.00)	&(0.05)	&(0.05)	&(0.00)	&(0.00)\\
\hline
\multirow{2}{*}{(500, 250)} &2.99	&3.05	&1.00	&1.00	&2.95	&3.02	&1.00	&1.00	&2.93	&2.99	&1.00	&1.00\\
&(0.06)	&(0.09)	&(0.00)	&(0.00)	&(0.06)	&(0.08)	&(0.00)	&(0.00)	&(0.07)	&(0.07)	&(0.00)	&(0.00)\\
\hline
\multirow{2}{*}{(1000, 500)}&2.99	&3.01	&1.00	&1.00	&2.99	&3.00	&1.00	&1.00	&2.95	&2.98	&1.00	&1.00\\
&(0.02)	&(0.03)	&(0.00)	&(0.00)	&(0.03)	&(0.03)	&(0.00)	&(0.00)	&(0.03)	&(0.03)	&(0.00)	&(0.00)\\
\end{longtable}
}

{\footnotesize
\setlength\tabcolsep{3pt} 
\begin{longtable}{c|cc|cc|cc}
\caption{Means (with MSEs in brackets) of the estimated number of common factors for anisotropic Model 5 with homogeneous missing data ($K=1$) based on 500 simulations.}
\label{heteroscedastic Model 5 with K=1}\\
\hline
\multirow{2}{*}{\makebox[0.13\textwidth][c]{$(d,n,p)$}} &\multicolumn{2}{c}{$\hat{r}$}  &\multicolumn{2}{|c}{$\hat{\theta}$}& \multicolumn{2}{|c}{$\hat{\sigma}^2$} \\ 
\cline{2-7}
& \makebox[0.05\textwidth][c]{MATE} & \makebox[0.05\textwidth][c]{BEMA}& \makebox[0.05\textwidth][c]{MATE} & \makebox[0.05\textwidth][c]{BEMA}&\makebox[0.05\textwidth][c]{MATE} & \makebox[0.05\textwidth][c]{BEMA} \\
\hline  
\endfirsthead

\multicolumn{7}{c}%
{{\tablename\ \thetable{} -- Continued from previous page}} \\
\hline
\multirow{2}{*}{\makebox[0.13\textwidth][c]{$(d,n,p)$}} &\multicolumn{2}{|c}{$\hat{r}$}  &\multicolumn{2}{|c}{$\hat{\theta}$}& \multicolumn{2}{|c}{$\hat{\sigma}^2$} \\ 
\cline{2-7}
& \makebox[0.05\textwidth][c]{MATE} & \makebox[0.05\textwidth][c]{BEMA}& \makebox[0.05\textwidth][c]{MATE} & \makebox[0.05\textwidth][c]{BEMA}&\makebox[0.05\textwidth][c]{MATE} & \makebox[0.05\textwidth][c]{BEMA} \\
\hline  
\endhead

\hline \multicolumn{7}{r}{\textit{Continued on next page}} \\
\endfoot

\hline
\endlastfoot

(250, 500, 1)	&$3.13_{(0.19)}$	&$3.06_{(0.07)}$	&$3.09_{(0.11)}$	&$2.92_{(0.09)}$	&$0.99_{(0.00)}$	&$1.02_{(0.00)}$	\\
(500, 1000, 1)	&$3.11_{(0.11)}$	&$3.06_{(0.06)}$	&$3.06_{(0.05)}$	&$2.96_{(0.05)}$	&$1.00_{(0.00)}$	&$1.01_{(0.00)}$	\\
(500, 250, 1)	&$3.07_{(0.07)}$	&$3.02_{(0.03)}$	&$2.84_{(0.07)}$	&$2.88_{(0.07)}$	&$0.98_{(0.00)}$	&$1.02_{(0.00)}$	\\
(1000, 500, 1)	&$3.08_{(0.08)}$	&$3.04_{(0.04)}$	&$2.90_{(0.03)}$	&$2.91_{(0.03)}$	&$0.99_{(0.00)}$	&$1.01_{(0.00)}$	\\
(250, 250, 1)	&$3.10_{(0.11)}$	&$3.03_{(0.03)}$	&$2.99_{(0.10)}$	&$2.89_{(0.12)}$	&$0.98_{(0.00)}$	&$1.03_{(0.00)}$	\\
(500, 500, 1)	&$3.12_{(0.14)}$	&$3.06_{(0.07)}$	&$3.00_{(0.05)}$	&$2.94_{(0.05)}$	&$0.99_{(0.00)}$	&$1.01_{(0.00)}$	\\\hline
(250, 500, 0.9)	&$3.13_{(0.15)}$	&$3.07_{(0.07)}$	&$3.16_{(0.14)}$	&$2.99_{(0.11)}$	&$1.00_{(0.00)}$	&$1.03_{(0.00)}$	\\
(500, 1000, 0.9)	&$3.14_{(0.16)}$	&$3.08_{(0.08)}$	&$3.10_{(0.06)}$	&$3.00_{(0.05)}$	&$1.00_{(0.00)}$	&$1.02_{(0.00)}$	\\
(500, 250, 0.9)	&$3.03_{(0.07)}$	&$2.94_{(0.09)}$	&$2.84_{(0.07)}$	&$2.88_{(0.07)}$	&$0.99_{(0.00)}$	&$1.02_{(0.00)}$	\\
(1000, 500, 0.9)	&$3.08_{(0.08)}$	&$3.05_{(0.06)}$	&$2.93_{(0.03)}$	&$2.94_{(0.03)}$	&$0.99_{(0.00)}$	&$1.01_{(0.00)}$	\\
(250, 250, 0.9)	&$3.08_{(0.10)}$	&$3.03_{(0.13)}$	&$3.02_{(0.10)}$	&$2.98_{(1.66)}$	&$1.00_{(0.00)}$	&$1.04_{(0.00)}$	\\
(500, 500, 0.9)	&$3.12_{(0.14)}$	&$3.07_{(0.09)}$	&$3.03_{(0.05)}$	&$2.96_{(0.05)}$	&$0.99_{(0.00)}$	&$1.02_{(0.00)}$	\\\hline
(250, 500, 0.7)	&$3.07_{(0.09)}$	&$2.99_{(0.07)}$	&$3.25_{(0.19)}$	&$3.07_{(0.12)}$	&$1.02_{(0.00)}$	&$1.05_{(0.00)}$	\\
(500, 1000, 0.7)	&$3.09_{(0.12)}$	&$3.01_{(0.10)}$	&$3.13_{(0.07)}$	&$3.03_{(0.05)}$	&$1.01_{(0.00)}$	&$1.03_{(0.00)}$	\\
(500, 250, 0.7)	&$2.65_{(0.41)}$	&$2.47_{(0.54)}$	&$2.83_{(0.08)}$	&$2.92_{(0.07)}$	&$1.00_{(0.00)}$	&$1.03_{(0.00)}$	\\
(1000, 500, 0.7)	&$2.74_{(0.35)}$	&$2.62_{(0.43)}$	&$2.93_{(0.03)}$	&$2.95_{(0.03)}$	&$1.00_{(0.00)}$	&$1.02_{(0.00)}$	\\
(250, 250, 0.7)	&$2.97_{(0.19)}$	&$2.79_{(0.27)}$	&$3.08_{(0.13)}$	&$3.02_{(0.15)}$	&$1.01_{(0.00)}$	&$1.06_{(0.00)}$	\\
(500, 500, 0.7)	&$3.00_{(0.16)}$	&$2.87_{(0.20)}$	&$3.05_{(0.06)}$	&$3.01_{(0.05)}$	&$1.01_{(0.00)}$	&$1.03_{(0.00)}$	\\
\end{longtable}
}

{\footnotesize
\setlength\tabcolsep{3pt} 
\begin{longtable}{c|cc|cc|cc}
\caption{Means (with MSEs in brackets) of the estimated number of common factors for anisotropic Model 5 with heterogeneous missing data ($K=3$) based on 500 simulations.}
\label{heteroscedastic Model 5 with K=3}\\
\hline
\multirow{2}{*}{\makebox[0.13\textwidth][c]{$(d, n, p_1, p_2, p_3)$}} &\multicolumn{2}{c}{$\hat{r}$}  &\multicolumn{2}{|c}{$\hat{\theta}$}& \multicolumn{2}{|c}{$\hat{\sigma}^2$} \\ 
\cline{2-7}
& \makebox[0.05\textwidth][c]{MATE} & \makebox[0.04\textwidth][c]{BEMA}& \makebox[0.04\textwidth][c]{MATE} & \makebox[0.04\textwidth][c]{BEMA}&\makebox[0.04\textwidth][c]{MATE} & \makebox[0.04\textwidth][c]{BEMA} \\ 
\hline  
\endfirsthead

\multicolumn{7}{c}%
{{\tablename\ \thetable{} -- Continued from previous page}} \\
\hline
\multirow{2}{*}{\makebox[0.13\textwidth][c]{$(d, n, p_1, p_2, p_3)$}} &\multicolumn{2}{|c}{$\hat{r}$}  &\multicolumn{2}{|c}{$\hat{\theta}$}& \multicolumn{2}{|c}{$\hat{\sigma}^2$} \\ 
\cline{2-7}
& \makebox[0.05\textwidth][c]{MATE} & \makebox[0.04\textwidth][c]{BEMA}& \makebox[0.04\textwidth][c]{MATE} & \makebox[0.04\textwidth][c]{BEMA}&\makebox[0.04\textwidth][c]{MATE} & \makebox[0.04\textwidth][c]{BEMA} \\ 
\hline  
\endhead

\hline \multicolumn{7}{r}{\textit{Continued on next page}} \\
\endfoot

\hline
\endlastfoot

(250, 500, 0.9,0.8,0.7)	&$3.12_{(0.15)}$	&$3.04_{(0.04)}$	&$3.23_{(0.16)}$	&$2.99_{(0.09)}$	&$1.01_{(0.00)}$	&$0.83_{(0.03)}$	\\
(500, 1000, 0.9,0.8,0.7)	&$3.09_{(0.11)}$	&$3.05_{(0.05)}$	&$3.10_{(0.06)}$	&$2.95_{(0.05)}$	&$1.01_{(0.00)}$	&$0.82_{(0.03)}$	\\
(500, 250, 0.9,0.8,0.7)	&$2.96_{(0.14)}$	&$2.74_{(0.28)}$	&$2.88_{(0.07)}$	&$2.75_{(0.11)}$	&$1.00_{(0.00)}$	&$0.82_{(0.03)}$	\\
(1000, 500, 0.9,0.8,0.7)	&$3.03_{(0.10)}$	&$2.88_{(0.15)}$	&$2.94_{(0.03)}$	&$2.78_{(0.07)}$	&$1.00_{(0.00)}$	&$0.81_{(0.04)}$	\\
(250, 250, 0.9,0.8,0.7)	&$3.08_{(0.11)}$	&$2.92_{(0.12)}$	&$3.09_{(0.12)}$	&$2.88_{(0.11)}$	&$1.00_{(0.00)}$	&$0.84_{(0.03)}$	\\
(500, 500, 0.9,0.8,0.7)	&$3.10_{(0.10)}$	&$3.03_{(0.04)}$	&$3.04_{(0.05)}$	&$2.90_{(0.05)}$	&$1.00_{(0.00)}$	&$0.82_{(0.03)}$	\\\hline
(250, 500, 0.8,0.6,0.4)	&$3.08_{(0.21)}$	&$2.74_{(0.30)}$	&$3.32_{(0.25)}$	&$2.78_{(0.11)}$	&$1.03_{(0.00)}$	&$0.64_{(0.13)}$	\\
(500, 1000, 0.8,0.6,0.4)	&$3.04_{(0.19)}$	&$2.64_{(0.40)}$	&$3.16_{(0.09)}$	&$2.74_{(0.09)}$	&$1.01_{(0.00)}$	&$0.62_{(0.15)}$	\\
(500, 250, 0.8,0.6,0.4)	&$2.48_{(0.57)}$	&$1.86_{(1.46)}$	&$2.88_{(0.07)}$	&$1.92_{(1.19)}$	&$1.01_{(0.00)}$	&$0.62_{(0.14)}$	\\
(1000, 500, 0.8,0.6,0.4)	&$2.52_{(0.53)}$	&$1.93_{(1.22)}$	&$2.95_{(0.03)}$	&$1.94_{(1.12)}$	&$1.00_{(0.00)}$	&$0.61_{(0.15)}$	\\
 (250, 250, 0.8,0.6,0.4)	&$2.85_{(0.30)}$	&$2.35_{(0.67)}$	&$3.16_{(0.20)}$	&$2.47_{(0.37)}$	&$1.02_{(0.00)}$	&$0.64_{(0.13)}$	\\
(500, 500, 0.8,0.6,0.4)	&$2.88_{(0.26)}$	&$2.31_{(0.70)}$	&$3.08_{(0.07)}$	&$2.46_{(0.35)}$	&$1.01_{(0.00)}$	&$0.62_{(0.14)}$	\\
\end{longtable}
}

\subsection{Comparison with imputation-based methods}\label{imputation}
Imputation is a common approach for handling missing data. We first use complete data to select appropriate methods for estimating factor number in weak factor settings (see Table \ref{homoscedastic models with p=1}). Specifically, we consider seven approaches: MATE, $\widetilde{\rm CV}$, M-ED, M-GR, M-ER, M-PC, and M-IC. We then apply five imputation methods to incomplete data: zero imputation (ZI), mean imputation (MI), random forest (RF), expectation-maximization (EM) of \cite{jin2021factor}, and TP of \cite{cahan2023factor}. Next, we use the selected methods to estimate the number of factors on the imputed data, while MATE is directly applied to the original incomplete data. Finally, we compare the resulting estimates. We consider three models based on \eqref{simulation_X^o} with $r=5$ and $e_i\stackrel{i.i.d.}{\sim}\mathcal{N}(0,I_d)$: 
    \begin{itemize}
        \item[]\textbf{Model \ 6.} $\gamma=0.5$, $D_1={\rm diag}(3.5, 3, 2.5, 2.2, 1.8)$,
        \item[]\textbf{Model \ 7.} $\gamma=1$, $D_1={\rm diag}(4.5, 4, 3.5, 2.8, 2.2)$,
        \item[]\textbf{Model \ 8.} $\gamma=2$, $D_1={\rm diag}(4.5, 4, 3.5, 3, 2.5)$.
    \end{itemize}

{\footnotesize
\setlength\tabcolsep{3pt} 
\begin{longtable}{ccccccccc}
\caption{Means (with MSEs in brackets) of the estimated number of common factors for isotropic models with complete data based on 500 simulations.}
\label{homoscedastic models with p=1}\\
\hline
\makebox[0.05\textwidth][c]{Model} &\makebox[0.08\textwidth][c]{$(d,n)$}&\makebox[0.05\textwidth][c]{MATE} &\makebox[0.05\textwidth][c]{$\widetilde{\rm CV}$} &\makebox[0.05\textwidth][c]{M-ED}&\makebox[0.05\textwidth][c]{M-GR}&\makebox[0.05\textwidth][c]{M-ER}&\makebox[0.05\textwidth][c]{M-PC}&\makebox[0.05\textwidth][c]{M-IC}\\  
\hline  
\endfirsthead

\multicolumn{9}{c}%
{{\tablename\ \thetable{} -- Continued from previous page}} \\
\hline
\makebox[0.05\textwidth][c]{Model} &\makebox[0.08\textwidth][c]{$(d,n)$}&\makebox[0.05\textwidth][c]{MATE} &\makebox[0.05\textwidth][c]{$\widetilde{\rm CV}$} &\makebox[0.05\textwidth][c]{M-ED}&\makebox[0.05\textwidth][c]{M-GR}&\makebox[0.05\textwidth][c]{M-ER}&\makebox[0.05\textwidth][c]{M-PC}&\makebox[0.05\textwidth][c]{M-IC}\\ 
\hline  
\endhead

\hline \multicolumn{9}{r}{\textit{Continued on next page}} \\
\endfoot

\hline
\endlastfoot

\multirow{2}{*}{6}&(250, 500)&$5.02_{(0.02)}$ &$5.00_{(0.00)}$ &$4.98_{(0.04)}$ &$4.95_{(0.14)}$ &$4.97_{(0.07)} $&$2.96_{(4.24)} $&$2.43_{(6.83)}$\\
&(500, 1000)&$5.02_{(0.02)}$ &$5.00_{(0.00)}$	&$4.99_{(0.05)}$ &$5.00_{(0.00)}$	&$5.00_{(0.00)}$ &$2.04_{(8.81) }$	&$2.00_{(9.00)}$\\
\hline
\multirow{2}{*}{7}&(250, 250) &$5.02_{(0.02)}$ &$5.00_{(0.00)}$ &$4.99_{(0.02)} $&$4.81_{(0.34)}$ &$4.87_{(0.21)}$&$3.91_{(1.28)} $&$3.52_{(2.43)}$\\
& (500, 500)&$5.04_{(0.04)}$ &$5.00_{(0.00)}$ &$5.00_{(0.00)}$ &$4.94_{(0.11)}$ &$4.96_{(0.06)}$ &$3.11_{(3.66)}$ &$3.01_{(3.97)}$\\
\hline
\multirow{2}{*}{8}& (500, 250)&$5.02_{(0.02)}$ &$4.99_{(0.01)}$ &$4.69_{(1.10)}$ &$4.86_{(0.32)}$ &$4.89_{(0.23)}$ &$2.79_{(5.08)}$ &$2.29_{(7.55)}$\\
& (1000, 500)&$5.03_{(0.03)}$ &$5.00_{(0.00)}$	&$4.80_{(0.88)}$ &$4.99_{(0.04)}$	&$4.99_{(0.02)}$ &$2.03_{(8.89)}$	&$1.90_{(9.68)}$\\
\end{longtable}
}

From Table \ref{homoscedastic models with p=1}, we select four competitors ($\widetilde{\rm CV}$, M-ED, M-GR, M-ER) to examine the effects of the imputation on factor number estimation. Tables \ref{homoscedastic Model 6 with k=2}, \ref{homoscedastic Model 7 with k=2}, and \ref{homoscedastic Model 8 with k=2} report results for MATE in Models 6-8 with missing data ($K=2$, $(p_1,p_2)=(1,0.8),(1,0.6),(1,0.5)$\footnote{To ensure compatibility across all five imputation methods, we generate data with heterogeneous missingness ($K=2$), where the first half has no missing values ($p_1=1$), and the second half has non-missing rate $p_2$.}), and the results of the four competing methods applied to the corresponding rebalanced data imputed by the five imputation methods. Overall, MATE accurately estimates $r$ in all three models. $\widetilde{\rm CV}$ performs well in Models 6 and 7 but underestimates $r$ in Model 8. For M-ED, M-GR, and M-ER, the five imputation methods introduce varying degrees of errors, resulting in unreliable estimates. 
For Model 6 (Table \ref{homoscedastic Model 6 with k=2}), the performance of the imputation-based methods is notably affected by the missing rate, while MATE and $\widetilde{\rm CV}$ are stable. 
For Model 7 (Table \ref{homoscedastic Model 7 with k=2}), MATE and $\widetilde{\rm CV}$ remain accurate, while the others behave similarly to those in Model 6. 
For Model 8 (Table \ref{homoscedastic Model 8 with k=2}), MATE continues to perform well, whereas $\widetilde{\rm CV}$ underestimates $r$ in most cases. The remaining methods are relatively accurate only for TP-imputed data and deteriorate as $p_2$ decreases. Additionally, $\widetilde{\rm CV}$ is computationally expensive due to its iterative nature, whereas MATE computes the threshold directly and is therefore more efficient. Overall, in weak factor settings, MATE outperforms $\widetilde{\rm CV}$ in both accuracy and efficiency. 
Finally, we report the running times of these methods for Model~7 in Table~\ref{running_time}, which further confirms that MATE is the most efficient method.

{\footnotesize
\setlength\tabcolsep{3pt} 
\begin{longtable}{ccccccc}
\caption{Means (with MSEs in brackets) of the estimated number of common factors for isotropic Model 6 with heterogeneous missing data ($K=2$) based on 500 simulations.}
\label{homoscedastic Model 6 with k=2}\\
\hline
\makebox[0.05\textwidth][c]{Method} & \makebox[0.10\textwidth][c]{$(d,n,p_1,p_2)$} & \makebox[0.05\textwidth][c]{MATE} &\makebox[0.05\textwidth][c]{$\widetilde{\rm CV}$} & \makebox[0.05\textwidth][c]{M-ED} & \makebox[0.05\textwidth][c]{M-GR} & \makebox[0.05\textwidth][c]{M-ER}  \\   
\hline  
\endfirsthead

\multicolumn{7}{c}%
{{\tablename\ \thetable{} -- Continued from previous page}} \\
\hline
\makebox[0.05\textwidth][c]{Method} & \makebox[0.10\textwidth][c]{$(d,n,p_1,p_2)$} & \makebox[0.05\textwidth][c]{MATE} &\makebox[0.05\textwidth][c]{$\widetilde{\rm CV}$} & \makebox[0.05\textwidth][c]{M-ED} & \makebox[0.05\textwidth][c]{M-GR} & \makebox[0.05\textwidth][c]{M-ER}  \\  
\hline  
\endhead

\hline \multicolumn{7}{r}{\textit{Continued on next page}} \\
\endfoot

\hline
\endlastfoot

\multirow{6}{*}{ZI} & (250, 500, 1, 0.8) &$ 5.05_{(0.05)}$ & $5.00_{(0.00)}$ & $4.42_{(2.31)}$ & $4.76_{ (0.67)}$ & $4.87_{(0.32)}$  \\ 
&(500, 1000, 1, 0.8) & $5.04_{(0.04)}$ & $5.00_{(0.00)}$ & $4.42_{(2.68)}$ & $4.96_{(0.12)} $&$ 4.97_{(0.08)}$\\ 
&(250, 500, 1, 0.6) & $5.05_{(0.05)}$ & $5.00_{(0.00)}$ & $2.38_{(12.15)}$ &$ 4.21_{(2.47)}$ & $4.42_{(1.75)}$\\ 
&(500, 1000, 1, 0.6) & $5.08_{(0.08)}$ & $5.00_{(0.00)}$ &$ 1.88_{(15.32)}$ & $4.64_{(1.17)}$ & $4.73_{(0.85)}$ \\ 
& (250, 500, 1, 0.5)& $5.06_{(0.06)}$ & $4.99_{(0.01)}$ &$ 1.92_{(14.5)}$ &$ 3.95_{(3.39)}$ & $4.19_{(2.51)}$  \\ 
& (500, 1000, 1, 0.5)& $5.05_{(0.05)}$ &$ 5.00_{(0.00)}$ & $0.87_{(20.38)}$ & $4.42_{(1.91)} $& $4.53_{(1.54)}$ \\ 
\hline
\multirow{6}{*}{TP} &(250, 500, 1, 0.8)& $5.05_{ (0.05)}$ & $5.00_{(0.00)}$ & $4.97_{(0.10)}$ & $4.89_{(0.34)}$ & $4.93_{(0.19) }$ \\ 
& (500, 1000, 1, 0.8)& $5.06_{(0.06)} $& $5.00_{(0.00)}$ & $4.99_{(0.05)} $& $5.00_{ (0.00)}$ &$ 5.00_{(0.00)}$ \\ 
&(250, 500, 1, 0.6) & $5.06_{(0.06)}$ & $5.00_{(0.00)} $& $4.93_{(0.32)}$ & $4.78_{(0.64)} $& $4.88_{(0.32) }$ \\ 
& (500, 1000, 1, 0.6) & $5.07_{(0.07)} $& $5.00_{(0.00)} $& $4.98_{(0.08)}$& $4.96_{(0.12)}$ & $4.97_{(0.08) }$  \\ 
& (250, 500, 1, 0.5) & $5.07_{(0.07)}$ & $4.99_{(0.01)}$ & $6.50_{(5.07)}$ & $4.82_{(1.62)}$ & $5.07_{(1.14)}$ \\
& (500, 1000, 1, 0.5) & $5.06_{(0.06)}$ & $5.00_{(0.00)}$ & $6.19_{(3.94)}$ & $4.96_{(0.26)} $& $4.99_{(0.19)} $ \\ 
\hline
\multirow{6}{*}{EM} &(250, 500, 1, 0.8) &$ 5.04_{(0.04)}$ & $5.00_{(0.00)}$ & $4.81_{ (0.72)}$ & $4.75_{(0.81)}$ & $4.84_{(0.47)} $\\ 
&(500, 1000, 1, 0.8) & $5.05_{(0.05)}$ & $5.00_{(0.00)}$ & $4.89_{(0.47)}$ &$ 4.98_{(0.05)} $& $4.99_{(0.02)}$ \\ 
&(250, 500, 1, 0.6) & $5.07_{(0.07)}$ & $5.00_{(0.00)}$ & $7.96_{(8.96)}$ & $6.27_{(5.91)}$ & $6.80_{(6.26)}$\\ 
& (500, 1000, 1, 0.6) & $5.07_{(0.07)}$ & $5.00_{(0.00)}$ & $7.99_{(8.98)}$ & $6.58_{(5.07)}$ & $6.75_{(5.36)}$ \\ 
&(250, 500, 1, 0.5)& $5.07_{(0.07)}$ & $5.00_{(0.00)}$ & $8.00_{(9.00)}$ & $7.84_{(8.85)}$ & $7.92_{(8.91)} $\\ 
&(500, 1000, 1, 0.5)& $5.06_{(0.06)}$ &$ 5.00_{(0.00)}$ & $8.00_{(9.00)}$ & $7.96_{(8.94)}$ & $7.98_{(8.95)}$  \\ 
\hline
\multirow{6}{*}{RF} & (250, 500, 1, 0.8) & $5.06_{(0.06)}$ & $5.00_{(0.00)}$ & $4.71_{(0.96)}$ & $4.71_{(0.87)}$ &$ 4.81_{(0.50)}$ \\
&(500, 1000, 1, 0.8) &$ 5.05_{(0.05)}$ & $5.00_{(0.00)}$ & $4.74_{(1.12)}$ & $4.98_{(0.04)}$ & $4.99_{(0.02)}$ \\ 
& (250, 500, 1, 0.6) & $5.06_{(0.06)}$ & $5.00_{(0.00)}$ & $3.89_{(4.15)}$& $4.06_{(3.05)}$ & $4.29_{(2.19)}$  \\ 
&(500, 1000, 1, 0.6) & $5.07_{(0.07)}$ & $5.00_{(0.00)}$ & $2.71_{(10.49)}$ &$ 4.41_{(1.92)}$ & $4.54_{(1.46)}$  \\ 
& (250, 500, 1, 0.5)& $5.02_{(0.02)}$ & $5.00_{(0.00)} $& $3.16_{(6.98)}$ & $3.32_{(5.65)}$ & $3.67_{(4.32)}$ \\
& (500, 1000, 1, 0.5) & $5.05_{(0.05)}$ & $5.00_{(0.00)}$ & $1.98_{(14.00)}$ & $4.14_{(2.85)}$ &$ 4.25_{(2.44)}$\\ 
\hline
\multirow{6}{*}{MI} &(250, 500, 1, 0.8) & $5.05_{(0.05)}$ &$5.00_{(0.00)}$ & $4.46_{(2.13)} $& $4.69_{(0.98)}$ & $4.80_{(0.61)}$ \\ 
& (500, 1000, 1, 0.8) & $5.07_{(0.07)}$ & $5.00_{(0.00)}$ & $4.36_{(3.02)} $& $4.97_{(0.11 )}$ & $4.97_{(0.07)}$  \\ 
& (250, 500, 1, 0.6) & $5.08_{(0.08)}$ & $5.00_{(0.00)}$ &$ 2.53_{(11.35)} $& $4.18_{(2.58)}$ & $4.43_{(1.73)}$  \\ 
&(500, 1000, 1, 0.6) & $5.07_{(0.07)}$ & $5.00_{(0.00)}$ & $1.69_{(16.27)}$ & $4.69_{(0.93)} $& $4.79_{(0.57)}$  \\ 
& (250, 500, 1, 0.5) & $5.08_{(0.08)}$ & $5.00_{(0.00)} $& $1.79_{(15.09)}$ & $3.85_{(3.68)}$ & $4.11_{(2.75)}$ \\ 
& (500, 1000, 1, 0.5) & $5.07_{(0.07)}$ & $5.00_{(0.00)}$ & $0.99_{(19.77)}$ &$ 4.45_{(1.79)}$ & $4.55_{(1.41)}$\\ 
\end{longtable}
}

{\footnotesize
\setlength\tabcolsep{3pt} 
\begin{longtable}{ccccccc}
\caption{Means (with MSEs in brackets) of the estimated number of common factors for isotropic Model 7 with heterogeneous missing data ($K=2$) based on 500 simulations.}
\label{homoscedastic Model 7 with k=2}\\
\hline
\makebox[0.05\textwidth][c]{Method} & \makebox[0.10\textwidth][c]{$(d,n,p_1,p_2)$} & \makebox[0.05\textwidth][c]{MATE} &\makebox[0.05\textwidth][c]{$\widetilde{\rm CV}$} & \makebox[0.05\textwidth][c]{M-ED} & \makebox[0.05\textwidth][c]{M-GR} & \makebox[0.05\textwidth][c]{M-ER}  \\ 
\hline  
\endfirsthead

\multicolumn{7}{c}%
{{\tablename\ \thetable{} -- Continued from previous page}} \\
\hline
\makebox[0.05\textwidth][c]{Method} & \makebox[0.10\textwidth][c]{$(d,n,p_1,p_2)$} & \makebox[0.05\textwidth][c]{MATE} &\makebox[0.05\textwidth][c]{$\widetilde{\rm CV}$} & \makebox[0.05\textwidth][c]{M-ED} & \makebox[0.05\textwidth][c]{M-GR} & \makebox[0.05\textwidth][c]{M-ER}  \\  
\hline  
\endhead

\hline \multicolumn{7}{r}{\textit{Continued on next page}} \\
\endfoot

\hline
\endlastfoot

\multirow{6}{*}{ZI}	&(250, 250, 1, 0.8)	&$5.05_{(0.05)}$	&$4.96_{(0.04)}$	&$4.90_{(0.12)}$	&$4.56_{(0.87)}$	&$4.67_{(0.59)}$	\\
&(500, 500, 1, 0.8)	&$5.06_{(0.06)}$	&$5.00_{(0.00)}$	&$4.93_{(0.07)}$	&$4.80_{(0.33)}$	&$4.84_{(0.24)}$	\\
&(250, 250, 1, 0.6)	&$5.04_{(0.05)}$	&$4.86_{(0.14)}$	&$4.52_{(0.90)}$	&$4.25_{(1.56)}$	&$4.39_{(1.14)}$	\\
&(500, 500, 1, 0.6)	&$5.08_{(0.08)}$	&$4.97_{(0.03)}$	&$4.44_{(1.01)}$	&$4.50_{(0.94)}$	&$4.54_{(0.85)}$	\\
&(250, 250, 1, 0.5)	&$5.05_{(0.05)}$	&$4.76_{(0.24)}$	&$4.10_{(2.09)}$	&$3.95_{(2.38)}$	&$4.15_{(1.79)}$	\\
&(500, 500, 1, 0.5)	&$5.05_{(0.05)}$	&$4.94_{(0.06)}$	&$3.97_{(2.68)}$	&$4.26_{(1.44)}$	&$4.37_{(1.14)}$	\\ \hline
\multirow{6}{*}{TP}	&(250, 250, 1, 0.8)	&$5.07_{(0.07)}$	&$4.96_{(0.04)}$	&$4.98_{(0.02)}$	&$4.69_{(0.50)}$	&$4.77_{(0.34)}$	\\
&(500, 500, 1, 0.8)	&$5.05_{(0.05)}$	&$5.00_{(0.00)}$	&$5.00_{(0.00)}$	&$4.92_{(0.12)}$	&$4.94_{(0.10)}$	\\
&(250, 250, 1, 0.6)	&$5.11_{(0.11)}$	&$4.88_{(0.12)}$	&$4.96_{(0.04)}$	&$4.53_{(0.95)}$	&$4.70_{(0.53)}$	\\
&(500, 500, 1, 0.6)	&$5.05_{(0.05)}$	&$4.97_{(0.03)}$	&$5.00_{(0.00)}$	&$4.83_{(0.29)}$	&$4.85_{(0.24)}$	\\
&(250, 250, 1, 0.5)	&$5.04_{(0.04)}$	&$4.77_{(0.23)}$	&$5.73_{(2.39)}$	&$4.39_{(1.20)}$	&$4.58_{(0.83)}$	\\
&(500, 500, 1, 0.5)	&$5.06_{(0.06)}$	&$4.94_{(0.06)}$	&$5.10_{(0.33)}$	&$4.77_{(0.37)}$	&$4.82_{(0.28)}$	\\ \hline
\multirow{6}{*}{EM}	&(250, 250, 1, 0.8)	&$5.05_{(0.05)}$	&$4.99_{(0.01)}$	&$4.95_{(0.05)}$	&$4.59_{(0.76)}$	&$4.71_{(0.48)}$	\\
&(500, 500, 1, 0.8)	&$5.05_{(0.05)}$	&$5.00_{(0.00)}$	&$4.99_{(0.01)}$	&$4.84_{(0.27)}$	&$4.87_{(0.20)}$	\\
&(250, 250, 1, 0.6)	&$5.06_{(0.06)}$	&$4.99_{(0.01)}$	&$7.65_{(8.03)}$	&$4.50_{(2.29)}$	&$4.94_{(2.10)}$	\\
&(500, 500, 1, 0.6)	&$5.05_{(0.05)}$	&$5.00_{(0.00)}$	&$7.36_{(7.14)}$	&$4.70_{(1.04)}$	&$4.87_{(1.08)}$	\\
&(250, 250, 1, 0.5)	&$5.05_{(0.05)}$	&$4.99_{(0.01)}$	&$7.99_{(8.98)}$	&$5.97_{(5.49)}$	&$6.58_{(5.98)}$	\\
&(500, 500, 1, 0.5)	&$5.04_{(0.04)}$	&$5.00_{(0.00)}$	&$7.99_{(8.98)}$	&$6.12_{(4.90)}$	&$6.46_{(5.40)}$	\\ \hline
\multirow{6}{*}{RF}	&(250, 250, 1, 0.8)	&$5.06_{(0.06)}$	&$4.99_{(0.01)}$	&$4.93_{(0.08)}$	&$4.53_{(0.91)}$	&$4.65_{(0.60)}$	\\
&(500, 500, 1, 0.8)	&$5.07_{(0.07)}$	&$5.00_{(0.00)}$	&$4.97_{(0.03)}$	&$4.80_{(0.32)}$	&$4.82_{(0.28)}$	\\
&(250, 250, 1, 0.6)	&$5.09_{(0.09)}$	&$4.99_{(0.01)}$	&$4.77_{(0.28)}$	&$4.18_{(1.70)}$	&$4.36_{(1.15)}$	\\
&(500, 500, 1, 0.6)	&$5.06_{(0.06)}$	&$5.00_{(0.00)}$	&$4.79_{(0.27)}$	&$4.34_{(1.33)}$	&$4.42_{(1.14)}$	\\
&(250, 250, 1, 0.5)	&$5.06_{(0.06)}$	&$4.98_{(0.03)}$	&$4.62_{(0.53)}$	&$3.89_{(2.45)}$	&$4.12_{(1.77)}$	\\
&(500, 500, 1, 0.5)	&$5.08_{(0.08)}$	&$5.00_{(0.00)}$	&$4.63_{(0.47)}$	&$4.06_{(1.97)}$	&$4.17_{(1.63)}$	\\ \hline
\multirow{6}{*}{MI}	&(250, 250, 1, 0.8)	&$5.04_{(0.04)}$	&$4.97_{(0.03)}$	&$4.86_{(0.17)}$	&$4.49_{(1.00)}$	&$4.57_{(0.81)}$	\\
&(500, 500, 1, 0.8)	&$5.05_{(0.05)}$	&$5.00_{(0.00)}$	&$4.95_{(0.05)}$	&$4.84_{(0.26)}$	&$4.86_{(0.22)}$	\\
&(250, 250, 1, 0.6)	&$5.10_{(0.10)}$	&$4.94_{(0.06)}$	&$4.43_{(1.01)}$	&$4.20_{(1.65)}$	&$4.34_{(1.26)}$	\\
&(500, 500, 1, 0.6)	&$5.08_{(0.08)}$	&$4.99_{(0.01)}$	&$4.42_{(1.16)}$	&$4.41_{(1.09)}$	&$4.50_{(0.88)}$	\\
&(250, 250, 1, 0.5)	&$5.07_{(0.07)}$	&$4.90_{(0.10)}$	&$4.16_{(1.92)}$	&$3.94_{(2.51)}$	&$4.17_{(1.74)}$	\\
&(500, 500, 1, 0.5)	&$5.07_{(0.07)}$	&$4.99_{(0.01)}$	&$3.98_{(2.66)}$	&$4.31_{(1.27)}$	&$4.40_{(1.05)}$	\\
\end{longtable}
}

{\footnotesize
\setlength\tabcolsep{3pt} 
\begin{longtable}{ccccccc}
\caption{Means (with MSEs in brackets) of the estimated number of common factors for isotropic Model 8 with heterogeneous missing data ($K=2$) based on 500 simulations.}
\label{homoscedastic Model 8 with k=2}\\
\hline
\makebox[0.05\textwidth][c]{Method} & \makebox[0.10\textwidth][c]{$(d,n,p_1,p_2)$} & \makebox[0.05\textwidth][c]{MATE} &\makebox[0.05\textwidth][c]{$\widetilde{\rm CV}$} & \makebox[0.05\textwidth][c]{M-ED} & \makebox[0.05\textwidth][c]{M-GR} & \makebox[0.05\textwidth][c]{M-ER}  \\ 
\hline  
\endfirsthead

\multicolumn{7}{c}%
{{\tablename\ \thetable{} -- Continued from previous page}} \\
\hline
\makebox[0.05\textwidth][c]{Method} & \makebox[0.10\textwidth][c]{$(d,n,p_1,p_2)$} & \makebox[0.05\textwidth][c]{MATE} &\makebox[0.05\textwidth][c]{$\widetilde{\rm CV}$} & \makebox[0.05\textwidth][c]{M-ED} & \makebox[0.05\textwidth][c]{M-GR} & \makebox[0.05\textwidth][c]{M-ER}  \\  
\hline  
\endhead

\hline \multicolumn{7}{r}{\textit{Continued on next page}} \\
\endfoot

\hline
\endlastfoot

\multirow{6}{*}{ZI}	&(250, 250, 1, 0.8)	&$5.03_{(0.03)}$	&$4.82_{(0.18)}$	&$3.30_{(7.72)}$	&$4.57_{(1.17)}$	&$4.70_{(0.77)}$	\\
&(500, 500, 1, 0.8)	&$5.05_{(0.05)}$	&$4.93_{(0.07)}$	&$3.20_{(8.75)}$	&$4.92_{(0.22)}$	&$4.93_{(0.15)}$	\\
&(250, 250, 1, 0.6)	&$5.04_{(0.04)}$	&$4.60_{(0.40)}$	&$1.41_{(17.21)}$	&$4.09_{(2.68)}$	&$4.23_{(2.21)}$	\\
&(500, 500, 1, 0.6)	&$5.03_{(0.03)}$	&$4.72_{(0.28)}$	&$0.59_{(21.94)}$	&$4.60_{(1.12)}$	&$4.67_{(0.85)}$	\\
&(250, 250, 1, 0.5)	&$5.03_{(0.03)}$	&$4.43_{(0.57)}$	&$0.83_{(20.24)}$	&$3.79_{(3.63)}$	&$3.97_{(2.98)}$	\\
&(500, 500, 1, 0.5)	&$5.03_{(0.03)}$	&$4.49_{(0.51)}$	&$0.26_{(23.52)}$	&$4.31_{(2.00)}$	&$4.40_{(1.68)}$	\\
\hline
\multirow{6}{*}{TP}	&(250, 250, 1, 0.8)	&$5.04_{(0.04)}$	&$4.81_{(0.19)}$	&$4.71_{(1.03)}$	&$4.80_{(0.45)}$	&$4.85_{(0.32)}$	\\
&(500, 500, 1, 0.8)	&$5.04_{(0.04)}$	&$4.93_{(0.07)}$	&$4.78_{(0.98)}$	&$4.97_{(0.05)}$	&$4.97_{(0.04)}$	\\
&(250, 250, 1, 0.6)	&$5.03_{(0.03)}$	&$4.57_{(0.43)}$	&$4.53_{(1.62)}$	&$4.58_{(1.11)}$	&$4.67_{(0.84)}$	\\
&(500, 500, 1, 0.6)	&$5.04_{(0.04)}$	&$4.73_{(0.27)}$	&$4.73_{(1.11)}$	&$4.91_{(0.19)}$	&$4.93_{(0.13)}$	\\
&(250, 250, 1, 0.5)	&$5.03_{(0.03)}$	&$4.42_{(0.58)}$	&$4.30_{(2.86)}$	&$4.49_{(1.33)}$	&$4.63_{(0.91)}$	\\
&(500, 500, 1, 0.5)	&$5.03_{(0.03)}$	&$4.49_{(0.51)}$	&$4.48_{(2.37)}$	&$4.91_{(0.15)}$	&$4.92_{(0.14)}$	\\
\hline
\multirow{6}{*}{EM}	&(250, 250, 1, 0.8)	&$5.05_{(0.05)}$	&$4.94_{(0.06)}$	&$4.22_{(3.41)}$	&$4.63_{(1.05)}$	&$4.74_{(0.69)}$	\\
&(500, 500, 1, 0.8)	&$5.05_{(0.05)}$	&$5.00_{(0.00)}$	&$4.31_{(3.22)}$	&$4.91_{(0.17)}$	&$4.93_{(0.12)}$	\\
&(250, 250, 1, 0.6)	&$5.04_{(0.04)}$	&$4.92_{(0.08)}$	&$4.72_{(8.28)}$	&$4.75_{(2.75)}$	&$5.12_{(2.61)}$	\\
&(500, 500, 1, 0.6)	&$5.06_{(0.06)}$	&$4.99_{(0.01)}$	&$3.39_{(9.57)}$	&$5.13_{(1.31)}$	&$5.26_{(1.39)}$	\\
&(250, 250, 1, 0.5)	&$5.02_{(0.02)}$	&$4.97_{(0.03)}$	&$7.01_{(9.39)}$	&$6.22_{(5.49)}$	&$6.66_{(5.93)}$	\\
&(500, 500, 1, 0.5)	&$5.04_{(0.04)}$	&$4.99_{(0.01)}$	&$6.11_{(9.98)}$	&$6.16_{(4.35)}$	&$6.39_{(4.75)}$	\\
\hline
\multirow{6}{*}{RF}	&(250, 250, 1, 0.8)	&$5.04_{(0.04)}$	&$4.93_{(0.07)}$	&$4.02_{(4.11)}$	&$4.61_{(1.12)}$	&$4.70_{(0.81)}$	\\
&(500, 500, 1, 0.8)	&$5.05_{(0.05)}$	&$4.99_{(0.01)}$	&$3.51_{(7.08)}$	&$4.90_{(0.27)}$	&$4.91_{(0.22)}$	\\
&(250, 250, 1, 0.6)	&$5.03_{(0.03)}$	&$4.91_{(0.09)}$	&$2.89_{(9.05)}$	&$3.98_{(3.11)}$	&$4.19_{(2.29)}$	\\
&(500, 500, 1, 0.6)	&$5.05_{(0.05)}$	&$4.98_{(0.02)}$	&$1.48_{(17.12)}$	&$4.56_{(1.20)}$	&$4.62_{(0.96)}$	\\
&(250, 250, 1, 0.5)	&$5.02_{(0.02)}$	&$4.94_{(0.06)}$	&$2.63_{(9.66)}$	&$3.66_{(3.88)}$	&$3.88_{(3.13)}$	\\
&(500, 500, 1, 0.5)	&$5.02_{(0.02)}$	&$4.99_{(0.01)}$	&$0.74_{(20.81)}$	&$4.14_{(2.52)}$	&$4.26_{(2.12)}$	\\
\hline
\multirow{6}{*}{MI}	&(250, 250, 1, 0.8)	&$5.04_{(0.04)}$	&$4.83_{(0.17)}$	&$3.24_{(7.94)}$	&$4.54_{(1.35)}$	&$4.67_{(0.87)}$	\\
&(500, 500, 1, 0.8)	&$5.03_{(0.03)}$	&$4.95_{(0.05)}$	&$3.02_{(9.73)}$	&$4.92_{(0.21)}$	&$4.93_{(0.19)}$	\\
&(250, 250, 1, 0.6)	&$5.04_{(0.04)}$	&$4.75_{(0.25)}$	&$1.41_{(17.29)}$	&$4.12_{(2.63)}$	&$4.23_{(2.27)}$	\\
&(500, 500, 1, 0.6)	&$5.04_{(0.04)}$	&$4.87_{(0.13)}$	&$0.60_{(21.9)}$	&$4.62_{(1.04)}$	&$4.69_{(0.81)}$	\\
&(250, 250, 1, 0.5)	&$5.02_{(0.02)}$	&$4.69_{(0.31)}$	&$0.81_{(20.3)}$	&$3.90_{(3.18)}$	&$4.07_{(2.56)}$	\\
&(500, 500, 1, 0.5)	&$5.05_{(0.05)}$	&$4.75_{(0.25)}$	&$0.16_{(24.13)}$	&$4.28_{(2.04)}$	&$4.41_{(1.59)}$	\\
\end{longtable}
}

{\footnotesize
\setlength\tabcolsep{3pt} 
\begin{longtable}{ccccccc}
\caption{ The running time (in seconds) for isotropic Model 7 with heterogeneous missing data ($K=2$).}
\label{running_time}\\
\hline
\makebox[0.05\textwidth][c]{Method} & \makebox[0.10\textwidth][c]{$(d,n,p_1,p_2)$} & \makebox[0.05\textwidth][c]{MATE} &\makebox[0.05\textwidth][c]{$\widetilde{\rm CV}$} & \makebox[0.05\textwidth][c]{M-ED} & \makebox[0.05\textwidth][c]{M-GR} & \makebox[0.05\textwidth][c]{M-ER}  \\ 
\hline  
\endfirsthead

\multicolumn{7}{c}%
{{\tablename\ \thetable{} -- Continued from previous page}} \\
\hline
\makebox[0.05\textwidth][c]{Method} & \makebox[0.10\textwidth][c]{$(d,n,p_1,p_2)$} & \makebox[0.05\textwidth][c]{MATE} &\makebox[0.05\textwidth][c]{$\widetilde{\rm CV}$} & \makebox[0.05\textwidth][c]{M-ED} & \makebox[0.05\textwidth][c]{M-GR} & \makebox[0.05\textwidth][c]{M-ER}  \\  
\hline  
\endhead

\hline \multicolumn{7}{r}{\textit{Continued on next page}} \\
\endfoot

\hline
\endlastfoot

\multirow{2}{*}{ZI}	&(250, 250, 1, 0.5)	&$0.02$	&$4.21$	&$0.06$	&$0.06$	&$0.05$	\\
                  &(500, 500, 1, 0.5)	&$0.17$	 &$28.92$ &$0.30$ &$0.28$ &$0.30$	\\
\hline
\multirow{2}{*}{TP}	&(250, 250, 1, 0.5)	&$0.02$	&$4.56$	&$0.28$	&$0.30$	&$0.29$	\\
&(500, 500, 1, 0.5)	&$0.17$	 &$29.36$ &$0.94$ &$0.94$ &$0.94$	\\
\hline
\multirow{2}{*}{EM}	&(250, 250, 1, 0.5)	&$0.02$	&$4.20$	&$0.19$	&$0.18$	&$0.17$	\\
&(500, 500, 1, 0.5)	&$0.17$	 &$29.40$ &$1.14$ &$1.14$ &$1.14$	\\
\hline
\multirow{2}{*}{RF}	&(250, 250, 1, 0.5)	&$0.02$	&$50.08$ &$45.99$ &$36.70$ &$37.12$	\\
&(500, 500, 1, 0.5)	&$0.17$	 &$357.32$ &$333.02$ &$329.22$ &$329.70$	\\
\hline
\multirow{2}{*}{MI}	&(250, 250, 1, 0.5)	&$0.02$	&$4.11$	&$0.06$	&$0.05$	&$0.05$	\\
&(500, 500, 1, 0.5)	&$0.17$	 &$28.66$ &$0.30$ &$0.30$ &$0.28$	\\
\end{longtable}
}

\section{Real data examples}\label{real_data}
In this section, we use two Fama-French datasets to demonstrate the strong performance of MATE.

\subsection{Data description}
We use daily and monthly returns of the 100 portfolios in the dataset \textbf{``100 Portfolios Formed on Size and Book-to-Market (10 $\times$ 10)"} from Kenneth R. French's website\footnote{\url{https://mba.tuck.dartmouth.edu/pages/faculty/ken.french/data_library.html}}. The first dataset consists of monthly returns from 07/1963 to 02/2024, after removing periods with missing observations. We standardize the data to have zero mean and unit variance, obtaining a balanced panel $X$ with $(d,n)=(100, 620)$. The second dataset contains fully observed daily returns from 01/03/2019 to 29/02/2024. After the same standardization, we obtain a balanced panel $Y$ with $(d,n)=(100,1259)$. 

Figure \ref{realdata_eigenvalues} displays the eigenvalues of the sample covariance matrices of $X$ and $Y$. In the left panel, the largest eigenvalue is a clear outlier, while the others are closely clustered. In the right panel, the largest eigenvalue is also clearly separated, and the second and third eigenvalues are mildly separated from the bulk. Developing suitable models based on unbalanced panels to explain portfolio returns is of economic interest. We set the random seed to 10. Each entry of $X$ is independently masked with probability $0.3$, yielding an incomplete dataset $X^o$ with non-missing rate $p=0.7$. The same procedure is applied to $Y$ to obtain $Y^o$ with $p=0.7$. 

\begin{figure}[H]
\centering
\caption{$100$ eigenvalues of the sample covariance matrix of $X$(left), and those of $Y$(right). }\label{realdata_eigenvalues}
\includegraphics[width=0.4\linewidth]{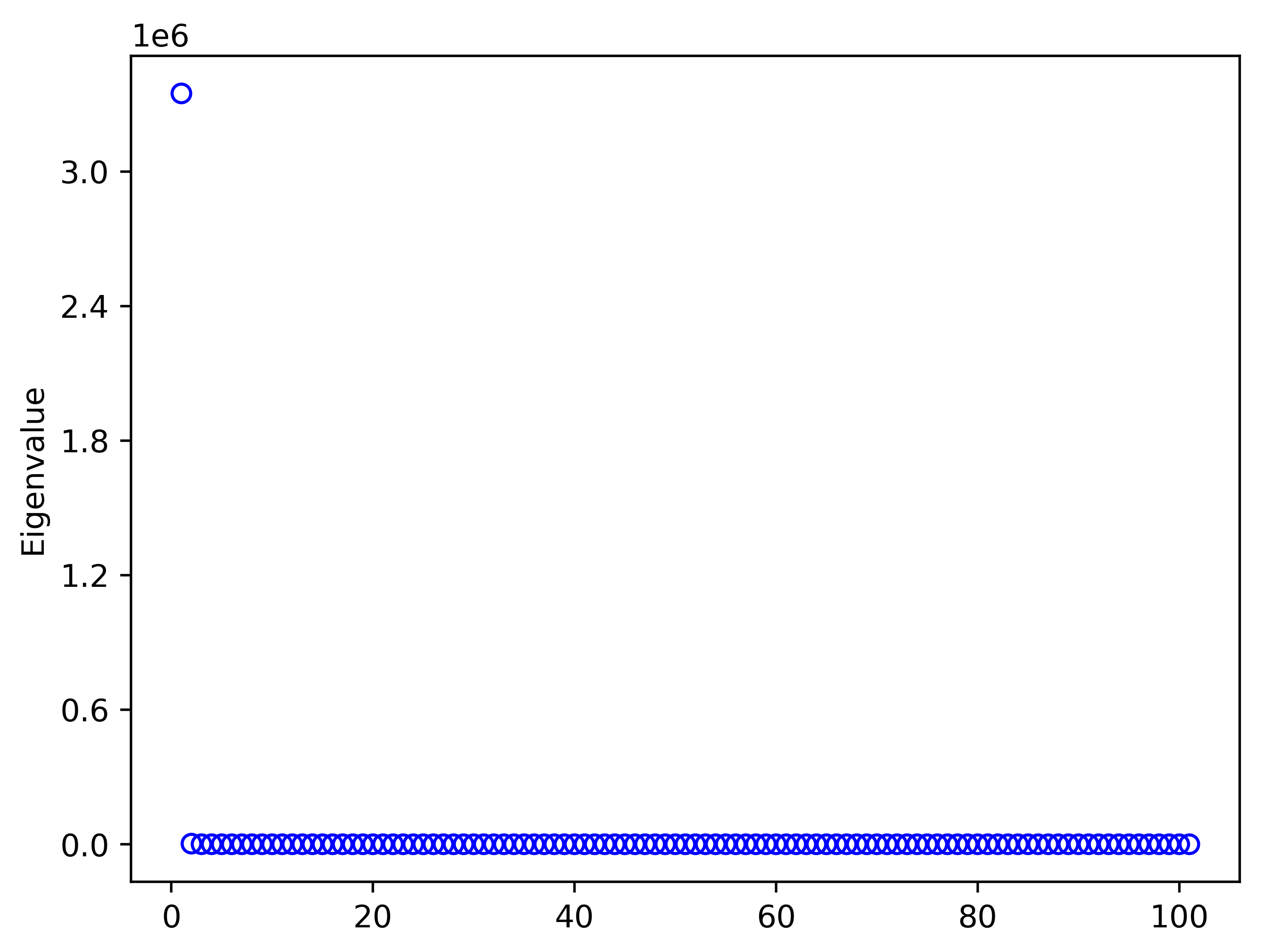}
\includegraphics[width=0.4\linewidth]{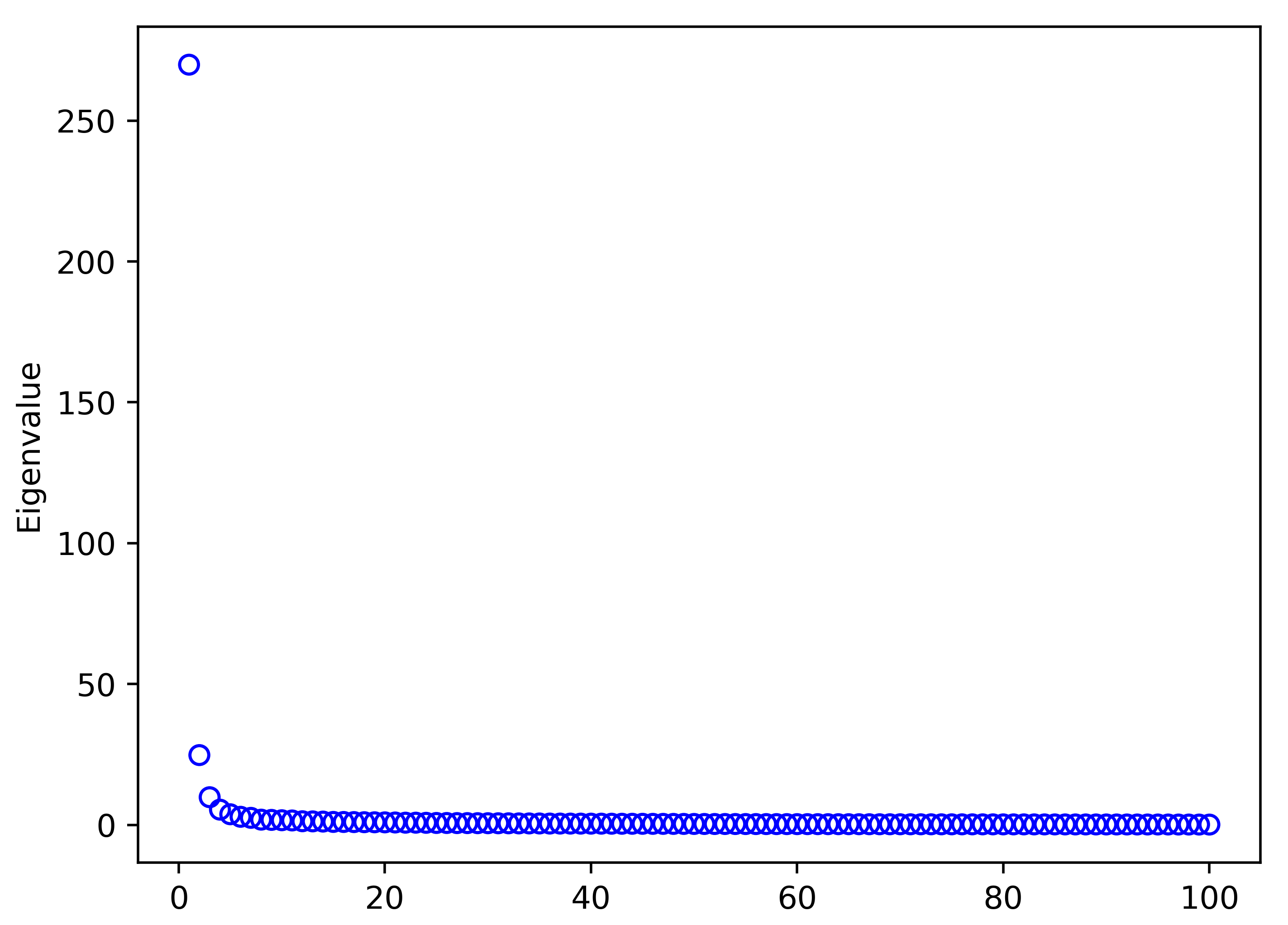}
\end{figure}

\subsection{Monthly data example $X$}\label{Monthly_data}
For $X^o$, we apply the methods used in the simulations to estimate the number of factors and record the computation time (in seconds) (Table \ref{realdataX_estimate_r}). MATE, $\widetilde{\rm CV}$ and M-ED yield the same estimate $\hat{r}=3$; M-GR and M-ER report only one factor; M-PC and M-IC report five factors; and BEMA reports four factors. Computationally, BEMA is the most expensive because it involves optimization and quantile calculations, whereas $\widetilde{\rm CV}$, M-PC and M-IC are relatively costly due to their iterations.

\begin{table}[H]
\setlength\tabcolsep{3pt} 
\centering
\footnotesize 
\caption{Estimated number of factors and the running time (in seconds) for panel $X^o$ across methods. Computation performed on an AMD Ryzen 5 4600U (6 cores, 2.10 GHz) with 16 GB RAM.}\label{realdataX_estimate_r}
\begin{tabular}{ccccccccc}
\hline
\makebox[0.03\textwidth][c]{} & \makebox[0.08\textwidth][c]{MATE} & \makebox[0.08\textwidth][c]{$\widetilde{\rm CV}$} &\makebox[0.08\textwidth][c]{M-ED} & \makebox[0.08\textwidth][c]{M-GR} & \makebox[0.08\textwidth][c]{M-ER} & \makebox[0.08\textwidth][c]{M-PC}& \makebox[0.08\textwidth][c]{M-IC} & \makebox[0.08\textwidth][c]{BEMA} \\ 
\hline
$\hat{r}$ & 3   &  3   & 3    & 1     & 1       & 5       & 5      & 4
\\	 	
\hline
Running time (s)   & 0.05 & 7.71 & 0.08 & 0.08 &0.06  & 1.00  & 1.00 &   251.19 \\
\hline
\end{tabular}
\end{table}  

We verify the number of factors in $X$ by measuring imputation performance. First, we replace missing values in $X^o$ with zeros. Second, given a candidate $\hat{r}$, we estimate $\hat{r}$-dimensional factors $\hat{f}_i$ and $\hat{r}$-dimensional factor loadings $\hat{\Lambda}_j$ from $X^o$, where $1 \le i \le d$ and $1 \le j \le n$, and impute missing entries by $\hat{x}_{ij}=\hat{\Lambda}_i^\top\hat{f_j}$. Third, we compute the imputation error ${\rm RMSE}=[(n^{\rm miss})^{-1}\sum\nolimits_{(i,j)\in \Omega_{\bot}} (x_{ij}-\hat{x}_{ij})^2]^{1/2}$, where $x_{ij}$ is the $(i,j)$-th entry of the ground truth $X$, $\Omega_{\bot}=\{(i,j)\in [d] \times [n]: x_{ij} \ {\rm is\ missing}\}$, and $n^{\rm miss}$ is the total number of missing entries. Finally, we repeat this procedure $100$ times and report the average RMSE (ARMSE). The left panel of Figure \ref{realdata_RMSE} shows ARMSEs for different $\hat{r}$. As $\hat{r}$ increases from 1 to 3, ARMSE decreases sharply and reaches its minimum at $\hat{r}=3$, after which it increases gradually. This suggests that $\hat{r}=3$ is a reasonable estimate for this dataset. 

\subsection{Daily data example $Y$}\label{Daily_data}

A long sample period may introduce structural changes. To address this, we use the daily dataset $Y$, which shares similar characteristics with $X$, to further validate $\hat{r}$. The factor number in $Y$ is estimated using the same procedure as in Section~\ref{Monthly_data} (Table \ref{realdataY_estimate_r}). MATE, along with $\widetilde{\rm CV}$ and M-ED, consistently yields $\hat{r}=3$. In contrast, M-GR and M-ER identify only one factor, while M-PC and M-IC estimate six factors. 
From a computational perspective, $\widetilde{\rm CV}$, M-PC, and M-IC are time-intensive due to their iterative procedures, whereas BEMA is particularly expensive because it involves optimization and quantile calculations. 

    \begin{table}[H]
    \setlength\tabcolsep{3pt} 
        \centering
        \footnotesize 
        \caption{Estimated number of factors and the running time (in seconds) for panel $Y^o$ across methods. Computation performed on an AMD Ryzen 5 4600U (6 cores, 2.10 GHz) with 16 GB RAM.}
       \label{realdataY_estimate_r}
        \begin{tabular}{ccccccccc}
        \hline
      \makebox[0.03\textwidth][c]{} & \makebox[0.08\textwidth][c]{MATE} & \makebox[0.08\textwidth][c]{$\widetilde{CV}$} & \makebox[0.08\textwidth][c]{M-ED} & \makebox[0.08\textwidth][c]{M-GR} & \makebox[0.08\textwidth][c]{M-ER} & \makebox[0.08\textwidth][c]{M-PC}& \makebox[0.08\textwidth][c]{M-IC} & \makebox[0.08\textwidth][c]{BEMA} \\ 
      \hline
       $\hat{r}$ & 3   &  3    & 3    & 1     & 1       & 6       & 6      &  5 \\	 	
     \hline
       Running time (s)   & 0.04 & 14.69 & 0.08 & 0.06 & 0.07  & 1.93  & 1.74 &  2452.02  \\
     \hline
    \end{tabular}
    \end{table}	

We apply the same procedure as in Section \ref{Monthly_data} to validate $\hat{r}$ for $Y$, and the ARMSEs (Figure \ref{realdata_RMSE}) exhibit a trend consistent with those for $X$. According to \cite{fama1993common}, the Fama-French three-factor model (Rm-Rf, SMB, and HML) effectively explains portfolio returns. Together with the findings of \cite{yu2024testing} and  the results from the two examples above, this supports $\hat{r}=3$ as a suitable estimate. 

    \begin{figure}[H]
        \centering
        \caption{ARMSEs over $100$ repetitions for $X^o$(left) and $Y^o$(right).} 
        \label{realdata_RMSE}
        \includegraphics[width=0.4\linewidth]{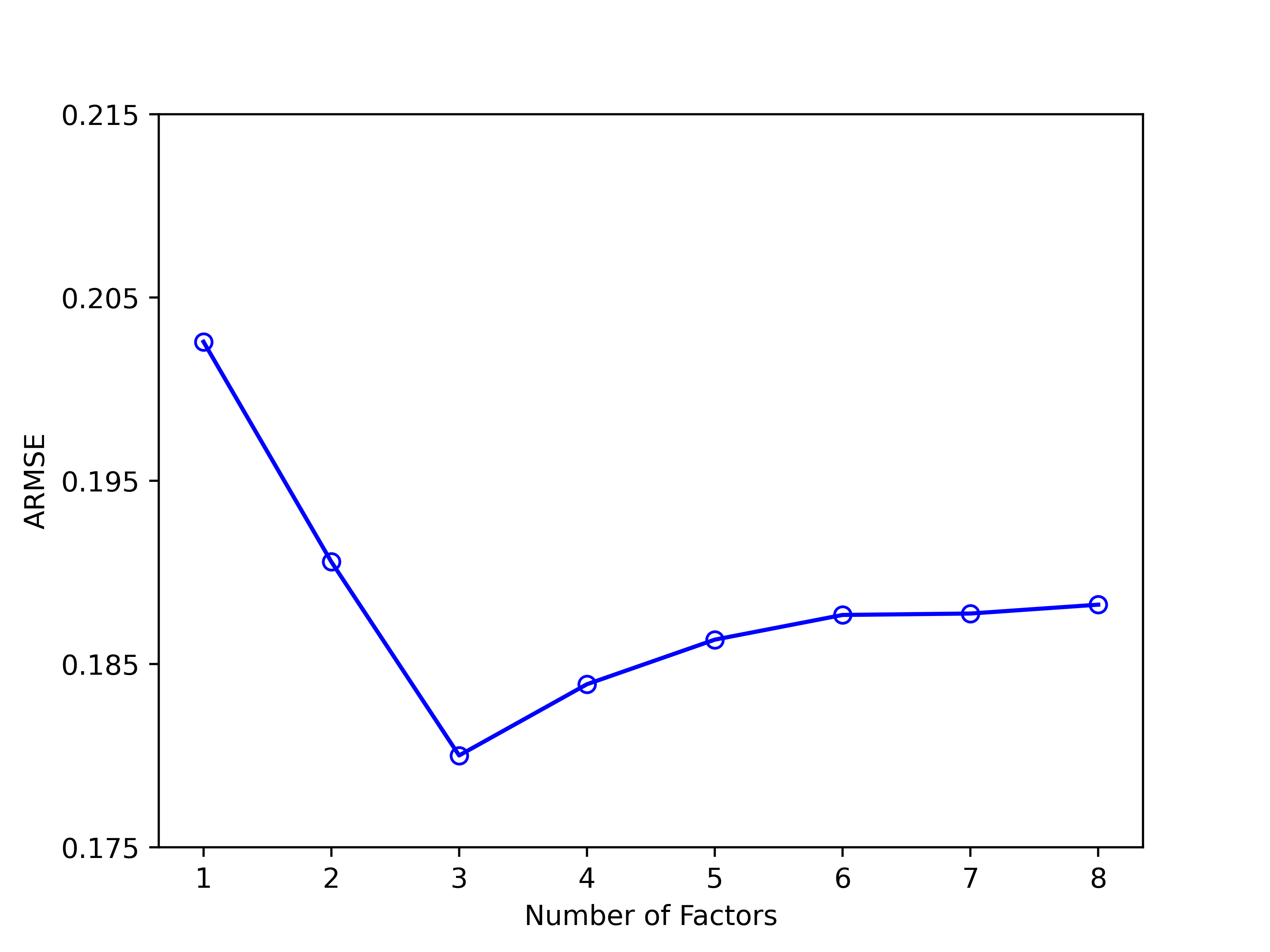}
        \includegraphics[width=0.4\linewidth]{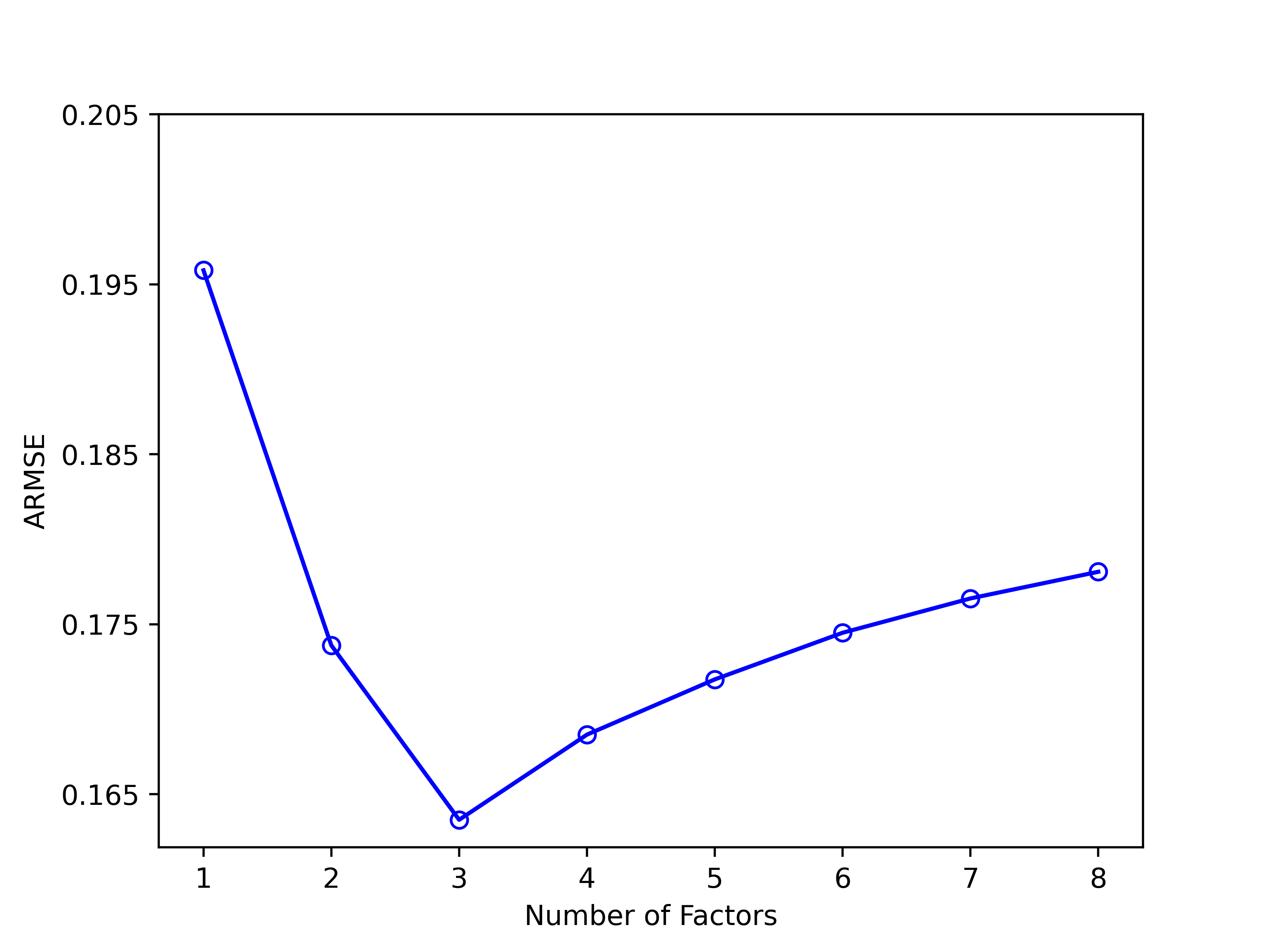}
    \end{figure}

\section{Conclusion and discussion}\label{Conclusion}
In this paper, we propose a novel estimator for determining the number of factors in high-dimensional factor models with random missing data and establish its consistency. Both homogeneous and heterogeneous missingness patterns are considered in isotropic and anisotropic settings. Simulations demonstrate that the proposed method outperforms existing methods, and real data examples confirm its practical utility. 

Several directions merit further investigation. First, covariance-based methods are not scale-invariant, which may lead to inconsistency in some cases. Replacing the covariance matrix with the correlation matrix could solve this scaling issue \citep{fan2022estimating}. Second, while our analysis focuses on MCAR, the method may be extended to more general missingness mechanisms. Third, this work considers static factor models, and extending the framework to dynamic factor models with missing data is a promising direction \citep{lam2012factor,li2017identifying}.

\bibliographystyle{chicago} 
\bibliography{ref} 

@book{yao2015sample,
  title={Sample covariance matrices and high-dimensional data analysis},
  author={Yao, Jianfeng and Zheng, Shurong and Bai, ZD},
  publisher={Cambridge University Press},
  address={New York},
  year={2015}
}

@incollection{bai2008clt,
  title={CLT for linear spectral statistics of large-dimensional sample covariance matrices},
  author={Bai, Zhidong D and Silverstein, Jack W},
  booktitle={Advances In Statistics},
  pages={281--333},
  year={2008},
  publisher={World Scientific}
}

@article{zeng2022order,
  title={Order Determination for Spiked Type Models},
  author={Zeng, Yicheng and Zhu, Lixing},
  journal={Statistica Sinica},
  volume={32},
  number={3},
  pages={1633--1659},
  year={2022},
  publisher={JSTOR}
}

@article{zeng2023order,
  title={Order determination for spiked-type models with a divergent number of spikes},
  author={Zeng, Yicheng and Zhu, Lixing},
  journal={Computational Statistics \& Data Analysis},
  volume={182},
  pages={107704},
  year={2023},
  publisher={Elsevier}
}

@article{baik2005phase,
  title={Phase transition of the largest eigenvalue for nonnull complex sample covariance matrices},
  author={Baik, Jinho and Ben Arous, G{\'e}rard and P{\'e}ch{\'e}, Sandrine},
  journal={The Annals of Probability},
  volume={33},
  number={5},
  pages={1643-1697},
  year={2005}
}

@article{bickel2008covariance,
  title={Covariance regularization by thresholding},
  author={Bickel, Peter J and Levina, Elizaveta},
  journal={The Annals of Statistics},
  volume={36},
  number={6},
  pages={2577-2604},
  year={2008}
}

@article{guan2011bayesian,
  title={Bayesian variable selection regression for genome-wide association studies and other large-scale problems},
  author={Guan, Yongtao and Stephens, Matthew},
  journal={The Annals of Applied Statistics},
  volume={5},
  number={3},
  pages={1780-1815},
  year={2011}
}

@article{richardson1997bayesian,
  title={On Bayesian analysis of mixtures with an unknown number of components (with discussion)},
  author={Richardson, Sylvia and Green, Peter J},
  journal={Journal of the Royal Statistical Society Series B: Statistical Methodology},
  volume={59},
  number={4},
  pages={731--792},
  year={1997},
  publisher={Oxford University Press}
}

@article{ding2021separable,
  title={Spiked separable covariance matrices and principal components},
  author={Ding, Xiucai and Yang, Fan},
  journal={The Annals of Statistics},
  volume={49},
  number={2},
  pages={1113-1138},
  year={2021}
}

@inproceedings{yu2025theory,
  title={A Theory-Driven Approach to Inner Product Matrix Estimation for Incomplete Data: An Eigenvalue Perspective},
  author={Yu, Fangchen and Zeng, Yicheng and Mao, Jianfeng and Li, Wenye},
  booktitle={Proceedings of the ACM on Web Conference 2025},
  pages={4077--4088},
  year={2025}
}

@article{bai2015unbalanced,
        author={Bai, Jushan and Liao, Yuan and Yang, Jisheng},
        year={2015},
	title={Unbalanced panel data models with interactive effects},
	  journal={The Oxford Handbook of Panel Data},
        pages={149--170},      
        publisher={Oxford University Press},
        doi = {https://doi.org/10.1093/oxfordhb/9780199940042.013.0005},
}

@article{stock2002macroeconomic,
  title={Macroeconomic forecasting using diffusion indexes},
  author={Stock, James H and Watson, Mark W},
  journal={Journal of Business \& Economic Statistics},
  volume={20},
  number={2},
  pages={147--162},
  year={2002},
  publisher={Taylor \& Francis},
  doi={https://doi.org/10.1198/073500102317351921}
}

@article{bai2002determining,
  title={Determining the number of factors in approximate factor models},
  author={Bai, Jushan and Ng, Serena},
  journal={Econometrica},
  volume={70},
  number={1},
  pages={191--221},
  year={2002},
  publisher={Wiley Online Library},
 doi={https://doi.org/10.1111/1468-0262.00273}
}

@article{su2017time,
  title={On time-varying factor models: estimation and testing},
  author={Su, Liangjun and Wang, Xia},
  journal={Journal of Econometrics},
  volume={198},
  number={1},
  pages={84--101},
  year={2017},
  publisher={Elsevier},
  doi={https://doi.org/10.1016/j.jeconom.2016.12.004}
}

@article{onatski2010determining,
  title={Determining the number of factors from empirical distribution of eigenvalues},
  author={Onatski, Alexei},
  journal={The Review of Economics and Statistics},
  volume={92},
  number={4},
  pages={1004--1016},
  year={2010},
  publisher={The MIT Press},
  doi={https://doi.org/10.1162/REST_a_00043}
}

@article{lam2012factor,
  title={Factor modeling for high-dimensional time series: inference for the number of factors},
  author={Lam, Clifford and Yao, Qiwei},
  journal={The Annals of Statistics},
  pages={694--726},
  year={2012},
  volume={40},
  number={2},
  publisher={JSTOR},
  doi={https://doi.org/10.1214/12-AOS970}

}

@article{li2017identifying,
  title={Identifying the number of factors from singular values of a large sample auto-covariance matrix},
  author={Li, Zeng and Wang, Qinwen and Yao, Jianfeng},
  journal={The Annals of Statistics},
  pages={257--288},
  year={2017},
  volume={45},
  number={1},
  publisher={JSTOR},
  doi={https://doi.org/10.1214/16-AOS1452}
}

@article{ahn2013eigenvalue,
  title={Eigenvalue ratio test for the number of factors},
  author={Ahn, Seung C and Horenstein, Alex R},
  journal={Econometrica},
  volume={81},
  number={3},
  pages={1203--1227},
  year={2013},
  publisher={Wiley Online Library},
  doi={https://doi.org/10.3982/ECTA8968}
}

@article{fan2022estimating,
  title={Estimating number of factors by adjusted eigenvalues thresholding},
  author={Fan, Jianqing and Guo, Jianhua and Zheng, Shurong},
  journal={Journal of the American Statistical Association},
  volume={117},
  number={538},
  pages={852--861},
  year={2022},
  publisher={Taylor \& Francis},
  doi={https://doi.org/10.1080/01621459.2020.1825448}
}

@article{ludvigson2007empirical,
  title={The empirical risk--return relation: a factor analysis approach},
  author={Ludvigson, Sydney C and Ng, Serena},
  journal={Journal of Financial Economics},
  volume={83},
  number={1},
  pages={171--222},
  year={2007},
  publisher={Elsevier},
  doi={https://doi.org/10.1016/j.jfineco.2005.12.002}
}

@article{jin2021factor,
  title={On factor models with random missing: {EM} estimation, inference, and cross validation},
  author={Jin, Sainan and Miao, Ke and Su, Liangjun},
  journal={Journal of Econometrics},
  volume={222},
  number={1},
  pages={745--777},
  year={2021},
  publisher={Elsevier},
  doi={https://doi.org/10.1016/j.jeconom.2020.08.002}
}

@article{bai2021matrix,
  title={Matrix completion, counterfactuals, and factor analysis of missing data},
  author={Bai, Jushan and Ng, Serena},
  journal={Journal of the American Statistical Association},
  volume={116},
  number={536},
  pages={1746--1763},
  year={2021},
  publisher={Taylor \& Francis},
  doi={https://doi.org/10.1080/01621459.2021.1967163}
}

@article{xiong2023large,
  title={Large dimensional latent factor modeling with missing observations and applications to causal inference},
  author={Xiong, Ruoxuan and Pelger, Markus},
  journal={Journal of Econometrics},
  volume={233},
  number={1},
  pages={271--301},
  year={2023},
  publisher={Elsevier},
  doi={https://doi.org/10.1016/j.jeconom.2022.04.005}
}

@article{cahan2023factor,
  title={Factor-based imputation of missing values and covariances in panel data of large dimensions},
  author={Cahan, Ercument and Bai, Jushan and Ng, Serena},
  journal={Journal of Econometrics},
  volume={233},
  number={1},
  pages={113--131},
  year={2023},
  publisher={Elsevier},
  doi={https://doi.org/10.1016/j.jeconom.2022.01.006}
}

@incollection{stock2016dynamic,
  title={Dynamic factor models, factor-augmented vector autoregressions, and structural vector autoregressions in macroeconomics},
  author={Stock, James H and Watson, Mark W},
  series={Handbook of Macroeconomics},
  volume={2},
  pages={415-525},
  year={2016},
  publisher={Elsevier},
  doi={https://doi.org/10.1016/bs.hesmac.2016.04.002}
}

@article{ke2023estimation,
  title={Estimation of the number of spiked eigenvalues in a covariance matrix by bulk eigenvalue matching analysis},
  author={Ke, Zheng Tracy and Ma, Yucong and Lin, Xihong},
  journal={Journal of the American Statistical Association},
  volume={118},
  number={541},
  pages={374--392},
  year={2023},
  publisher={Taylor \& Francis},
  doi={https://doi.org/10.1080/01621459.2021.1933497}
}

@article{johnstone2001distribution,
  title={On the distribution of the largest eigenvalue in principal components analysis},
  author={Johnstone, Iain M},
  journal={The Annals of Statistics},
  volume={29},
  number={2},
  pages={295--327},
  year={2001},
  publisher={Institute of Mathematical Statistics},
  doi={https://doi.org/10.1214/aos/1009210544}
}

@article{doz2011two,
  title={A two-step estimator for large approximate dynamic factor models based on Kalman filtering},
  author={Doz, Catherine and Giannone, Domenico and Reichlin, Lucrezia},
  journal={Journal of Econometrics},
  volume={164},
  number={1},
  pages={188--205},
  year={2011},
  publisher={Elsevier},
  doi={https://doi.org/10.1016/j.jeconom.2011.02.012}
}

@article{giannone2008nowcasting,
  title={Nowcasting: the real-time informational content of macroeconomic data},
  author={Giannone, Domenico and Reichlin, Lucrezia and Small, David},
  journal={Journal of Monetary Economics},
  volume={55},
  number={4},
  pages={665--676},
  year={2008},
  publisher={Elsevier},
  doi={https://doi.org/10.1016/j.jmoneco.2008.05.010}
}

@article{baik2006eigenvalues,
  title={Eigenvalues of large sample covariance matrices of spiked population models},
  author={Baik, Jinho and Silverstein, Jack W},
  journal={Journal of Multivariate Analysis},
  volume={97},
  number={6},
  pages={1382--1408},
  year={2006},
  publisher={Elsevier},
  doi={https://doi.org/10.1016/j.jmva.2005.08.003}
}

@article{marchenko1967distribution,
  title={Distribution of eigenvalues for some sets of random matrices},
  author={Marchenko, Vladimir Alexandrovich and Pastur, Leonid Andreevich},
  journal={Mathematics of the USSR-Sbornik},
  volume={1},
  number={4},
  pages={457--483},
  year={1967},
  publisher={Russian Academy of Sciences, Steklov Mathematical Institute of Russian~…},
  doi={https://doi.org/10.1070/SM1967v001n04ABEH001994}
}

@article{cai2020limiting,
  title={Limiting laws for divergent spiked eigenvalues and largest nonspiked eigenvalue of sample covariance matrices},
  author={Cai, T Tony and Han, Xiao and Pan, Guangming},
  journal={The Annals of Statistics},
  volume={48},
  number={3},
  pages={1255--1280},
  year={2020},
  doi={https://doi.org/10.1214/18-AOS1798}
}

@article{bloemendal2016principal,
  title={On the principal components of sample covariance matrices},
  author={Bloemendal, Alex and Knowles, Antti and Yau, Horng Tzer and Yin, Jun},
  journal={Probability Theory and Related Fields},
  volume={164},
  number={1},
  pages={459--552},
  year={2016},
  publisher={Springer},
  doi={https://doi.org/10.1007/s00440-015-0616-x}
}

@article{bai2012sample,
  title={On sample eigenvalues in a generalized spiked population model},
  author={Bai, Zhidong and Yao, Jianfeng},
  journal={Journal of Multivariate Analysis},
  volume={106},
  pages={167--177},
  year={2012},
  publisher={Elsevier},
  doi={https://doi.org/10.1016/j.jmva.2011.10.009}
}

@article{fama1993common,
  title={Common risk factors in the returns on stocks and bonds},
  author={Fama, Eugene F and French, Kenneth R},
  journal={Journal of Financial Economics},
  volume={33},
  number={1},
  pages={3--56},
  year={1993},
  publisher={Elsevier},
  doi={https://doi.org/10.1016/0304-405X(93)90023-5}
}

@article{yu2024testing,
  title={Testing the number of common factors by bootstrapped sample covariance matrix in high-dimensional factor models},
  author={Yu, Long and Zhao, Peng and Zhou, Wang},
  journal={Journal of the American Statistical Association},
  pages={1--22},
  year={2024},
  publisher={Taylor \& Francis},
  doi={https://doi.org/10.1080/01621459.2024.2346364}
}

@article{stock1989new,
  title={New indexes of coincident and leading economic indicators},
  author={Stock, James H and Watson, Mark W},
  journal={NBER Macroeconomics Annual},
  volume={4},
  pages={351--409},
  year={1989},
  publisher={MIT press},
  doi={https://doi.org/10.1086/654119}
}

@article{hsiao2012panel,
  title={A panel data approach for program evaluation: measuring the benefits of political and economic integration of {Hong} {Kong} with mainland {China}},
  author={Hsiao, Cheng and Steve Ching, H and Ki Wan, Shui},
  journal={Journal of Applied Econometrics},
  volume={27},
  number={5},
  pages={705--740},
  year={2012},
  publisher={Wiley Online Library},
  doi={ https://doi.org/10.1002/jae.1230}
}

@article{banbura2014maximum,
  title={Maximum likelihood estimation of factor models on datasets with arbitrary pattern of missing data},
  author={Ba{\'n}bura, Marta and Modugno, Michele},
  journal={Journal of Applied Econometrics},
  volume={29},
  number={1},
  pages={133--160},
  year={2014},
  publisher={Wiley Online Library},
  doi={ https://doi.org/10.1002/jae.2306}
}

@book{little2019statistical,
  title={Statistical analysis with missing data},
  author={Little, Roderick JA and Rubin, Donald B},
  volume={793},
  year={2019},
  publisher={John Wiley \& Sons},
  doi={}
}

@article{seaman2013meant,
  title={What is meant by “missing at random”?},
  author={Seaman, Shaun and Galati, John and Jackson, Dan and Carlin, John},
  journal={Statistical Science},
  volume={28},
  number={2},
  pages={257--268},
  year={2013},
  doi={https://doi.org/10.1214/13-STS415}
}

@article{johnstone2009consistency,
  title={On consistency and sparsity for principal components analysis in high dimensions},
  author={Johnstone, Iain M and Lu, Arthur Yu},
  journal={Journal of the American Statistical Association},
  volume={104},
  number={486},
  pages={682--693},
  year={2009},
  publisher={Taylor \& Francis},
  doi={https://doi.org/10.1198/jasa.2009.0121}
}

@article{dobriban2020optimal,
  title={Optimal prediction in the linearly transformed spiked model},
  author={Dobriban, Edgar and Leeb, William and Singer, Amit},
  journal={The Annals of Statistics},
  volume={48},
  number={1},
  pages={491--513},
  year={2020},
  publisher={JSTOR},
  doi={https://doi.org/10.1214/19-aos1819}
}

@article{jurczak2017spectral,
  title={Spectral analysis of high-dimensional sample covariance matrices with missing observations},
  author={Jurczak, Kamil and Rohde, Angelika},
  journal={Bernoulli},
  volume={23},
  number={4A},
  pages={2466--2532},
  year={2017},
  doi={https://doi.org/10.3150/16-BEJ815}
}

@article{nadakuditi2014optshrink,
  title={Optshrink: an algorithm for improved low-rank signal matrix denoising by optimal, data-driven singular value shrinkage},
  author={Nadakuditi, Raj Rao},
  journal={IEEE Transactions on Information Theory},
  volume={60},
  number={5},
  pages={3002--3018},
  year={2014},
  publisher={IEEE},
  doi={https://doi.org/10.1109/TIT.2014.2311661}
}

@article{cattell1966scree,
  title={The scree test for the number of factors},
  author={Cattell, Raymond B},
  journal={Multivariate Behavioral Research},
  volume={1},
  number={2},
  pages={245--276},
  year={1966},
  publisher={Taylor \& Francis},
  doi={https://doi.org/10.1207/s15327906mbr0102_10}
}

@book{hiai2006semicircle,
  title={The Semicircle Law, Free Random Variables and Entropy},
  author={Hiai, Fumio and Petz, Denes},
  year={2006},
  publisher={American Mathematical Society}
}

@article{wang2014limiting,
  title={Limiting spectral distribution of renormalized separable sample covariance matrices when p/n→ 0},
  author={Wang, Lili and Paul, Debashis},
  journal={Journal of Multivariate Analysis},
  volume={126},
  pages={25--52},
  year={2014},
  publisher={Elsevier}
}
\appendix 
%Notations for Appendices
\renewcommand{\theequation}{A.\arabic{equation}}
\renewcommand{\thetable}{A.\arabic{table}}
\renewcommand{\thefigure}{A.\arabic{figure}}
\renewcommand{\thesection}{A.\arabic{section}}
\renewcommand{\thelemma}{A.\arabic{lemma}}

\numberwithin{equation}{section}

\vspace{0.5in}

\noindent{\bf \LARGE Appendix}

\section{Proof of Main Results}

%\subsection{Proof of Theorem~\ref{consistency}, \ref{consistency2} and \ref{consistency3}}

\begin{proof}[Proof of Theorem~\ref{consistency}]

Under Assumptions~\ref{assumption_HD}-\ref{assumption_Sigma_x} with $K=1$, the ESD of the noise part converges to a Marčenko–Pastur law with upper edge $v=p\sigma^2(1+\sqrt{\gamma})^2$. 
Moreover, the largest noise eigenvalue satisfies 
\$
\hat{\lambda}_{r_1+1}^o-p\sigma^2(1+\sqrt{\gamma})^2=O_p(n^{-2/3}).
\$
For any $\epsilon_n =o(1)$ with $\epsilon_n n^{2/3}\rightarrow \infty$, it follows that
\$
\lim\limits_{n\to +\infty}\mathbb{P} \left\{\hat{\lambda}_{r_1+1}^o <p\sigma^2(1+\sqrt{\gamma})^2+\epsilon_n\right\} = 1
\$

On the other hand, for each identifiable factor $i\le r_1$, the corresponding sample eigenvalue separates from the bulk and converges to a limit strictly larger than $v=p\sigma^2(1+\sqrt{\gamma})^2$, so 
\$
\lim\limits_{n\to +\infty}\mathbb{P} \left\{\hat{\lambda}_{r_1}^o >p\sigma^2(1+\sqrt{\gamma})^2+\epsilon_n\right\} = 1. 
\$
Therefore, $\hat r(p(\sigma^2 (1+\sqrt \gamma)^2),\epsilon_n)$ consistently estimates the number $r_1$ of identifiable spikes, that is, 
\$
\lim_{n\to \infty}\mathbb{P} \left\{\hat{r}(p\sigma^2(1+\sqrt \gamma)^2,\epsilon_n)=r_1\right\} = 1.
\$
\end{proof}

\begin{proof}[Proofs of Theorems~\ref{consistency2} and \ref{consistency3}] 

The proofs of Theorems~\ref{consistency2} and \ref{consistency3} are similar to that of Theorem~\ref{consistency} and are  therefore omitted.

\end{proof}

\begin{proof}[Proof of Lemma~\ref{lemma_heterogeneous}]
(a) For $\Sigma$ generated from \eqref{TruncGamma_Sigma_spiked_model}, it can be verified that the ESD of $\Sigma P_d$ converges, and hence Assumption~\ref{assumption_P_d} is satisfied. Based on the decomposition of $X^o$ in \eqref{feature_specific_data}, Theorem~2.14 of \cite{yao2015sample} implies that the ESD of ${S_n^o}$ converges weakly. 
(b) Based on the decomposition of $X^o$ in \eqref{sample_specific_data} and Theorem~2.1 in \cite{wang2014limiting}, the ESD of ${S_n^o}$ converges weakly. 

\end{proof}

\begin{proof}[Proof of Lemma~\ref{LSD_moments}]

We first introduce some notations for convenience.
\begin{enumerate}
\item For feature-specific missingness, define $\Sigma_f=\Sigma P_d$ and $Z_f=P_d^{-1/2}Y^o$.
\item For sample-specific missingness, define $Z_s=Y^o Q_n^{-1/2}$;
\item Recall that for any $n\times n$ matrix $A$, 
\$
\beta_k (A)=\frac{1}{n}{\rm tr}(A^k),\ k\in \NN. 
\$ 
\item Recall that for sequences $\{a_n\}$ and $\{b_n\}$, we write $a_n\sim b_n$ if $a_n/b_n\rightarrow 1$, $a_n \overset{\rm a.s.}{\sim} b_n$ if $a_n/b_n \overset{\rm a.s.}{\rightarrow} 1$, and $a_n \overset{\rm p}{\sim} b_n$ if $a_n/b_n \overset{\rm p}{\rightarrow} 1$. 
\end{enumerate}

\noindent $(a)$. For the complete data matrix $X$, we write $X=\Sigma^{1/2} Y$. By Lemma~\ref{lemma_moment}, 
\$
\beta_1(S_n) 
&= \beta_1\left(\frac{1}{n} \Sigma^{1/2} Y Y^\top \Sigma^{1/2}\right)
\sim \beta_{1}(\Sigma) 
= \frac{1}{d}\sum_{1\le i\le d} \lambda_i (\Sigma) 
\sim \frac{1}{d}\sum_{1\le i\le d}\sigma_i^2 \sim \EE (\sigma_1^2) = \tilde\sigma^2,
\$
where the fourth step follows from the finite-rank structure in \eqref{Gamma_Sigma_spiked_model} and the boundedness of the spikes. Similarly, we have 
\$
\beta_2(S_n) 
& \sim \beta_{2}(\Sigma)+\gamma \beta_{1}^2(\Sigma) \sim \EE (\sigma_1^4) + \gamma \{\EE (\sigma_1^2)\}^2 
= \tilde\sigma^4 (1+\tilde \theta^{-1}) +\gamma \tilde\sigma^4.   
\$

\noindent $(b)$. For the feature-specific heterogeneous missingness case, according to \eqref{feature_specific_data}, the observed data matrix can be written as 
\$
X^o=\Sigma^{1/2} Y^o = \Sigma^{1/2} P_d^{1/2}P_d^{-1/2}  Y^o 
= \Sigma_f^{1/2} Z_f,
\$
where the entries of $Z_f$ are i.i.d. with zero mean and unit variance. 

We first compute $\beta_1(\Sigma_f)$ and $\beta_2(\Sigma_f)$: 
\begin{gather}
\beta_{1}(\Sigma_f)
= \frac{1}{d}\sum_{1\le i\le d} \lambda_i (\Sigma_f) 
= \frac{1}{d}\sum_{1\le k \le K}\sum_{i\in \mathcal I_k} p_k \sigma_i^2
\sim \EE \left( \frac{1}{d}\sum_{1\le k \le K}\sum_{i\in \mathcal I_k} p_k \sigma_i^2 \right)\nonumber \\
= \frac{1}{d}\sum_{1\le k \le K}\sum_{i\in \mathcal I_k} p_k \EE (\sigma_i^2)
= \frac{1}{d}\sum_{1\le k \le K}\sum_{i\in \mathcal I_k} p_k \tilde\sigma^2
= \tilde\sigma^2 \left( \frac{1}{d}\sum_{1\le k\le K} p_k d_k \right)\label{first_moment}
\end{gather}
and 
\begin{gather}
\beta_{2}(\Sigma_f)
= \frac{1}{d}\sum_{1\le i\le d} \lambda_i^2 (\Sigma_f) 
= \frac{1}{d}\sum_{1\le k \le K}\sum_{i\in \mathcal I_k} p_k^2 \sigma_i^4 
\sim \EE \left( \frac{1}{d}\sum_{1\le k \le K}\sum_{i\in \mathcal I_k} p_k^2 \sigma_i^4 \right)\nonumber \\
= \frac{1}{d}\sum_{1\le k \le K}\sum_{i\in \mathcal I_k} p_k^2 \EE (\sigma_i^4)
= \frac{1}{d}\sum_{1\le k \le K}\sum_{i\in \mathcal I_k} p_k^2 \tilde\sigma^4 (1+\tilde\theta^{-1}) 
= \tilde\sigma^4 (1+\tilde\theta^{-1}) \left( \frac{1}{d}\sum_{1\le k\le K} p_k^2 d_k\right).\label{second_moment}
\end{gather}

By Lemma~\ref{lemma_moment} and \eqref{first_moment}, 
\begin{gather*}
\beta_1(S_n^o) = \beta_1\left(\frac{1}{n} \Sigma_f^{1/2} Z_f Z_f^\top \Sigma_f^{1/2}\right)
\sim \beta_{1}(\Sigma_f)\sim \tilde \sigma^2 \left( \frac{1}{d}\sum_{1\le k\le K} p_k d_k  \right). 
\end{gather*}
Similarly, by Lemma~\ref{lemma_moment} together with \eqref{first_moment} and \eqref{second_moment}, 
\begin{gather*}
\beta_2 (S_n^o) 
\sim \beta_{2}(\Sigma_f)+ \gamma \beta_1^2(\Sigma_f) 
\sim \tilde \sigma^4 (1+\tilde\theta^{-1}) \left( \frac{1}{d}\sum_{1\le k\le K}p_k^2 d_k\right) + \gamma \tilde \sigma^4 \left(\frac{1}{d}\sum_{1\le k\le K}p_k d_k \right)^2.
\end{gather*}

\noindent $(c)$. For the sample-specific heterogeneous missingness case, according to \eqref{sample_specific_data}, the observed data matrix can be written as
\$
X^o=\Sigma^{1/2} Y^o = \Sigma^{1/2} Y^o Q_n^{-1/2}Q_n^{1/2}
= \Sigma^{1/2} Z_s Q_n^{1/2}, 
\$
where the entries of $Z_s=Y^o Q_n^{-1/2}$ are i.i.d. with zero mean and unit variance. 

For the first moment, we have
\begin{gather*}
\beta_1(S_n^o) = \beta_1\left( \frac{1}{n} \Sigma^{1/2} Z_s Q_n Z_s^\top \Sigma^{1/2}\right)
 = \beta_1 \left( \frac{1}{n} Z_s Q_n Z_s^\top\Sigma \right)
 \sim \beta_1 \left( \frac{1}{n} Z_s Q_n Z_s^\top \right) \beta_1 (\Sigma) \\
 = \beta_1 \left( \frac{1}{d} Z_s^\top Z_s Q_n \right) \beta_1 (\Sigma) 
 \sim \beta_1\left( \frac{1}{d}Z_s^\top Z_s \right) \beta_1(Q_n) \beta_1(\Sigma) 
 \sim \tilde\sigma^2\left( \frac{1}{n}\sum_{1\le \ell \le L}q_\ell n_\ell\right),
\end{gather*}
where the third and fifth steps hold by using Lemma~\ref{lemma_asymp_free} and \ref{lemma_free_prob_fourth_moment}. 

For the second moment, it holds that
\$
\beta_2(S_n^o)  
& =  \beta_1\left\{ \left(\frac{1}{n} \Sigma^{1/2} Z_s Q_n Z_s^\top \Sigma^{1/2} \right) \left(\frac{1}{n} \Sigma^{1/2} Z_s Q_n Z_s^\top \Sigma^{1/2} \right) \right\}\\
& = \beta_1\left\{ \left(\frac{1}{n} Z_s Q_n Z_s^\top \right)  \Sigma \left(\frac{1}{n} Z_s Q_n Z_s^\top \right)  \Sigma \right\}\\
& \sim \beta_1^2 \left( \frac{1}{n} Z_s Q_n Z_s^\top \right) \beta_2 \left(\Sigma\right)
+\beta_2  \left(\frac{1}{n} Z_s Q_n Z_s^\top \right) \beta_1^2(\Sigma) - \beta_1^2 \left(\frac{1}{n} Z_s Q_n Z_s^\top \right)\beta_1^2 (\Sigma)\\
& \sim \tilde\sigma^4 (1+\tilde\theta^{-1}) \beta_1^2 \left( \frac{1}{n} Z_s Q_n Z_s^\top \right) + \tilde\sigma^4 \beta_2 \left(\frac{1}{n} Z_s Q_n Z_s^\top \right) - \tilde\sigma^4 \beta_1^2 \left( \frac{1}{n} Z_s Q_n Z_s^\top \right),
\$
where we used Lemma~\ref{lemma_free_prob_fourth_moment} in the third step, and where
\$
\beta_1 \left( \frac{1}{n} Z_s Q_n Z_s^\top \right) = \beta_1 \left( \frac{1}{n}Z_s^\top Z_s Q_n \right)
\sim \beta_1 \left( \frac{1}{n}Z_s^\top Z_s \right) \beta_1 \left( Q_n \right)
\sim \frac{1}{n}\sum_{1\le \ell \le L}q_\ell n_\ell,
\$

And by Lemma~\ref{lemma_free_prob_fourth_moment}, we have
\$
& \quad \beta_2 \left(\frac{1}{n} Z_s Q_n Z_s^\top \right)\\
& =  \beta_1\left\{ \left( \frac{1}{n} Z_s Q_n Z_s^\top\right)\left( \frac{1}{n} Z_s Q_n Z_s^\top\right)\right\}\\
& =  \gamma^{-1} \beta_1 \left\{ Q_n  \left(\frac{1}{n} Z_s^\top Z_s  \right) Q_n \left( \frac{1}{n} Z_s^\top Z_s \right) \right\}\\
& \sim \gamma^{-1} \left\{\beta_1^2 \left(Q_n\right) \beta_2 \left(\frac{1}{n} Z_s^\top Z_s \right)  + \beta_2 \left(Q_n\right) \beta_1^2\left( \frac{1}{n} Z_s^\top Z_s\right) - \beta_1^2 \left(Q_n\right) \beta_1^2\left(\frac{1}{n} Z_s^\top Z_s\right) \right\}\\
& \sim \gamma^{-1} \left\{ \left( \frac{1}{n}\sum_{1\le \ell \le L}q_\ell n_\ell\right)^2 \gamma^2 (1+\gamma^{-1}) + \left( \frac{1}{n}\sum_{1\le \ell \le L}q_\ell^2 n_\ell\right) \gamma^2 - \left( \frac{1}{n}\sum_{1\le \ell \le L}q_\ell n_\ell\right)^2 \gamma^2 \right\}\\
& = \left( \frac{1}{n}\sum_{1\le \ell \le L}q_\ell n_\ell\right)^2 + \gamma \left( \frac{1}{n}\sum_{1\le \ell \le L}q_\ell^2 n_\ell\right),
\$
where the third step holds by using Lemma~\ref{lemma_asymp_free} and \ref{lemma_free_prob_fourth_moment}. 

All these together lead to 
\$
\beta_2(S_n^o) 
& \sim \tilde\sigma^4 (1+\tilde\theta^{-1}) \left( \frac{1}{n}\sum_{1\le \ell \le L}q_\ell n_\ell \right)^2 + \tilde\sigma^4 \left\{ \left( \frac{1}{n}\sum_{1\le \ell \le L}q_\ell n_\ell\right)^2 + \gamma \left( \frac{1}{n}\sum_{1\le \ell \le L}q_\ell^2 n_\ell\right)\right\} \\
& \quad - \tilde\sigma^4\left( \frac{1}{n}\sum_{1\le \ell \le L}q_\ell n_\ell \right)^2 \\
& = \tilde\sigma^4(1+\tilde\theta^{-1})\left(\frac{1}{n}\sum_{1\le \ell\le L}q_\ell n_\ell\right)^2 + \gamma\tilde\sigma^4\left(\frac{1}{n}\sum_{1\le \ell \le L}q_\ell^2 n_\ell\right).
\$
\end{proof}

\section{Technical lemmas}

\begin{lemma}\label{lemma_inner_hadamard}
For any diagonal matrix $A={\rm diag}(a_1,\cdots,a_d)\in\mathbb{R}^{d\times d}$, it holds that $(AB)\circ C = A (B\circ C)$.
\end{lemma}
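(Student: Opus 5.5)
The identity is an entrywise check, and the only idea is that left multiplication by a diagonal matrix rescales rows. I will assume the dimensions are compatible: $B\in\mathbb{R}^{d\times m}$ and $C\in\mathbb{R}^{d\times m}$ for some $m$. In the paper's application $m=n$, $A=\Sigma^{1/2}$ is taken diagonal, and $C$ is the mask matrix.

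First, I will write out the $(i,j)$-th entry of $AB$. Because $A$ is diagonal, $(AB)_{ij}=\sum_{k}a_{ik}b_{kj}=a_i b_{ij}$, since $a_{ik}=0$ for $k\neq i$. So $AB$ is simply $B$ with its $i$-th row multiplied by $a_i$.

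Next, I will apply the Hadamard product with $C$ to this. That gives $\bigl((AB)\circ C\bigr)_{ij}=a_i b_{ij} c_{ij}$.

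Then I will compute the right-hand side the same way. We have $(B\circ C)_{ij}=b_{ij}c_{ij}$, and left multiplication by the diagonal $A$ again just scales row $i$ by $a_i$. Hence $\bigl(A(B\circ C)\bigr)_{ij}=a_i b_{ij}c_{ij}$.

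The two entries coincide for every pair $(i,j)$, which proves the identity. Associativity and commutativity of scalar multiplication are all that is used.

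The argument has no real obstacle. The only point to flag is that diagonality of $A$ is essential: for a non-diagonal $A$, such as a general $\Sigma^{1/2}$, the identity fails. This is why the derivation of \eqref{X^o_decomposition2} is stated for diagonal $\Sigma$.
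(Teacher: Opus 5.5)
Your proof is correct and is essentially the paper's own argument: both check the identity entrywise, using $(AB)_{ij}=a_i b_{ij}$, so that each side equals $a_i b_{ij} c_{ij}$. Your remark that $A$ must be diagonal (and that $B$ and $C$ must have matching dimensions) is consistent with how the paper invokes the lemma for diagonal $\Sigma$.
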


\begin{proof}[Proof of Lemma~\ref{lemma_inner_hadamard}]
For the diagonal matrix $A$, we have 
\$    
((AB)\circ C)_{ij} = (AB)_{ij} C_{ij}
= a_i B_{ij}C_{ij}
= a_i (B\circ C)_{ij}
= ( A(B\circ C))_{ij}. 
\$
This completes the proof. 
\end{proof}

\begin{lemma}\label{lemma_moment}
Assume $Z\in \mathbb R^{d\times n}$ has i.i.d. entries with zero mean and unit variance, and that $\Sigma\in \mathbb R^{d\times d}$ is a nonnegative definite matrix whose ESD converges to a non-random probability measure. Then for the sample covariance matrix $\hat \Sigma=\Sigma^{1/2}ZZ^\top \Sigma^{1/2}/n$, we have 
\$
\beta_1(\hat \Sigma)\sim \beta_1(\Sigma),\quad \beta_2(\hat \Sigma)\sim \beta_2(\Sigma)+\gamma \beta_1^2(\Sigma). 
\$
Moreover, when $\Sigma=I_d$, it follows that 
\$
\beta_1(\hat \Sigma)\sim 1,\quad \beta_2(\hat \Sigma)\sim 1+\gamma.
\$
\end{lemma}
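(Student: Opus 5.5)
The statement to prove is Lemma~\ref{lemma_moment}, the first two spectral moments of $\hat\Sigma=\Sigma^{1/2}ZZ^\top\Sigma^{1/2}/n$. I would use the moment method: compute expectations of normalized traces by expanding over indices, then show the fluctuations vanish. Throughout, $\beta_k$ of a $d\times d$ matrix means $d^{-1}{\rm tr}$, as in the paper's usage. I also assume $\|\Sigma\|$ is bounded, which holds in every application in the paper since the spectra there are bounded. Finally, I assume $\beta_1(\Sigma)$ is bounded away from zero, so that the relation $\sim$, a ratio tending to one, is meaningful.

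\emph{First moment.} Write $\beta_1(\hat\Sigma)=(dn)^{-1}{\rm tr}(\Sigma ZZ^\top)=(dn)^{-1}\sum_{i,j,k}\Sigma_{ij}z_{jk}z_{ik}$. Taking expectations, only the terms with $i=j$ survive, so $\EE\beta_1(\hat\Sigma)=\beta_1(\Sigma)$ exactly. For the variance, expand ${\rm Var}$ as a sum over pairs of index quadruples. Independence and zero mean force the indices to coincide in pairs. The leading contributions are of order $(dn)^{-2}\cdot n\,{\rm tr}(\Sigma^2)(1+\EE z^4)$, which is $O(1/(dn))$, assuming $\EE z^4<\infty$; this holds in the paper's Gaussian or Bernoulli-times-Gaussian setting. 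Chebyshev then gives $\beta_1(\hat\Sigma)-\beta_1(\Sigma)\to0$ in probability, which yields the ratio statement.

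\emph{Second moment.} Write $\beta_2(\hat\Sigma)=(dn^2)^{-1}{\rm tr}(\Sigma ZZ^\top\Sigma ZZ^\top)=(dn^2)^{-1}\sum \Sigma_{i_1i_2}z_{i_2k_1}z_{i_3k_1}\Sigma_{i_3i_4}z_{i_4k_2}z_{i_1k_2}$. Taking expectations, I would classify the pairings of the four $z$'s, keeping only the leading order.

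\begin{itemize}
\item Pairing $(i_2,k_1)=(i_3,k_1)$ and $(i_4,k_2)=(i_1,k_2)$, i.e.\ $i_2=i_3$, $i_4=i_1$, with $k_1,k_2$ free. This gives $n^2\,{\rm tr}(\Sigma^2)$, which contributes $\beta_2(\Sigma)$.
\item Pairing $k_1=k_2$ with $i_2=i_1$ and $i_3=i_4$. This gives $n\,({\rm tr}\,\Sigma)^2$, which contributes $(d/n)\beta_1^2(\Sigma)\to\gamma\beta_1^2(\Sigma)$.
\item The third pairing $k_1=k_2$, $i_2=i_4$, $i_3=i_1$ gives $n\,{\rm tr}(\Sigma^2)=O(n\cdot d)$, which is $O(1/n)$ after normalization. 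All fourth-moment (coincident) terms are of the same lower order.
\end{itemize}

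So $\EE\beta_2(\hat\Sigma)=\beta_2(\Sigma)+\gamma_n\beta_1^2(\Sigma)+o(1)$. The variance requires a similar but longer expansion: $8$ $z$'s, with a count of free indices showing ${\rm Var}=O(n^{-2})$. This bookkeeping is the main obstacle. Rather than grind through it, I would invoke the known concentration of linear spectral statistics, e.g.\ \citet{bai2008clt} or the Gaussian Poincaré inequality, applied to the smooth function ${\rm tr}(\hat\Sigma^2)$ restricted to the high-probability event $\|\hat\Sigma\|\le C$. Either route gives $\beta_2(\hat\Sigma)-\EE\beta_2(\hat\Sigma)\to0$ in probability.

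The case $\Sigma=I_d$ follows by substituting $\beta_1(I)=\beta_2(I)=1$, which gives $\beta_1(\hat\Sigma)\sim1$ and $\beta_2(\hat\Sigma)\sim1+\gamma$.
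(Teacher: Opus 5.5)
Your argument is correct but takes a different route from the paper. The paper proves the lemma in one line by citing Lemma~2.16 of \cite{yao2015sample}, i.e.\ the moment relations $\int x\,dF^{c,H}=\int t\,dH$ and $\int x^2\,dF^{c,H}=\int t^2\,dH+c(\int t\,dH)^2$ of the limiting generalized Marchenko--Pastur law. You instead compute $\EE\beta_k(\hat\Sigma)$ at finite $n$ by index expansion and then control the fluctuations.

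The citation is short and gives all higher moments at once, but it only describes the limit measure. Transferring it to $\beta_k(\hat\Sigma)$, and to the finite-$n$ quantities $\beta_k(\Sigma)$ that the lemma (and its later use in Lemma~\ref{LSD_moments}) actually involves, silently requires two things. One is moment convergence, e.g.\ bounded $\|\Sigma\|$ plus an almost-sure bound on $\lambda_{\max}(ZZ^\top/n)$. The other is $\beta_k(\Sigma)\to\int t^k\,dH$. Neither follows from weak convergence of the ESD of $\Sigma$.

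Your route makes these hypotheses explicit and is quantitative: $\beta_1(\hat\Sigma)$ is exactly unbiased, and $\beta_2(\hat\Sigma)$ has an $O(1/n)$ bias relative to $\beta_2(\Sigma)+\gamma_n\beta_1^2(\Sigma)$. It also uses only independence, zero mean, unit variance and bounded higher moments. So it covers $Z_f=P_d^{-1/2}Y^o$ as well, whose entries are independent but not identically distributed across row blocks.

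Its one cost is the variance of $\beta_2(\hat\Sigma)$, and the tools you name for outsourcing it do not fit the paper's setting. The Gaussian Poincar\'e inequality needs Gaussian $Z$, and the real-case Bai--Silverstein CLT needs $\EE z^4=3$, whereas the entries $b_{ij}y_{ij}/\sqrt{p}$ have fourth moment $3/p$. Since all moments are finite there, the direct expansion or an Efron--Stein bound over columns is the safer choice. For the in-probability relation $\sim$ you only need ${\rm Var}\{\beta_2(\hat\Sigma)\}=o(1)$.
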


\begin{proof}[Proof of Lemma~\ref{lemma_moment}] 
This lemma follows directly from Lemma~2.16 in \cite{yao2015sample}. 
\end{proof}

\begin{definition}[Asymptotic freeness]
A sequence of random variables $\{a_1,\cdots,a_k\}$ is said to be {\it asymptotically free}  if 
\$ \beta_1 \left\{\prod_{1\le j\le m} \mathcal P_j\left(a_{i_j}-\beta_1 \left(\mathcal P_j(a_{i_j})\right)\right)\right\}\rightarrow 0
\$ 
for any positive integer $m$, any polynomials $\mathcal P_1,\cdots,\mathcal P_m$, and any indices $i_1,\cdots,i_m\in [k]$ such that no two adjacent indices are equal. 
\end{definition}

\begin{lemma}[Theorem 4.3.11 of \citep{hiai2006semicircle}]\label{lemma_asymp_free}
Let $ ({A_{N,1},\cdots,A_{N,n}})$ be a family of $N\times N$ complex bi-unitarily invariant random matrices, whose joint distribution is invariant under both left and right unitary multiplication, and let $({B_{N,1},\cdots,B_{N,m}})$ be a family of non-random diagonal matrices. Suppose that the empirical spectral distributions of $ (A_{N,i}A_{N,i}^*)$ and $(B_{N,j}B_{N,j}^*)$  converge almost surely to compactly supported limiting distributions as $N$ goes to infinity,  where $^*$ denotes the conjugate transpose.  Then the family
\$
\left\{ \{ A_{N,i}\}_{i\in \{1,\cdots,n\}}, \{A_{N,i}^*\}_{i\in \{1,\cdots,n\}}, \{B_{N,j}\}_{j\in \{1,\cdots,m\}}, \{B_{N,j}^*\}_{j\in \{1,\cdots,m\}} \right\}
\$
is asymptotically free almost surely as $N\to\infty$.
\end{lemma}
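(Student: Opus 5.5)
The plan is to reduce the statement to Voiculescu's theorem on the asymptotic freeness of independent Haar unitaries from deterministic matrices. Bi-unitary invariance lets us factor each $A_{N,i}$ into Haar-distributed unitary pieces and a diagonal "radial" part.

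\textbf{Step 1: factorization.} For each $i$, take the singular value decomposition $A_{N,i}=U_{N,i}D_{N,i}W_{N,i}^*$, with $D_{N,i}$ diagonal and nonnegative. Left and right invariance of the law of $A_{N,i}$ under unitary multiplication means we can replace $A_{N,i}$ by $\tilde U_{N,i}A_{N,i}\tilde W_{N,i}^*$, where $\tilde U_{N,i},\tilde W_{N,i}$ are Haar unitaries independent of everything else, without changing any joint law. Hence we may assume that $U_{N,i}$ and $W_{N,i}$ are independent Haar unitaries, independent of $D_{N,i}$ and of the other indices. Strictly this needs the joint invariance of the family; I would read the hypothesis that way, or assume the $A_{N,i}$ are independent.

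\textbf{Step 2: conditioning.} Condition on the diagonal matrices $\{D_{N,i}\}$. By hypothesis the ESDs of $D_{N,i}^2$, and likewise of $B_{N,j}B_{N,j}^*$, converge almost surely to compactly supported limits. The conditioned family then consists of independent Haar unitaries $\{U_{N,i},W_{N,i}\}$ together with the "constant" matrices $\{D_{N,i},B_{N,j},B_{N,j}^*\}$, whose joint distribution converges. Voiculescu's theorem (Hiai--Petz, Thm.~4.3.7/4.3.9) then gives that $\{U_{N,i},U_{N,i}^*\}_i$, $\{W_{N,i},W_{N,i}^*\}_i$ and the algebra generated by the constants are asymptotically free, almost surely.

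\textbf{Step 3: from the pieces back to the $A_{N,i}$.} Each $A_{N,i}=U_{N,i}D_{N,i}W_{N,i}^*$ is a word in these free pieces, and $D_{N,i}$ lies in the constant algebra. A standard argument on alternating centered words, conjugating the constant part by a free Haar unitary, shows that the $*$-algebras generated by distinct $A_{N,i}$ are free from each other and from $\{B_{N,j}\}$. Concretely, one expands $\beta_1$ of an alternating product of centered polynomials and checks, via the Haar moment (Weingarten) calculus in the limit, that every term vanishes. This is the R-diagonal argument in Hiai--Petz.

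\textbf{Step 4: main obstacle.} The main obstacle is upgrading convergence in expectation to \emph{almost sure} convergence of the mixed moments, uniformly over the conditioning. For this I would use concentration of measure on $U(N)$: normalized traces of fixed words are Lipschitz functions of the Haar unitaries, with variance $O(N^{-2})$. Borel--Cantelli then gives almost sure convergence. Compact support of the limits is what guarantees boundedness of the truncated $D_{N,i}$ and lets the unbounded case be handled by truncation.
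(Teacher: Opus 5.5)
The paper gives no proof of this lemma; it is simply quoted from Hiai--Petz. Your outline is therefore a reconstruction of the textbook argument, and its architecture is the right one: the Haar factorization $A_{N,i}=U_{N,i}D_{N,i}W_{N,i}^*$, Voiculescu's theorem for Haar unitaries versus constant matrices, then concentration plus Borel--Cantelli. You are right that independence of the $A_{N,i}$ (or joint invariance under independent unitaries for each index) has to be added. You are also right to read the conclusion as freeness of the groups $\{A_{N,i},A_{N,i}^*\}$ and $\{B_{N,j},B_{N,j}^*\}_j$.

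The genuine gap is the truncation in Step~4. Step~2 already depends on the same point when it asserts that the joint moments of the ``constants'' converge. The hypotheses give only weak convergence of the ESDs of $A_{N,i}A_{N,i}^*$ and $B_{N,j}B_{N,j}^*$. That says nothing about $o(N)$ outlying singular values. Normalized traces of words are not stable under perturbations of rank $o(N)$ and unbounded norm. Truncating $D_{N,i}$ at level $R$ changes only a vanishing fraction of singular values, but the resulting error in $\beta_1$ of a word is of the size of $\frac1N\sum_{s_k>R}s_k^{p}$, which need not tend to zero.

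In fact the statement is false as written, so no argument can close this step. Let $N$ be even, let $A_N=U_N\,\mathrm{diag}(N,1,\dots,1)\,W_N^*$ with $U_N,W_N$ independent Haar unitaries (so $A_N$ is bi-unitarily invariant), and let $B_N=\mathrm{diag}(I_{N/2},0)$. The ESDs of $A_NA_N^*$ and $B_NB_N^*$ converge to $\delta_1$ and $\frac12(\delta_0+\delta_1)$. Yet $A_NA_N^*=I_N+(N^2-1)uu^*$ with $u=U_Ne_1$ uniform on the unit sphere, so $\beta_1(A_NA_N^*)\to\infty$ and
\[
\beta_1\Bigl\{\bigl(A_NA_N^*-\beta_1(A_NA_N^*)\bigr)\bigl(B_N-\beta_1(B_N)\bigr)\Bigr\}=\frac{N^2-1}{N}\Bigl(\|B_Nu\|^2-\frac12\Bigr).
\]
This is of exact order $N^{1/2}$ in probability, because $\|B_Nu\|^2\sim\mathrm{Beta}(N/2,N/2)$ fluctuates on the scale $N^{-1/2}$. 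So it does not tend to zero under any coupling across $N$.

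What is missing is a hypothesis controlling the top of the spectra. One option is $\sup_N\|A_{N,i}\|<\infty$ almost surely together with $\sup_N\|B_{N,j}\|<\infty$. A weaker option is almost sure convergence of all moments of $A_{N,i}A_{N,i}^*$ and $B_{N,j}B_{N,j}^*$, which gives the uniform integrability that makes truncation plus H\"older's inequality work.

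With that added, your Steps~1--4 go through, up to one further repair. Only marginal ESDs of the diagonal matrices are assumed, so the joint law of $\{D_{N,i},B_{N,j}\}$ need not converge. This is harmless for the $D_{N,i}$, since they enter only conjugated by their own Haar unitaries and a subsequence argument suffices. For $(B_{N,1},\dots,B_{N,m})$, however, a joint limit distribution must be assumed if ``asymptotically free'' means convergence to a free family.
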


\begin{lemma}\label{lemma_free_prob_fourth_moment}
Suppose matrices $B\in \mathbb{R}^{d\times d}$ and $C\in \mathbb{R}^{d\times d}$ are asymptotically free. Then
\begin{gather*}
\beta_1 (BC)\sim \beta_1 (B) \beta_1(C),\\
\beta_1(B^2C)\sim \beta_2(B)\beta_1(C),\quad \beta_1(BC^2)\sim \beta_1(B)\beta_2(C),\\
\beta_1(BCBC) 
\sim \beta_1^2(B)\beta_2(C)+\beta_2(B)\beta_1^2(C)-\beta_1^2(B)\beta_1^2(C).
\end{gather*}
\end{lemma}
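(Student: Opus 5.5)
The plan is to reduce every mixed moment to products of centered factors, and then use the definition of asymptotic freeness with $m\le 4$ and simple polynomial choices. Write $b_k=\beta_k(B)$ and $c_k=\beta_k(C)$, and define the centered matrices $B^\circ=B-b_1 I_d$ and $C^\circ=C-c_1 I_d$, so that $\beta_1(B^\circ)=\beta_1(C^\circ)=0$. Throughout, I would assume (as is implicit in the setting where this lemma is applied, e.g.\ via Lemma~\ref{lemma_asymp_free} with compactly supported limits) that $\beta_k(B)$ and $\beta_k(C)$ stay bounded for $k\le 4$ and converge. Then every statement ``$\beta_1(\cdot)-(\text{product of moments})\to 0$'' upgrades to the relation $\sim$ of the lemma whenever the limiting right-hand side is nonzero.

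\emph{First two identities.} Expand $BC=(B^\circ+b_1I)(C^\circ+c_1I)$ and take $\beta_1$. The two terms linear in a centered matrix vanish identically, so
\begin{align*}
\beta_1(BC)=\beta_1(B^\circ C^\circ)+b_1c_1 .
\end{align*}
Asymptotic freeness with $m=2$, alternating indices and $\mathcal P_j(x)=x$ gives $\beta_1(B^\circ C^\circ)\to0$. For $\beta_1(B^2C)$ I would take the polynomial $\mathcal P(x)=x^2$ for the $B$-slot. Set $(B^2)^\circ=B^2-b_2I$; the same expansion gives $\beta_1(B^2C)=\beta_1((B^2)^\circ C^\circ)+b_2c_1$, and freeness kills the first term. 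The identity for $\beta_1(BC^2)$ is symmetric.

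\emph{Fourth-order identity.} Expand $BCBC$ with $B=B^\circ+b_1I$ and $C=C^\circ+c_1I$ into $16$ terms, and use cyclicity of the trace to group them by the number of centered factors.
\begin{itemize}
\item Terms with exactly one centered factor vanish exactly.
\item Terms with two centered factors are either of the form $\beta_1(B^\circ C^\circ)\to0$, or of the form $\beta_1((B^\circ)^2)=b_2-b_1^2$ (coefficient $c_1^2$) and $\beta_1((C^\circ)^2)=c_2-c_1^2$ (coefficient $b_1^2$).
\item Terms with three centered factors reduce cyclically to $\beta_1((B^\circ)^2C^\circ)$ or $\beta_1(B^\circ(C^\circ)^2)$. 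By the previous step these are asymptotic to $\beta_2(B^\circ)\beta_1(C^\circ)=0$, and similarly for the other.
\item The single four-factor term $\beta_1(B^\circ C^\circ B^\circ C^\circ)\to0$ directly by the definition with $m=4$.
\end{itemize}
Collecting the surviving terms gives
\begin{align*}
\beta_1(BCBC)\to b_1^2c_1^2+b_1^2(c_2-c_1^2)+c_1^2(b_2-b_1^2)=b_1^2c_2+b_2c_1^2-b_1^2c_1^2 .
\end{align*}
This is the claimed formula.

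\emph{Main obstacle.} The algebra is routine; the delicate point is bookkeeping. One must make sure the three-factor terms are genuinely handled, since they are not directly alternating after a cyclic rotation. Here I would merge adjacent same-type factors into one polynomial slot $\mathcal P(x)=(x-b_1)^2$ and re-center it, which is exactly what the $B^2C$ step provides. A second, lesser issue is passing from additive convergence to the ratio statement $\sim$. This needs bounded moments to control the error terms, and nonvanishing limits, which hold in all applications in the proof of Lemma~\ref{LSD_moments} since the matrices there are positive semidefinite with nondegenerate spectra.
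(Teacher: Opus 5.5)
Your proposal is correct and follows essentially the same route as the paper: apply the definition of asymptotic freeness to alternating centered products of length two and four, expand, and back-substitute the lower-order identities. The differences are only in bookkeeping. You expand $BCBC$ in the centered variables $B^\circ, C^\circ$, so the single-centered terms vanish exactly, whereas the paper expands $(B-\beta_1(B))(C-\beta_1(C))(B-\beta_1(B))(C-\beta_1(C))$ into raw moments; you also center $B^2$ directly for $\beta_1(B^2C)$ instead of going through $(B-\beta_1(B^2))^2$. Your caveat that bounded moments and nonvanishing limits are needed to upgrade additive convergence to $\sim$ is a point the paper leaves implicit.
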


\begin{proof}[Proof of Lemma~\ref{lemma_free_prob_fourth_moment}]

By the definition of asymptotic freeness, 
\$
\beta_1\left\{ (B-\beta_1 (B))(C-\beta_1 (C))\right\}\rightarrow 0,
\$
where
\$
(B-\beta_1 (B))(C-\beta_1 (C)) = BC -B \beta_1 (C) + \beta_1(B) C -\beta_1(B)\beta_1(C). 
\$
It follows that 
\$
\beta_1 (BC)\sim \beta_1 (B) \beta_1 (C).
\$

Similarly, 
\$
\beta_1\left\{ \left(B -\beta_1(B^2)\right)^2\left(C-\beta_1 (C)\right) \right\}\rightarrow 0,
\$
which implies
\$
\beta_1 (B^2 C)
\sim & 2\beta_1 (B^2) \beta_1 (BC) +\beta_1 (C)\beta_1 (B^2) - 2\beta_1 (B^2)\beta_1 (B) \beta_1 (C)\\
= & 2\beta_1 (B^2) \beta_1 (B) \beta_1(C)+\beta_1 (C)\beta_1 (B^2) - 2\beta_1 (B^2)\beta_1 (B) \beta_1 (C)\\
= & \beta_1 (C)\beta_1 (B^2)\\
= & \beta_1 (C)\beta_2 (B).
\$
By symmetry, we have 
\$
\beta_1(BC^2)= \beta_1 (C^2 B)= \beta_1(B) \beta_1(C^2)
= \beta_1(B) \beta_2(C).
\$

Moreover, by asymptotic freeness,
\$
\beta_1\left\{ \left(B-\beta_1 (B)\right) \left(C-\beta_1 (C)\right) \left(B-\beta_1(B)\right) \left(C-\beta_1(C)\right) \right\}\rightarrow 0.
\$
Expanding the product and collecting terms yields
\$
& \left(B-\beta_1 (B)\right) \left(C-\beta_1 (C)\right) \left(B-\beta_1(B)\right) \left(C-\beta_1(C)\right) \\
=& BCBC - BCB\beta_1(C) - BC\beta_1(B)C + BC \beta_1(B)\beta_1(C)\\
& - B\beta_1 (C) BC + B\beta_1(C)B \beta_1(C) + B \beta_1 (C) \beta_1 (B) C- B \beta_1 (C) \beta_1 (B) \beta_1 (C)\\
& - \beta_1 (B) CBC +\beta_1 (B)CB \beta_1 (C) +\beta_1 (B)C\beta_1 (B)C - \beta_1 (B)C \beta_1 (B)\beta_1 (C)\\
& + \beta_1 (B) \beta_1 (C) BC -\beta_1 (B)\beta_1 (C) B \beta_1 (C) - \beta_1 (B) \beta_1 (C) \beta_1 (B) C +\beta_1 (B)\beta_1 (C)\beta_1 (B)\beta_1 (C).
\$
Substituting the previous relations, we obtain
\$
\beta_1 (BCBC) 
\sim & 2 \beta_1 (B^2 C) \beta_1 (C) + 2\beta_1 (BC^2) \beta_1 (B)- \beta_1 (B^2)\beta_1^2 (C) -\beta_1^2 (B)\beta_1 (C^2) \\
&- 4 \beta_1 (BC)\beta_1 (B)\beta_1 (C)+ 3\beta_1^2 (B)\beta_1^2 (C)\\
\sim & 2 \beta_1 (B^2) \beta_1^2 (C) + 2\beta_1^2 (B) \beta_1 (C^2)- \beta_1 (B^2)\beta_1^2 (C) - \beta_1^2 (B) \beta_1 (C^2)\\
&- 4 \beta_1^2 (B)\beta_1^2 (C)+ 3\beta_1^2 (B)\beta_1^2 (C)\\
= & \beta_1 (B^2) \beta_1^2 (C) +\beta_1^2 (B) \beta_1 (C^2) -\beta_1^2 (B)\beta_1^2 (C)\\
= & \beta_2 (B) \beta_1^2 (C) +\beta_1^2 (B) \beta_2 (C) -\beta_1^2 (B)\beta_1^2 (C).
\$
This completes the proof.

\end{proof}

\end{document}